\def\maxwidth{\ifdim\Gin@nat@width>\linewidth\linewidth\else\Gin@nat@width\fi}
\def\maxheight{\ifdim\Gin@nat@height>\textheight\textheight\else\Gin@nat@height\fi}
\let\oldparagraph\paragraph
\renewcommand{\paragraph}[1]{\oldparagraph{#1}\mbox{}}
\let\oldsubparagraph\subparagraph
\renewcommand{\subparagraph}[1]{\oldsubparagraph{#1}\mbox{}}
\newcommand\independent{\protect\mathpalette{\protect\independenT}{\perp}}
\def\independenT#1#2{\mathrel{\rlap{$#1#2$}\mkern2mu{#1#2}}}
\def\fps@figure{htbp}
\newtheorem{assn}{Assumption}
\title{Generalizing Trimming Bounds for Endogenously Missing Outcome Data Using Random Forests\thanks{Cyrus Samii, Department of Politics, New York University (Email: \url{cds2083@nyu.edu}); Ye Wang, Department of Political Science, University of North Carolina at Chapel Hill (Email: \url{yewang@unc.edu}); Junlong Aaron Zhou, Tencent America (Email: \url{jlzhou@nyu.edu})}}
\author{Cyrus Samii \and Ye Wang \and Junlong Aaron Zhou}
\providecommand{\institute}[1]{}
\date{\today}
\begin{document}
\maketitle
\begin{abstract}
In many experimental or quasi-experimental studies, outcomes of interest are only observed for subjects who select (or are selected) to engage in the activity generating the outcome.
Outcome data is thus endogenously missing for units who do not engage, in which case random or conditionally random treatment assignment prior to such choices is insufficient to point identify treatment effects.
Non-parametric partial identification bounds are a way to address endogenous missingness without having to make disputable parametric assumptions. 
Basic bounding approaches often yield bounds that are very wide and therefore minimally informative. 
We present methods for narrowing non-parametric bounds on treatment effects by adjusting for potentially large numbers of covariates, working with generalized random forests. 
Our approach allows for agnosticism about the data-generating process and honest inference.
We use a simulation study and two replication exercises to demonstrate the benefits of our approach.

\par \textbf{Keywords:} \emph{Causal inference, Trimming bounds,
Partial identification, Machine learning, Random forest, Experiments}
\end{abstract}

\newpage

\section{Introduction}\label{introduction}
Experiments and quasi-experiments often face endogenously missing outcome data, in which case random or conditionally random treatment assignment is insufficient to point identify causal effects.
Consider the experiment in \citet{santoro2022promise}, in which subjects were asked to have an online conversation about their ``perfect day'' with someone from a different political party (an outpartisan). 
In one group, subjects were informed that their conversation partner was an outpartisan, while in the other group, no such information about their partner was given. 
Examining only the complete data, the authors found that knowingly talking to an outpartisan was associated with more warmth toward outparty voters in the post-treatment survey. 
Nevertheless, among the 986 subjects who entered the treatment assignment stage, only 478 completed the conversation and 469 went on to complete the post-treatment survey.
The completion rate was higher among those informed about their partner's outpartisan status. 
For people with a strong interest in politics, a conversation with a ``random person'' about their perfect day may seem uninteresting while a conversation with an outpartisan might be intriguing. 
At the same time, some people may have a sense of obligation such that they do not drop out if they find the conversation uninteresting.
Missingness based on such differences in subjects' types would leave us with an imbalanced comparison across treatment groups.

Political scientists encounter endogenously missing outcome data all the time. In numerous applications, a subject's outcome may only be revealed if that subject makes certain choices. Furthermore, the treatment typically creates asymmetric choice conditions for the subjects, leading to what \citet{slough2022phantom} describes as the ``phantom counterfactual'' problem. In the aforementioned example, knowing the partner's identity may generate two distinct effects: it can reduce the likelihood of completing a conversation while, among those who do complete a conversation, enhancing the subject's affability towards outpartisans.  These are known as the ``extensive margin effect'' and the ``intensive margin effect'' respectively in the literature \citep{staub2014causal, kim2019effects, gulzar2020does, paulsen2023foundations}. If the conversation was not completed, the outcome for the subject would remain undefined (a ``phantom''). Among subjects who completed the conversation, those informed of their partner's partisanship may have a higher degree of political interest as compared to the group completing the conversation in the control group. 
Consequently, the variation in the average outcome across the two groups is indicative of not only the treatment's impact but also differences attributable to heterogeneity in levels of political interest. 
As elaborated by \citet{montgomery2018conditioning}, limiting the analysis to observations with non-missing outcomes implies conditioning on a post-treatment variable (the response indicator), introducing collider biases. If the degree of political interest is unobservable to researchers, common tools for imputation or adjustment are not justified \citep{honaker2011amelia, li2013weighting, blackwell2017unified, liu2021latent}.

One solution is to forgo point identification and rather construct {\it bounds} on treatment effects \citep{molinari2020microeconometrics}. 
The bounds enclose a set of effect values that are consistent with patterns of missingness (``identified set''). 
We focus on ``trimming bounds'' methods based on the seminal paper by \citet{lee2009training}, which leverages a monotonicity assumption to bound the average causal effect for those who would always have their outcomes observed whether under treatment or control (``always-responders''). 
In the example of \citet{santoro2022promise}, it would characterize effects among those who are always willing to engage in an online conversation. 
From a policy perspective, this subgroup defines those for which the intervention raises the fewest red flags in terms of forcing people into an exercise that they find aversive. 
It is also the subgroup for which the intensive margin effect is most clearly defined \citep{staub2014causal, slough2022phantom}.

The logic of trimming bounds is very simple. 
Suppose those who respond in the control group would always also respond if they had been assigned to the treatment group (monotonic selection). 
Then the control group consists only of ``always-responders.''
The treatment group is a mixture of always-responders and those who respond only if treated (``compliers''). 
The difference in response rates under treatment versus control measures the share of compliers. We can bound the effect for always-responders by taking the means of trimmed versions of the treated outcome distribution: a lower bound comes from trimming off the share of compliers from the top of the distribution, and an upper bound comes from trimming this share from the bottom.

Unfortunately, these basic trimming bounds can be very wide. We develop an approach to tighten them by extracting information from pre-treatment covariates with the ``generalized random forest'' (known as \textit{grf}), a machine learning algorithm developed in \citet{wager2018estimation} and \citet{athey2019generalized}. The algorithm approximates the local moment condition for any quantity with a collection of \textit{regression tree}s \citep{montgomery2018tree}. 
We use the random forest for precise estimation of ``nuisance parameters,'' such as the conditional trimming probability.
We then construct {\it covariate-tightened} trimming bounds that are assured to be no wider (in expectation), and can often be substantially narrower, than the basic ones. 
To avoid the regularization bias from the machine learning step \citep{chernozhukov2018double}, we tailor the algorithm by incorporating both Neyman orthogonalization and cross-fitting \citep{chernozhukov2017double, ratkovic2021relaxing}. 
We show that the bounds estimators are consistent and approximately Normal in large samples, and we provide approaches for constructing valid confidence intervals.

In both simulations and applications, we show that our proposed method works significantly better than the basic trimming bounds and other similar proposals \citep{olma2020nonparametric, semenova2020better}. 
It controls the covariates flexibly even when they are high-dimensional, without imposing strong restrictions on model specification. 
The method can be generalized unit-level missingness, monotonicity holding only conditionally, binary outcome variables, or the probability of being treated needing to be estimated. 
We demonstrate performance in both the experimental study of \citet{santoro2022promise} and an observational study by \citet{blattman2010consequences}.

With endogenously missing outcome data, trimming bounds are an alternative for those uncomfortable with relying on either parametric assumptions or untestable exclusion restrictions, as needed in selection modeling methods based on \citet{heckman1979sample}.
Our work is in line with other work in political methodology seeking to relax disputable modeling assumptions by using machine learning \citep{blackwell2022reducing, ratkovic2021relaxing} and partial identification \citep{knox2020administrative, duarte2021automated}. It enables empirical researchers to examine the potential range of the causal effects when the analysis has to be conditioned on variables that might be affected by the treatment \citep{montgomery2018conditioning, blackwell2023priming}.

The rest of the paper is organized as follows: Section \ref{set-up} formally describes the problem we try to address and introduces the estimands. Sections \ref{tb} and \ref{ctb} illustrate the basic idea of the trimming bounds method and how they can be tightened by using information from covariates. Section \ref{estimation} presents our estimation strategy, including the algorithm and its statistical properties. Section \ref{extension} discusses several extensions. Sections \ref{simulation} and \ref{application} offer evidence from simulation and applications, respectively. Section \ref{conclusion} concludes.

\section{Setup}\label{set-up}
We consider a randomized or natural experiment with subjects indexed by \(i=1,...,N\), for which we also have measured \(P\) pre-treatment covariates collected in the vector \(\mathbf{X}_i = (X_{i1}, X_{i2}, \dots, X_{iP})\). For each subject, we always observe values of the
covariates\footnote{We discuss unit-level missingness in Section \ref{missing-covariates}.}, the treatment status \(D_i \in\{0,1\}\), and the response indicator \(S_i \in\{0,1\}\). The realized outcome \(Y_i\) is observed only when \(S_i = 1\). For $D_i = 1$ and $D_i=0$, respectively, potential outcomes are given by $(Y_i(1), Y_i(0))$ and potential response indicators are given by $(S_i(1), S_i(0))$.  The realized outcome is $Y_i = D_iY_i(1) + (1-D_i)Y_i(0)$ and the realized response is $S_i = D_iS_i(1) + (1-D_i)S_i(0)$. Define $U_i$ to be unobserved factors that affect both $(S_i(1),S_i(0))$ and $(Y_i(1),Y_i(0))$.

We make the following assumption on treatment assignment:
\begin{assn}
Strong ignorability:  
\begin{center}
    $(Y_i(1), Y_i(0), S_i(1), S_i(0)) \independent D_i | \mathbf{X}_i$, \\
    $\varepsilon < P(D_i = 1 | \mathbf{X}_i) < 1 - \varepsilon, \text{ with } \varepsilon > 0.$
\end{center}
 \label{ass:ind}
\end{assn}

The first part states that the treatment is conditionally independent to both the potential outcomes and the potential responses. The second part requires that the propensity score, $p(\mathbf{X}_i) = P(D_i = 1 | \mathbf{X}_i)$, is strictly bounded between 0 and 1.\footnote{We assume that $p(\mathbf{X}_i)$ is known to the researcher so far and discuss estimated propensity scores in Section \ref{est-pscore}.} 
The assumption holds in experiments when treatment is randomly or conditionally randomly assigned.
In observational studies, the assumption implies that the covariate vector $\mathbf{X}_i$ includes all confounders.

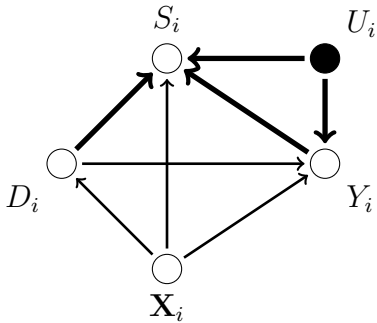
\begin{figure}[!t]
\caption{Endogenous Missingness\label{fg.dag}}
\begin{center}
\begin{tikzpicture}
[scale=0.7,dot/.style={fill,draw,circle,minimum width=1pt},
arrow style/.style={->,line width=1pt, shorten <=2pt,shorten >=2pt },
arrow2 style/.style={->,line width=2pt, shorten <=2pt,shorten >=2pt },]
\node [fill=white, dot, label=above: $S_{i}$] (s) at (3,3) {}; 
\node [fill=white, dot, label=below right: $Y_{i}$] (y) at (6,1) {} edge [arrow2 style] (s);
\node [fill=white, dot, label=below left: $D_{i}$] (d) at (1,1) {} edge [arrow style] (y) edge [arrow2 style] (s);
\node [fill=black, dot, label=above right: $U_{i}$] (u) at (6,3) {} edge [arrow2 style] (y) edge [arrow2 style] (s);
\node [fill=white, dot, label=below: $\mathbf{X}_{i}$] (x) at (3,-1) {} edge [arrow style] (y) edge [arrow style] (s) edge [arrow style] (d);
\end{tikzpicture}
\end{center}
\par
\footnotesize\textbf{Note:} The above figure shows a directed acyclic graph (DAG) representing an endogenous missingness mechanism. White nodes represent observable variables and black nodes represent unobservable ones. An arrow from one node to the other (a ``path'') marks the causal relationship from the former to the latter.
The heavier arrows indicate the relationships that generate the bias from conditioning on $S_i$.
\end{figure}

The directed acyclic graph (DAG) in Figure~\ref{fg.dag} shows relationships that our setting admits. For the \citet{santoro2022promise} study, $U_i$ could represent the degree of political interest. 
Our approach also admits the possibility that $Y_i$ directly affects missingness.
If we only use data from subjects whose outcome is observed ($S_i = 1$), 
this amounts to conditioning on $S_i$, which is a ``collider'' between the treatment $D_i$ and then either the unobserved $U_i$ or the outcome $Y_i$ \citep{elwert2014endogenous, montgomery2018conditioning}.
For example, suppose that for some units, $D_i$ and $U_i$ negatively affect response. 
For these units, high values of $D_i$ (i.e. being treated) would tend to require low values of $U_i$ for $S_i = 1$ to hold. 
The lower $U_i$ values would affect the distribution of $Y_i$ values.
Correlations induced by conditioning on $S_i = 1$ confound our ability to estimate the causal effect of $D_i$ on $Y_i$.   
Such induced correlations result in bias.

To further understand the source of the bias, we classify subjects into four different types based on their responses to the treatment: always-responders ($S_i(0) = S_i(1) = 1$), never-responders ($S_i(0) = S_i(1) = 0$), compliers ($S_i(0) = 0, S_i(1) = 1$), and defiers ($S_i(0) = 1, S_i(1) = 0$).  
These subgroups are examples of what \citet{frangakis_rubin2002} call ``principal strata,'' similar to those in the setting with instrumental variables \citep{angrist_etal96_late}. 
From Table~\ref{tab_ps} Panel 1, we see that among units with observed outcomes ($S_i=1$), the control group may consist of both defiers and always-responders, while the treatment group is a mixture of compliers and always-responders. Therefore, the group-mean difference in the observed outcome does not capture an average of treatment effects for a fixed sub-population.

\begin{table}[!t]
\centering
\caption{Principal strata} \label{tab_ps}
\begin{center}
    Panel 1. Full set of principal strata
\end{center}
\begin{tabular}[]{@{}lll@{}}
\hline \\[-1.8ex]  
Variable & \(S_i = 0\) & \(S_i = 1\)\tabularnewline
\hline \\[-1.8ex]  
\(D_i = 0\) & Never-responders, Compliers  & Defiers, Always-responders \tabularnewline
\(D_i = 1\) & Never-responders, Defiers  & Compliers, Always-responders \tabularnewline
\hline \\[-1.8ex]  
\end{tabular}
\begin{center}
    Panel 2. Under monotonic selection
\end{center}
\begin{tabular}[]{@{}lll@{}}
\hline \\[-1.8ex]  
Variable & \(S_i = 0\) & \(S_i = 1\)\tabularnewline
\hline \\[-1.8ex]  
\(D_i = 0\) & Never-responders, Compliers  & Always-responders \tabularnewline
\(D_i = 1\) & Never-responders & Compliers, Always-responders \tabularnewline
\hline \\[-1.8ex]  
\end{tabular}
\end{table}

From the decomposition in Table~\ref{tab_ps} Panel 1, we cannot identify the share of units falling into each principal stratum from data. Following \citet{lee2009training}, we work with the following assumption on the selection process:
\begin{assn}
Monotonic selection: 
\begin{center}
    $S_i(1) \ge S_i(0)$
\end{center}
for any $i$. 
\label{ass:mono}
\end{assn}
The assumption requires that under treatment, a unit is at least as likely to respond as under control, thus it excludes the existence of defiers. 
This is shown in Table~\ref{tab_ps} Panel 2.
The assumption is similar to the ``no defiers'' assumption for instrumental variables.
It should be motivated by arguments about the choice behavior of agents determining response \citep{slough2022phantom}.
We could alternatively assume the opposite: $S_i(1) \le S_i(0)$ for any $i$. Then, we have defiers but not compliers.
Either form of monotonicity allows one to form trimming bounds.
In the \citet{santoro2022promise} example, subjects who would engage when uninformed of their partners' partisanship would be assumed to engage when informed.
In actuality, our re-analysis below suggests that monotonicity may not hold unconditionally in this example.
For the moment, we will maintain the monotonicity assumption, and then in Section \ref{cond-mono} we will relax it to allow for monotonicity that is conditional on covariates \citep{semenova2020better}.
That section also discusses how to test Assumption \ref{ass:mono}. 
Parametric and semi-parametric selection models implicitly impose a monotonicity assumption as part of the first-stage regression specification; the manner in which we state monotonicity here is a non-parametric generalization.\footnote{See \citet{vytlacil2002independence} for a discussion in the context of instrumental variables.} 

In Table \ref{tab:exp}, we demonstrate how to fit empirical examples from different fields of political science (American politics, comparative politics, public administration, international relations, political economy, and methodology) into our setup, including the implication of Assumption \ref{ass:mono} and the sub-population captured by always-responders in each context.
We acknowledge that in some cases monotonicity may only make sense conditional on certain background characteristics; again, we hold off on discussing methods to allow this until  Section \ref{cond-mono}.

\begin{table}
\centering
  \scriptsize
  \caption{Empirical Examples} 
 \label{tab:exp}  
   \begin{threeparttable}
\begin{tabular}{@{\extracolsep{5pt}} llll} 
\\[-1.8ex]\hline 
\hline \\[-1.8ex]  
\\[-1.8ex] 
Application & \citet{santoro2022promise} & \citet{blattman2010consequences} & \citet{knox2020administrative} \\
 & & &  \\
$Y_i$ & Warmth toward & Education/distress level & Police use of force \\
 & outpartisan voters &  & in encounter  \\
 & &  &  \\
 & & &  \\
$D_i = 1$ & Knowing partisanship of the & Being kidnapped & Citizen is minority  \\
 & partner in conversation & in conflict &  \\
 &  & & \\
 & & & \\
$S_i = 1$ & Conversation happened and & Survived and found & Citizen is under arrest \\
 &  outcome data obtained & by researchers & \\
 &  & & \\
 & & & \\
Mono. selection & Knowing partners' partisanship  & Avoiding kidnapping does  & No encounters in which \\
& does not cause subjects  &  not cause death or & minority citizens less likely \\
& to drop out & disappearance &  to be arrested \\
& & & \\
Always-responders & Subjects who will complete & Subjects who will survive & Encounters in which citizen \\
& conversation regardless of & and be found even if & will be arrested regardless \\
& knowledge about partner &  kidnapped &  of race of citizen \\

&  & & \\
\hline \\[-1.8ex]  
\\[-1.8ex] 
Application & \citet{spilker2018trade} & \citet{hall2019wealth} & \citet{cheema2023canvassing} \\  
 & & &  \\
$Y_i$ & Number of products & Having descendants who & Proportion of women  \\ 
 & traded by a firm & fought for the South & who voted in a \\ 
 & & in the Civil War & household  \\ 
 & & &  \\
$D_i = 1$ & Firm is under a & Won the Georgia & Household was canvassed \\ 
 & trade agreement & land lottery and & before election \\
 &  & own more slaves &  \\
 & & & \\
$S_i = 1$ & Firm is an exporter & Having descendants & Household can be tracked \\ 
 & &  &  in the post-election survey \\ 
 & & & \\
Mono. selection & Trade agreements do not & Winning lottery does not & Being canvassed makes \\ 
& cause firm not to export & cause anyone to have  & household more (less) likely  \\ 
& & fewer descendants & to be tracked  \\ 
& & &  \\
Always-responders & Firms that export even & Households that will have & Households that can always  \\ 
& without trade agreement & descendants regardless of  & be tracked \\ 
& &  lottery & \\ 
& & & \\
 \hline \\[-1.8ex] 
\end{tabular}%
    \begin{tablenotes}
      \small
      \item \textit{Note:} This table summarizes how examples from different fields of political science fit into the setup developed here.
    \end{tablenotes}
\end{threeparttable}
\end{table}

\section{Trimming bounds}\label{tb}
\citet{lee2009training} shows that we can use ``trimming'' to bound the average treatment effect for always-responders, \(\tau(1,1) \coloneqq E[Y_i(1)-Y_i(0)|S_i(0)=S_i(1)=1]\), under Assumptions \ref{ass:ind} and \ref{ass:mono}. 
The focus on always-responders has substantive motivation.
It is the subset of the population that would not withdraw (or be withdrawn) from the intervention under prevailing circumstances.
It is also the subset of the population for which intensive margin effects are well defined (as in Lee's original application).\footnote{Table \ref{tab:exp} lists what the always-responders refer to in applications.}

To lighten the notation in the discussion that follows, suppose that we have unconditionally random assignment. 
For the more general analysis under weak ignorability (Assumption~\ref{ass:ind}), one would simply switch the expectations to include iterated expectations that condition on and then marginalize over $\mathbf{X}_i$.  
Given this notational convenience, define the average treatment effect for always-responders as follows:
\begin{align*}
\tau(1,1) & := E[Y_{i}(1) | S_{i}(1) = S_{i}(0) = 1] - E[Y_{i}(0) | S_{i}(1) = S_{i}(0) = 1] \\
& = E[Y_{i} | D_i = 1, S_{i}(1) = S_{i}(0) = 1] - E[Y_{i} | D_i = 0, S_i = 1].
\end{align*}
Given random assignment and monotonicity, the mean of observed outcomes in the observed control identifies the always-responders mean under control.
Those who respond in the treatment group can belong to either always-responders or compliers, and so the observed mean does not identify the always-responders mean under treatment.

Recall that Assumption~\ref{ass:ind} implies that the shares of units in each principal stratum are balanced (in expectation) across treatment and control.
Under Assumption~\ref{ass:mono}, the rate at which outcomes are observed in the control group, $\text{Pr}[S_i = 1 \mid D_i = 0]$, identifies the share of units in the overall sample that are always-responders.
The rate at which outcomes are observed in the treatment group, $\text{Pr}[S_i = 1 \mid D_i = 1]$, identifies the share of units in the overall sample that are either compliers or always-responders.
If we let $q$ denote the share of treated units {\it with observed outcomes} that are always-responders, then we have
$$
q = \frac{\text{Pr}[S_i = 1 \mid D_i = 0]}{\text{Pr}[S_i = 1 \mid D_i = 1]}.
$$
The mean of observed outcomes in the treatment group can be written as a mixture of the always-responders mean (with weight $q$) and the compliers mean (with weight $1-q$):
\begin{align*}
E[Y_{i} | D_i = 1, S_i = 1] = & q E[Y_{i} | D_i = 1, S_{i}(1) = S_{i}(0) = 1] \\
& + (1-q)E[Y_{i} | D_i = 1, S_{i}(1) = 0, S_{i}(0) = 1].
\end{align*}

The quantity $E[Y_{i} | D_i = 1, S_{i}(1) = S_{i}(0) = 1]$ is not identified under random assignment and monotonicity. 
But we can construct sharp upper and lower bounds that surround it by trimming the lower and upper tails of the observed treated outcome distribution by the share of compliers, thus retaining the portion with mass $q$ \citep[Prop. 1a]{lee2009training}. As shown in Figure \ref{fig_leebounds}, in the worst case, always-responders take the bottom (left-most) $q$ share of the treated outcome's distribution. Hence, the average treated outcome of always-responders is bounded from below by $Y(1)$'s expectation over the shaded area in the left plot. Similarly, the average treated outcome for always-responders is bounded from above by $Y(1)$'s expectation over the shaded area in the right plot. Mathematically, we have 

\begin{figure}[!t]
\caption{Basic trimming bounds in \citet{lee2009training}}
\label{fig_leebounds}
 \begin{center}
\includegraphics[width=.49\linewidth, height=.5\linewidth]{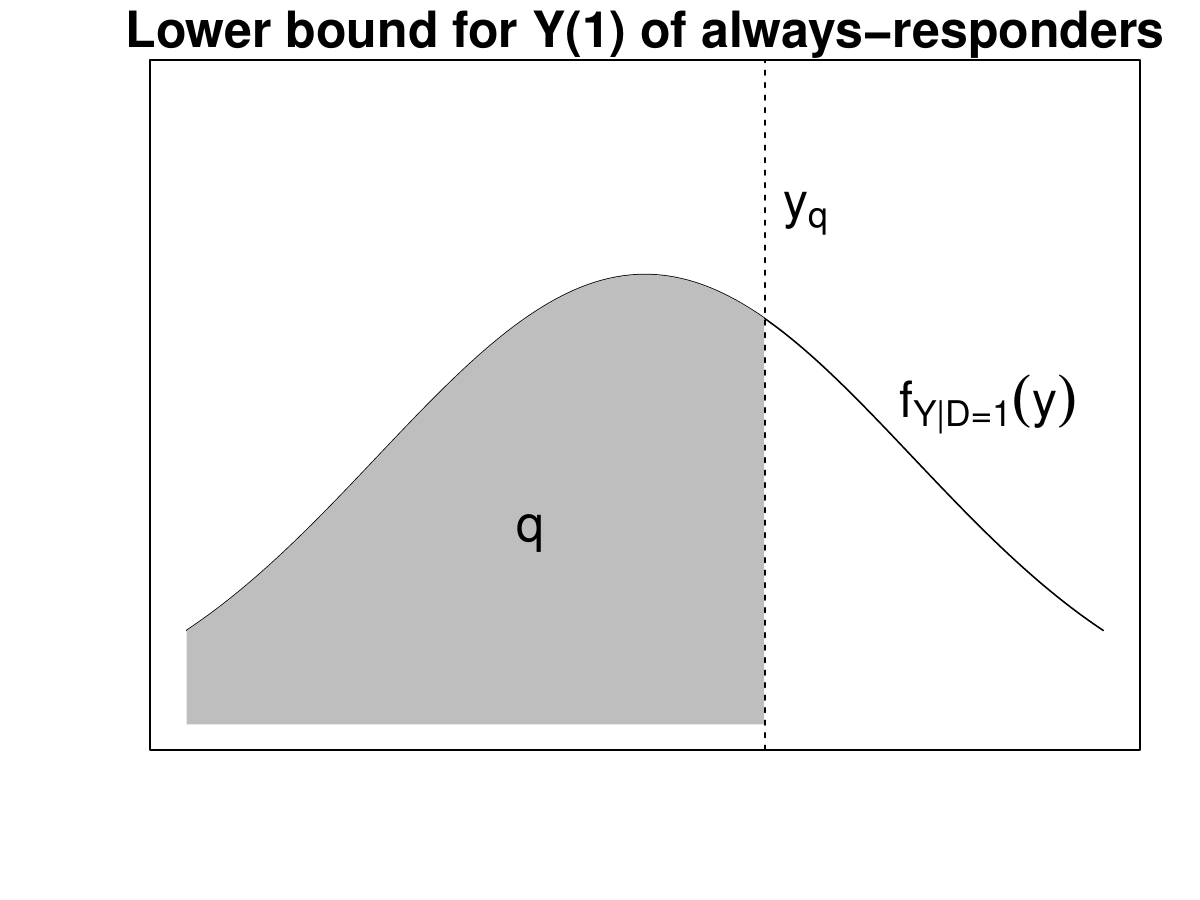}
\includegraphics[width=.49\linewidth, height=.5\linewidth]{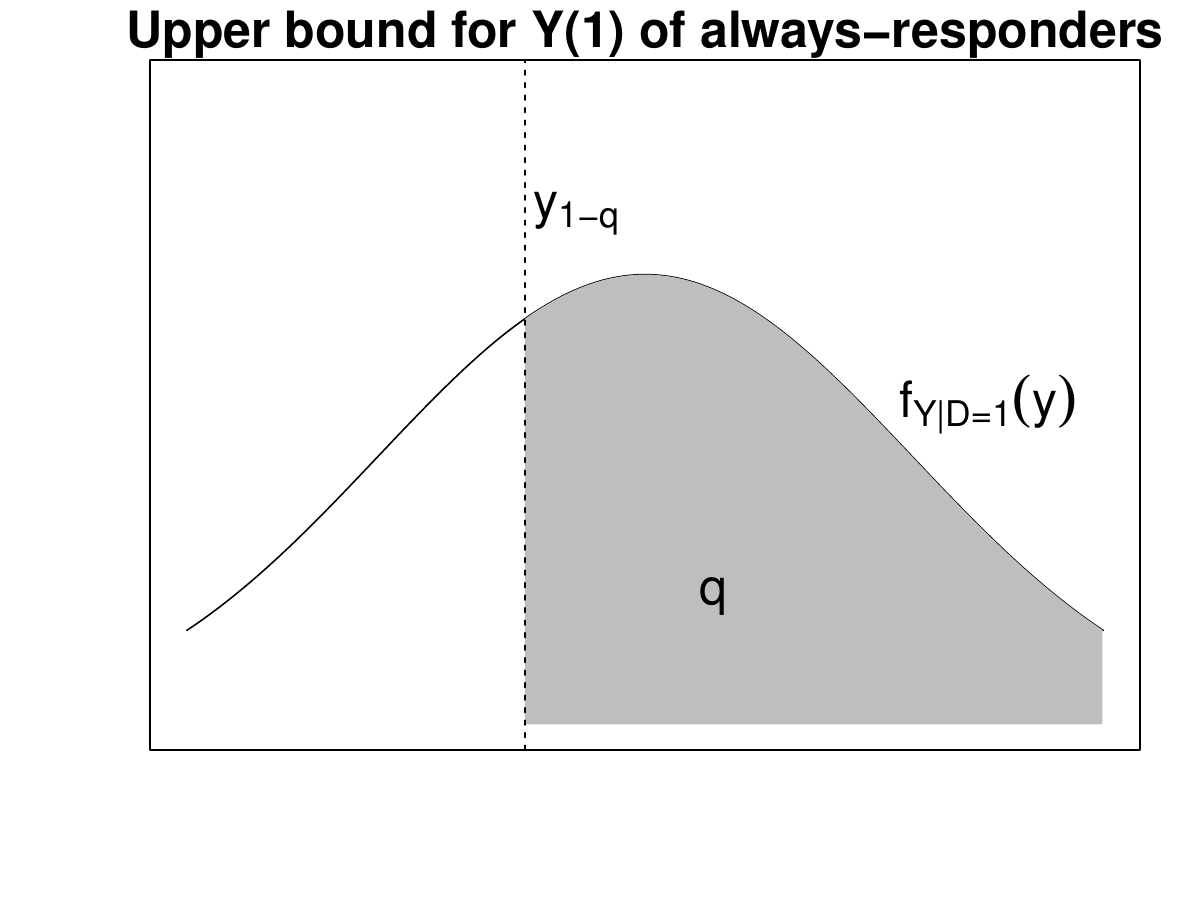} 
 \end{center}
\end{figure}

\[
\begin{aligned}
E[Y_{i} | D_i = 1, S_i = 1, Y_i \leq y_{q}] & \leq E[Y_{i} | D_i = 1, S_{i}(1) = S_{i}(0) = 1] \\
& \leq E[Y_{i} | D_i = 1, S_i = 1, Y_i \geq y_{1-q}],
\end{aligned}
\] where \(y_{q}\) is the $q$th-quantile of the observed outcome's conditional distribution in the treatment group that satisfies the condition
\(\int_{-\infty}^{y_{q}} dF_{Y|D=1, S=1}(y) = q\). Here $F_{Y|D=1, S=1}(\cdot)$ is the distribution function for observed treatment group outcomes.
The quantile \(y_{1-q}\) is similarly defined. We denote $E[Y_{i} | D_i = 1, S_i = 1, Y_i \leq y_{q}] - E[Y_{i} | D_i = 0, S_i = 1]$ as $\tau^L_{TB}(1, 1)$ and $E[Y_{i} | D_i = 1, S_i = 1, Y_i \geq y_{q}] - E[Y_{i} | D_i = 0, S_i = 1]$ as $\tau^U_{TB}(1, 1)$, where TB is short for trimming bounds. Clearly, 
$$
\tau^L_{TB}(1, 1) \leq \tau(1, 1) \leq \tau^U_{TB}(1, 1).
$$
The interval $[\tau^L_{TB}(1, 1), \tau^U_{TB}(1, 1)]$ is the identified set for the always-responder treatment effect.

\section{Covariate-tightened trimming bounds}\label{ctb}

Basic trimming bounds that do not incorporate covariate information can be very wide. \citet[Prop. 1b]{lee2009training} shows that under Assumptions \ref{ass:ind} and \ref{ass:mono}, we can use covariates to tighten the bounds for more precise inference. For \(\mathbf{X}_i = \mathbf{x}\), consider the always-responders' conditional average treatment effect,
\(\tau_{\mathbf{x}}(1,1) \coloneqq E[Y_{i}(1) - Y_{i}(0) | S_{i}(1) = S_{i}(0) = 1, \mathbf{X}_i = \mathbf{x}]\). We can derive the following bounds using the same logic as above:
\[
\begin{aligned}
E[Y_{i} | D_i = 1, S_i = 1, Y_i \leq y_{q(\mathbf{x})}(\mathbf{x}), \mathbf{X}_i = \mathbf{x}] & \leq E[Y_{i} | D_i = 1, S_{i}(1) = S_{i}(0) = 1, \mathbf{X}_i = \mathbf{x}] \\
& \leq E[Y_{i} | D_i = 1, S_i = 1, Y_i \geq y_{1-q(\mathbf{x})}(\mathbf{x}), \mathbf{X}_i = \mathbf{x}]
\end{aligned}
\]
where $q(\mathbf{x})$ is the conditional proportion of always-responders among units with observed outcomes in the treatment group; \(y_{q(\mathbf{x})}(\mathbf{x})\) and \(y_{1-q(\mathbf{x})}(\mathbf{x})\) are the conditional $q(\mathbf{x})$-quantile and $(1-q(\mathbf{x}))$-quantile of the conditional treated outcome distribution.

We denote the conditional control mean for always-responders, $E[Y_{i} | D_i = 0, S_i = 1, \mathbf{X}_i = \mathbf{x}]$, as $\theta_{0}(\mathbf{x})$, \(E[Y_{i} | D_i = 1, S_i = 1, \mathbf{X}_i = \mathbf{x}, Y_i \leq y_{q(\mathbf{x})}(\mathbf{x})]\) as
$\theta_{1}^{L}(\mathbf{x})$, and \(E[Y_{i} | D_i = 1, S_i = 1, \mathbf{X}_i = \mathbf{x}, Y_i \geq y_{1-q(\mathbf{x})}(\mathbf{x})]\) as
$\theta_{1}^{U}(\mathbf{x})$, respectively. Then, 
$$
\tau^{L}_{CTB, \mathbf{x}}(1,1) \coloneqq \theta_{1}^{L}(\mathbf{x}) - \theta_{0}(\mathbf{x}) \leq \tau_{\mathbf{x}}(1,1) \leq \tau^{U}_{CTB, \mathbf{x}}(1,1) \coloneqq \theta_{1}^{U}(\mathbf{x}) - \theta_{0}(\mathbf{x}),
$$
where CTB is short for covariate-tightened trimming bounds. The conditional bounds allow us to see how the bounds vary across different values of the covariates. In practice, we can either estimate the conditional bounds for all the observations or over a series of evaluation points that are representative of the whole sample. For instance, we can generate the evaluation points by going through the quantiles of one covariate while fixing the values of the other covariates at their sample mode or average.

We can further integrate ($\tau_{CTB, \mathbf{x}}^L(1,1)$, $\tau_{CTB, \mathbf{x}}^U(1,1)$) over the distribution of $\mathbf{X}_i$ among always-responders to recover the unconditional covariate-tightened bounds on $\tau(1, 1)$. Define
\begin{align*}
& \tau^{L}_{CTB}(1,1) \coloneqq \frac{\int \tau^{L}_{CTB, \mathbf{x}}(1,1) q_0(\mathbf{x}) dF(\mathbf{x})}{\int q_0(\mathbf{x}) dF(\mathbf{x})}, \\
& \tau^{U}_{CTB}(1,1) \coloneqq \frac{\int \tau^{U}_{CTB, \mathbf{x}}(1,1) q_0(\mathbf{x}) dF(\mathbf{x})}{\int q_0(\mathbf{x}) dF(\mathbf{x})},
\end{align*}
where 
\begin{align*}
q_0(\mathbf{x}) := \Pr[S_i(1) = S_i(0) = 1 \mid \mathbf{X}_i = \mathbf{x}] = \Pr[S_i = 1 \mid D_i = 0, \mathbf{X}_i = \mathbf{x}]
\end{align*}
is the expected conditional proportion of always-takers in the overall sample and $F(\mathbf{x})$ is the distribution of the covariates. The expression indicates that $q_0(\mathbf{x})$ can be identified from the control group.
In expectation, covariate-tightened trimming bounds will not be wider than the basic trimming bounds:
$$
\tau^L_{TB}(1, 1) \leq \tau^L_{CTB}(1, 1) \leq \tau(1, 1) \leq \tau^U_{CTB}(1, 1) \leq \tau^U_{TB}(1, 1).
$$
\citet[Prop. 1b]{lee2002trimming} proves this when $\mathbf{X}_i$ is independent of $D_i$, and our Appendix shows this under the more general Assumption~\ref{ass:ind}.
The improvement depends on how well the covariates predict the outcome or response rate \citep{semenova2020better}. These bounds are also sharp---we cannot further reduce the identified set's width without additional information.

\section{Estimation}\label{estimation}

Estimating the bounds involves multiple steps, all of which need to be accounted for in the inference.
As in \citet{lee2009training}, we approach this as a generalized method of moments problem, with suitable additions to account for our use of a generalized random forest ({\it grf}) to estimate nuisance parameters. (See the Appendix for details.)
The attraction of \textit{grf} is its accuracy in approximating local moment conditions even when the covariates are high-dimensional and conditional expectation or quantile functions are non-linear. 
As pointed out by \citet{belloni2017program} and \citet{chernozhukov2018double}, if the same sample is used to estimate both the nuisance and the target parameters, the estimation of the latter is likely to be biased with poor convergence properties. 
Following the suggestions of \citet{chernozhukov2018double}, our \textit{grf} algorithm incorporates Neyman orthogonalization and cross-fitting to avoid such asymptotic biases while making maximal use of the information in our sample data.

The steps to implement our estimation strategy are summarized in Algorithm \ref{alg:honest} below. We first randomly split the sample into $K$ folds for cross-fitting. Next, we employ the ``probability forest'' and ``quantile forest''---both are variants of \textit{grf}---to estimate $(q_0(\mathbf{x}), q_1(\mathbf{x}))$ and $(y_{q(\mathbf{x})}(\mathbf{x}), y_{1-q(\mathbf{x})}(\mathbf{x}))$, respectively, using $(K-1)$ folds from the sample.\footnote{Note that $q(\mathbf{x})$ is estimated on the untreated subjects while the conditional quantiles are estimated on the treated ones. Hence, there is no need to make a separate split to estimate $q(\mathbf{x})$.} Then, we rely on the remaining fold to estimate either the conditional or the aggregated bounds, with the previously estimated $(q_0(\mathbf{x}), q_1(\mathbf{x}))$ and $(y_{q(\mathbf{x})}(\mathbf{x}), y_{1-q(\mathbf{x})}(\mathbf{x}))$ terms plugged in. The aggregated bounds can be directly estimated by the sample analogues of their orthogonalized moment conditions. For the conditional ones, we first estimate $\theta(\mathbf{x}) = (\theta_1^L(\mathbf{x}), \theta_1^U(\mathbf{x}), \theta_0(\mathbf{x}))$ by applying the ``regression forest,'' another variant of \textit{grf}, to their orthogonalized moment conditions, and then calculate the bounds according to their definition. 
Finally, we take the average over results from the $K$ folds as the estimate to avoid any loss in efficiency. 

\begin{figure}
\begin{algorithm}[H]\label{alg:honest}
Define the set of covariates $\mathbf{X} = (X_1, \dots, X_P)$ and evaluation points $\mathcal{X}$ (all sample values for aggregate bounds or specified evaluation points for conditional bounds).

Randomly split the sample into $K$ sets $(\mathcal{I}_1, \mathcal{I}_2, \dots, \mathcal{I}_K)$ with approximately equal size.

\For {each $k \in \{1,2,\dots,K\}$}{
Fit the $q(\mathbf{X})$ model using the probability forests on $\mathcal{I}_{-k} = \cup_{j \neq k} \mathcal{I}_{j}$.

For any $\mathbf{x} \in \mathcal{X}$, treat $q(\mathbf{x})$ as fixed and fit the models ($y_{q(\mathbf{x})}(\mathbf{X})$, $y_{1-q(\mathbf{x})}(\mathbf{X})$) using the quantile forests on $\mathcal{I}_{-k}$.

Treat $q(\mathbf{x})$, $y_{q(\mathbf{x})}(\mathbf{x})$ and $y_{1-q(\mathbf{x})}(\mathbf{x})$ as pre-fixed, construct orthogonalized moment conditions, and fit them on $\mathcal{I}_{k}$.

Calculate estimates for the conditional bounds ($\hat{\tau}_{CTB, \mathbf{x}}^{U}(1,1), \hat{\tau}_{CTB, \mathbf{x}}^{L}(1,1)$) or aggregated bounds ($\hat{\tau}_{CTB}^{U}(1,1), \hat{\tau}_{CTB}^{L}(1,1)$) using the orthogonalized moment conditions.
}
Take the average over the $K$ estimates.

 \caption{Honest inference for the covariate-tightened trimming bounds}
\end{algorithm}
\end{figure}

In the Appendix, we show formally that following Algorithm \ref{alg:honest} yields estimates of either the conditional bounds, ($\hat{\tau}_{CTB, \mathbf{x}}^{U}(1,1), \hat{\tau}_{CTB, \mathbf{x}}^{L}(1,1)$), or the aggregated bounds, ($\hat{\tau}_{CTB}^{U}(1,1), \hat{\tau}_{CTB}^{L}(1,1)$), that are consistent and converge to Normal distributions as $N \rightarrow \infty$. Inference based on this algorithm is honest by the definition of \citet{athey2016recursive}, because it accounts for the uncertainties caused by estimating nuisance parameters properly. For the conditional bounds, the standard error estimates are available from the regression forest output. 
For the aggregated bounds, we can calculate their standard errors from the variance of the orthogonalized moment conditions. We can then construct valid confidence intervals for either the conditional or the aggregated bounds. Note that they are not confidence intervals for the CATE or the ATE {\it per se}, but rather for the bounds, and hence tend to be conservative for coverage of the CATE or ATE. \citet{imbens2004confidence} and \citet{stoye2009more} propose ways to narrow the joint intervals to reflect the fact that the CATE or the ATE will not reach both bounds simultaneously, although we simply report the confidence intervals on the bounds themselves.
\section{Extensions}\label{extension}

In this section, we discuss extensions to cover situations that may arise in practice, including unit-level missing data, conditional monotonicity, binary outcomes, and estimated propensity scores (e.g., for non-experimental data).

\paragraph{Unit-level missing data}\label{missing-covariates}
In practice, we sometimes do not even have covariate information on subjects who do not respond. This will happen, for instance, if we collect the information on $(Y_i, D_i, \mathbf{X}_i)$ only in a post-treatment survey. If so, it will be no longer feasible to estimate $q(\mathbf{x})$ directly, since we do not know the value of $\mathbf{X}_i$ when $S_i = 0$.

However, under Assumption \ref{ass:ind}, we can show that
$$
\begin{aligned}
q(\mathbf{x}) = 1- \frac{1 - p(\mathbf{x})-P(D=0|\mathbf{X} = \mathbf{x},S=1)}{(1-P(D=0|\mathbf{X} = \mathbf{x},S=1))(1-p(\mathbf{x}))}
\end{aligned}
$$
using Bayes' rule. Therefore, we can infer the value of $q(\mathbf{x})$ from the estimate of $P(D_i=0|\mathbf{X}_i = \mathbf{x},S_i=1)$, which only involves observations with $S_i = 1$ and can be estimated with the probability forest.

\paragraph{Conditional monotonicity}\label{cond-mono}
We can further relax Assumption \ref{ass:mono} and allow the direction of monotonicity to change across the values of covariates, as suggested by \citet{semenova2020better}. We replace Assumption \ref{ass:mono} with the following:
\begin{assn}
Conditionally monotonic selection: $S_i(1) \le S_i(0) | \mathbf{X}_i = \mathbf{x}$ or $S_i(1) \ge S_i(0) | \mathbf{X}_i = \mathbf{x}$. \label{ass:cond-mono}
\end{assn}

Under Assumption \ref{ass:cond-mono}, the covariates space $X$ can be divided into two parts, $\mathcal{X}^{+}$ and $\mathcal{X}^{-}$, that satisfy
$$
\begin{cases}
S_i(1) \le S_i(0), & \mathbf{X}_i \in \mathcal{X}^{-} \\
S_i(1) \ge S_i(0), & \mathbf{X}_i \in \mathcal{X}^+ 
\end{cases}
\\ 
\text{where } \mathcal{X}^{+} \cup \mathcal{X}^{-} = \mathcal{X}  
$$
Conditional on each part, we can estimate either the conditional or the aggregated bounds using Algorithm \ref{alg:honest}. To acquire the aggregated bounds on the whole sample, we just need to take the average over estimates from $\mathcal{X}^{+}$ and $\mathcal{X}^{-}$, weighted by the proportion of each part. To identify $\mathcal{X}^+$ and $\mathcal{X}^-$, we first estimate the conditional trimming probability, $q(\mathbf{x})$, and approximate $\mathcal{X}^+$ and $\mathcal{X}^-$ with $\hat{\mathcal{X}}^{+} \coloneqq \{\mathbf{X}_i: \hat{q}(\mathbf{x}) < 1\}$ and $\hat{\mathcal{X}}^{-} \coloneqq \{\mathbf{X}_i: \hat{q}(\mathbf{x}) > 1\}$, respectively. \citet{semenova2020better} shows that the classification is accurate in large samples under regularity conditions. In a finite sample, if the set $\hat{\mathcal{X}}^{-}$ contains many units, it would raise doubts about the validity of Assumption \ref{ass:mono} and could indicate that one should use Assumption \ref{ass:cond-mono}.

\paragraph{Binary outcome}\label{binary-outcome}
In some applications, the outcome variable is binary rather than continuous. For example, political scientists are often interested in whether voters turn out in an election or whether they vote for a particular candidate \citep{jacobson2015campaigns, kalla2018minimal}. Now, the conditional quantile is no longer continuous in $\mathbf{X}$. Nevertheless, we can still estimate the bounds based on similar ideas as for continuous outcomes.

\begin{figure}
\caption{Trimming Bounds with Binary Outcome}
\label{fig_bo}
 \begin{center}
 \includegraphics[width=.45\linewidth, height=.45\linewidth]{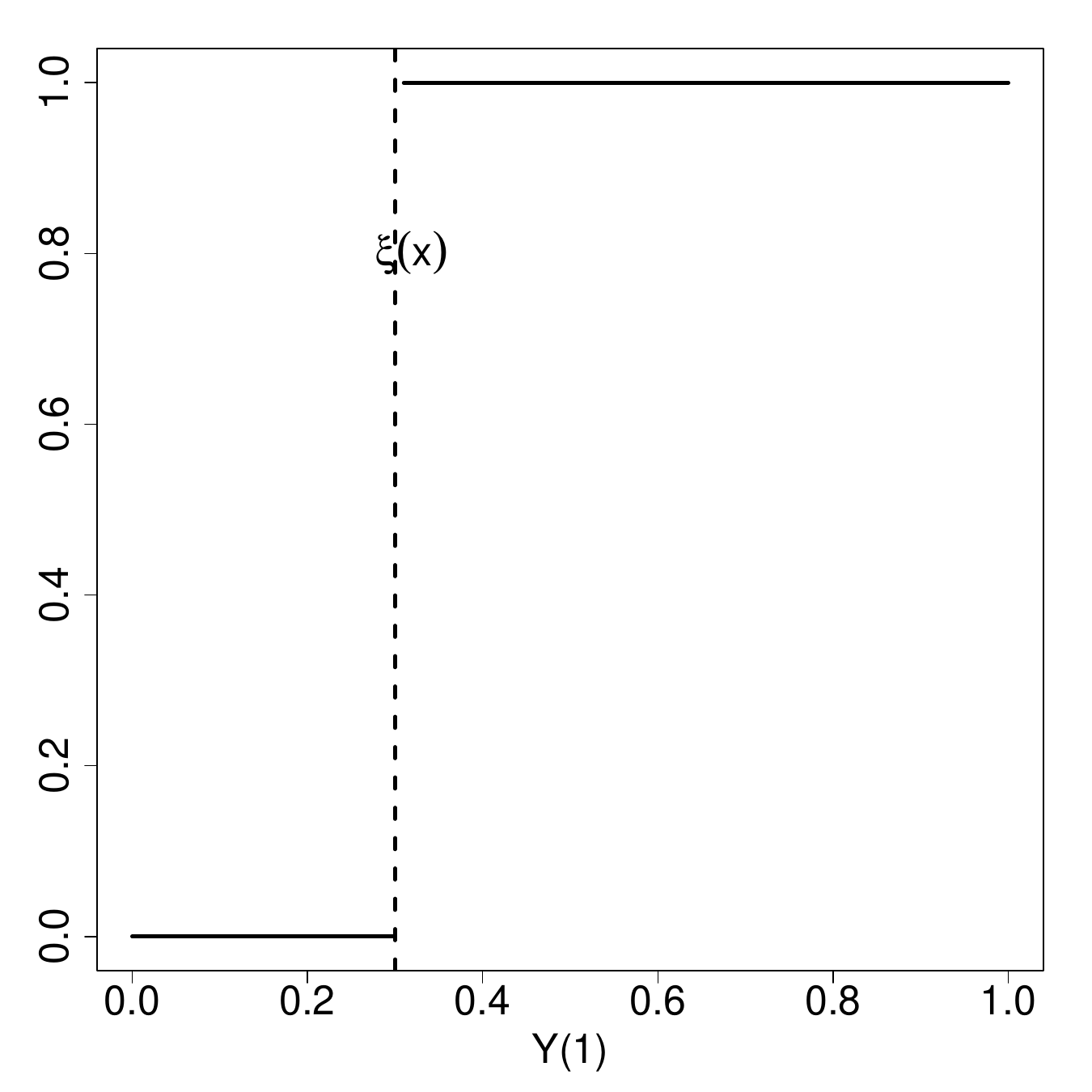}
\includegraphics[width=.45\linewidth, height=.45\linewidth]{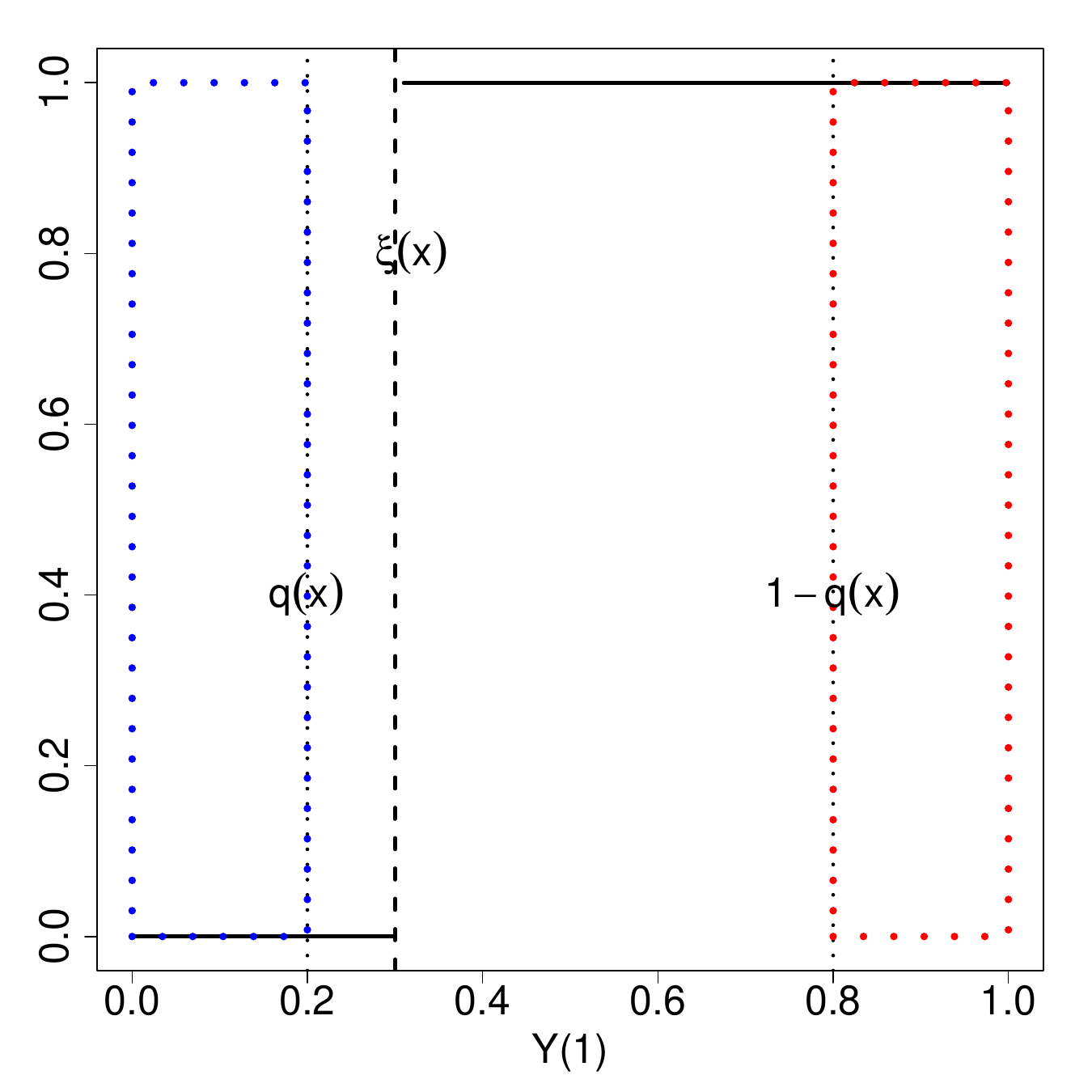}
\includegraphics[width=.45\linewidth, height=.45\linewidth]{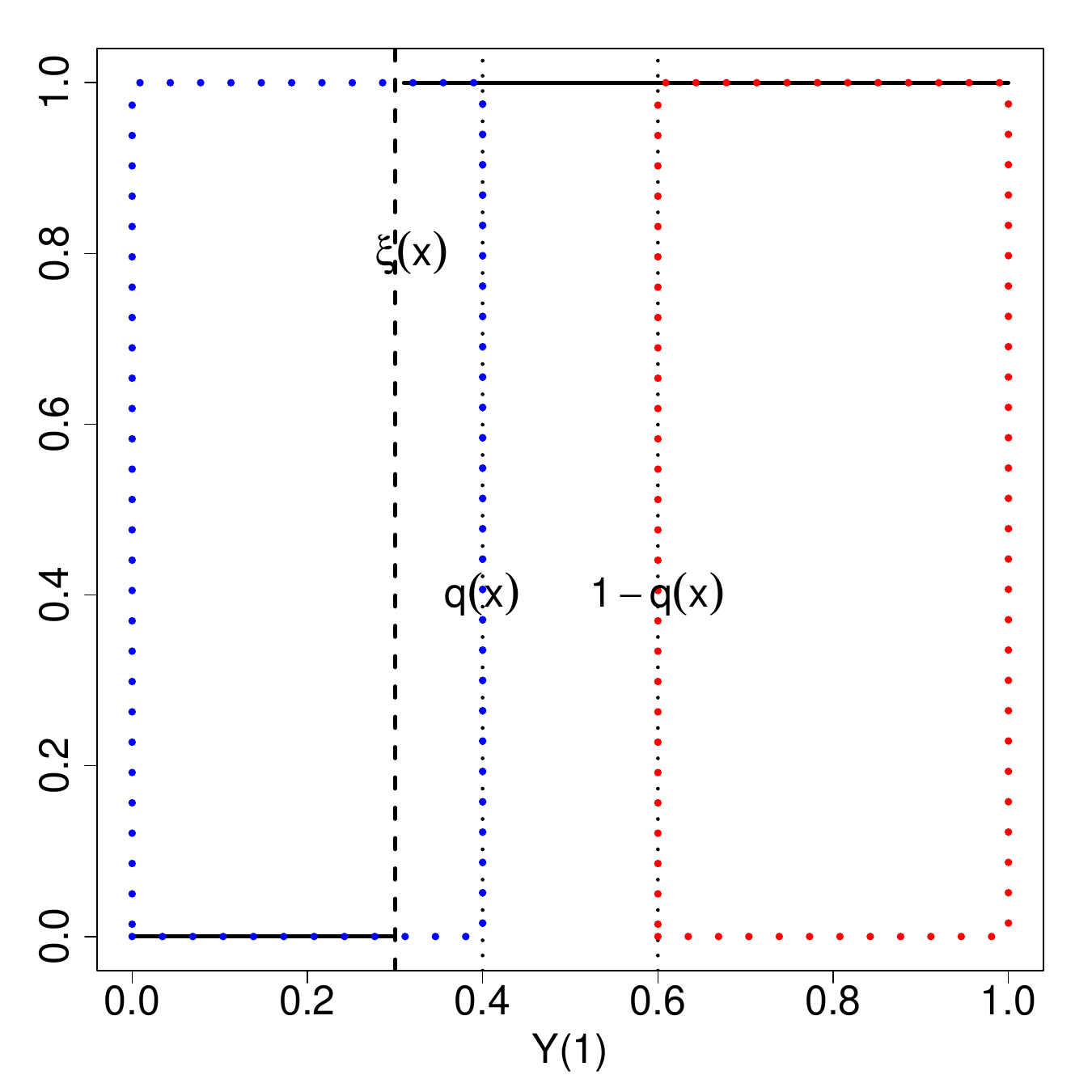}
\includegraphics[width=.45\linewidth, height=.45\linewidth]{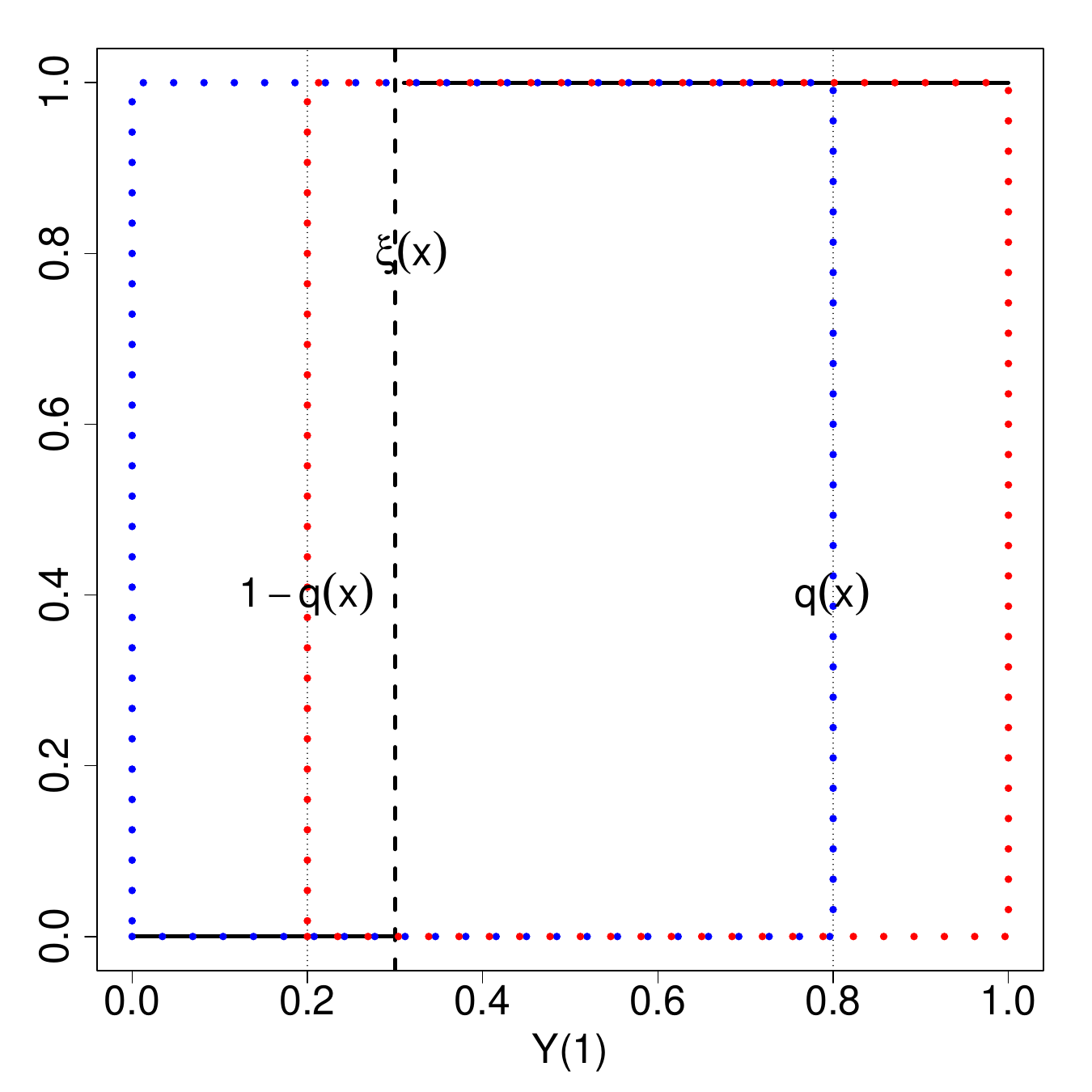}
 \end{center}
 \footnotesize\textbf{Note:} These plots demonstrate how to construct the trimming bounds when the outcome variable is binary. The top-left plot shows the distribution of $Y_i(1)$. The dashed line indicates the value of $\xi(\mathbf{x})$. In the rest three plots, the dotted lines mark the values of $q(\mathbf{x})$ and $1-q(\mathbf{x})$. The rectangle with a blue border represents $\theta_1^L(\mathbf{x})$ and the rectangle with a red border represents $\theta_1^U(\mathbf{x})$. From top-right to bottom-right, the plots show cases where $q(\mathbf{x}) \leq \xi(\mathbf{x}) \leq 1-q(\mathbf{x})$, $\xi(\mathbf{x}) \leq q(\mathbf{x}) \leq 1-q(\mathbf{x})$, and $1-q(\mathbf{x}) \leq \xi(\mathbf{x}) \leq q(\mathbf{x})$.
\end{figure}

Define $\xi(\mathbf{x}) = Pr(Y_i = 0| D_i = 1, S_i = 1, \mathbf{X}_i = \mathbf{x})$. From Figure \ref{fig_bo}, we can see that the trimmed means of the treated outcome are decided by the relative magnitude of $\xi(\mathbf{x})$ and $q(\mathbf{x})$. In the case where $1-q(\mathbf{x}) \leq \xi(\mathbf{x}) \leq q(\mathbf{x})$, for instance, the trimmed mean from below equals $\frac{q(\mathbf{x}) - \xi(\mathbf{x})}{q(\mathbf{x})}$ and the trimmed mean from above equals $\frac{1 - \xi(\mathbf{x})}{q(\mathbf{x})}$. In general, we have the following expressions for the lower and upper bound at point $\mathbf{x}$: 
$$
\begin{aligned}
& \hat \tau^L_{CTB, \mathbf{x}}(1, 1) = \frac{(\hat q(\mathbf{x}) - \hat \xi(\mathbf{x}))\mathbf{1}\{\hat \xi(\mathbf{x}) \leq \hat q(\mathbf{x})\}}{\hat q(\mathbf{x})} - \hat \theta_0(\mathbf{x}), \\
& \hat \tau^U_{CTB, \mathbf{x}}(1, 1) = \frac{\hat q(\mathbf{x}) + (1 - \hat q(\mathbf{x}) - \hat \xi(\mathbf{x}))\mathbf{1}\{\hat \xi(\mathbf{x}) \geq 1-\hat q(\mathbf{x})\}}{\hat q(\mathbf{x})} - \hat \theta_0(\mathbf{x}). \\
\end{aligned}
$$
We can use the probability forest to estimate $\xi(\mathbf{x})$ and obtain estimates for the conditional bounds. Estimates for the aggregated bounds can be constructed by aggregating these conditional bounds.

\paragraph{Estimated propensity scores}\label{est-pscore}
In observational studies, $p(\mathbf{X})$ is unknown and has to be estimated from data. Since $p(\mathbf{X})$ affects all the moment conditions in Section \ref{estimation}, we need to orthogonalize them with regard to this extra nuisance parameter. We illustrate how to implement this modification in the Online Appendix. We also generate an extra split of data for the estimation of $p(\mathbf{X})$. If there also exist missing covariates, then the estimation of $p(\mathbf{X})$ may be biased and the method can no longer be applied.

\section{Simulation}\label{simulation}
We conduct a Monte Carlo simulation experiment to study the performance of the proposed method. We set the sample size $N$ to be 1,000 and the number of covariates $P$ to be 10. Each of the covariates, $X_p$, is randomly drawn from the uniform distribution on $[0, 1]$. The unobservable factor, $U$, also obeys the uniform distribution on $[0, 1]$. We only allow $X_1$ and $U$ to affect both the potential responses ($S_i(0)$, $S_i(1)$) and the potential outcomes ($Y_i(0)$, $Y_i(1)$), while the other 9 covariates are pure noise. The setup resembles the reality in which we possess measures of multiple covariates but do not know {\it ex-ante} which one(s) should be controlled. Our exact data-generating process is as follows:
$$
\begin{aligned}
& Y_i(0) = 1.5 - 0.6*U_{i}^2 + 4*X_{1i} + \varepsilon_{i}, \\
& Y_i(1) = Y_i(0) + 2.5*U_{i} + 3*\sin(-0.7 + 2*X_{1i}), \\
& S_i = \mathbf{1}\{1 - 0.2*X_{1i} - 1.6*U_{i} + 0.4*D_i + 0.1*D_i*X_{1i} + 2*D_i*U_{i} + \nu_i > 0 \}, \\
& D_i \sim Bernoulli(0.5), \varepsilon_{i} \sim \mathcal{N}(0, 1), \nu_{i} \sim \mathcal{N}(0, 1).
\end{aligned}
$$
The individualistic treatment effect for unit $i$, $\tau_i = 2.5*U_i + 3*\sin(-0.7 + 2*X_{1i})$, is a nonlinear and non-monotonic function of the covariates. Both the selection indicator $S_i$ and outcome $Y_i$ are affected by $D_i$ and $U_i$, making the sample selection process endogenous. The construction of $S_i$ guarantees that $S_i(1) \geq S_i(0)$, thus Assumption \ref{ass:mono} holds.

We first estimate the conditional bounds, ($\tau^L_{CTB, \mathbf{x}}(1,1)$, $\tau^U_{CTB, \mathbf{x}}(1,1)$), across 19 points where $X_2$ to $X_{10}$ are fixed at their sample averages, while $X_1$'s value increases by 5\% each time from its minimum. The results are shown in the right plot of Figure \ref{fig_sim_results}. 
Each gray point is an individual-level treatment effect for an observation from one of the Monte Carlo samples.
The black line traces out the true CATE for the always-responders; this quantity would be unobservable in a real application and constitutes what the conditional bounds are intended to cover.
The upper bounds estimates are depicted as red dots and the lower bounds estimates as blue ones. 
The dotted lines surrounding them are the estimated 95\% confidence intervals. 
The red and blue curves are the true bounds on the CATE for the always-responders, based on the simulation parameters.
The estimates are very accurate in the middle part of $X_1$’s support but slightly biased at the boundary points. 

\begin{figure}[htp]
\caption{Covariate-tightened Trimming Bounds in Simulated Data}
\label{fig_sim_results}
 \begin{center}
 \includegraphics[width=.49\linewidth, height=\linewidth]{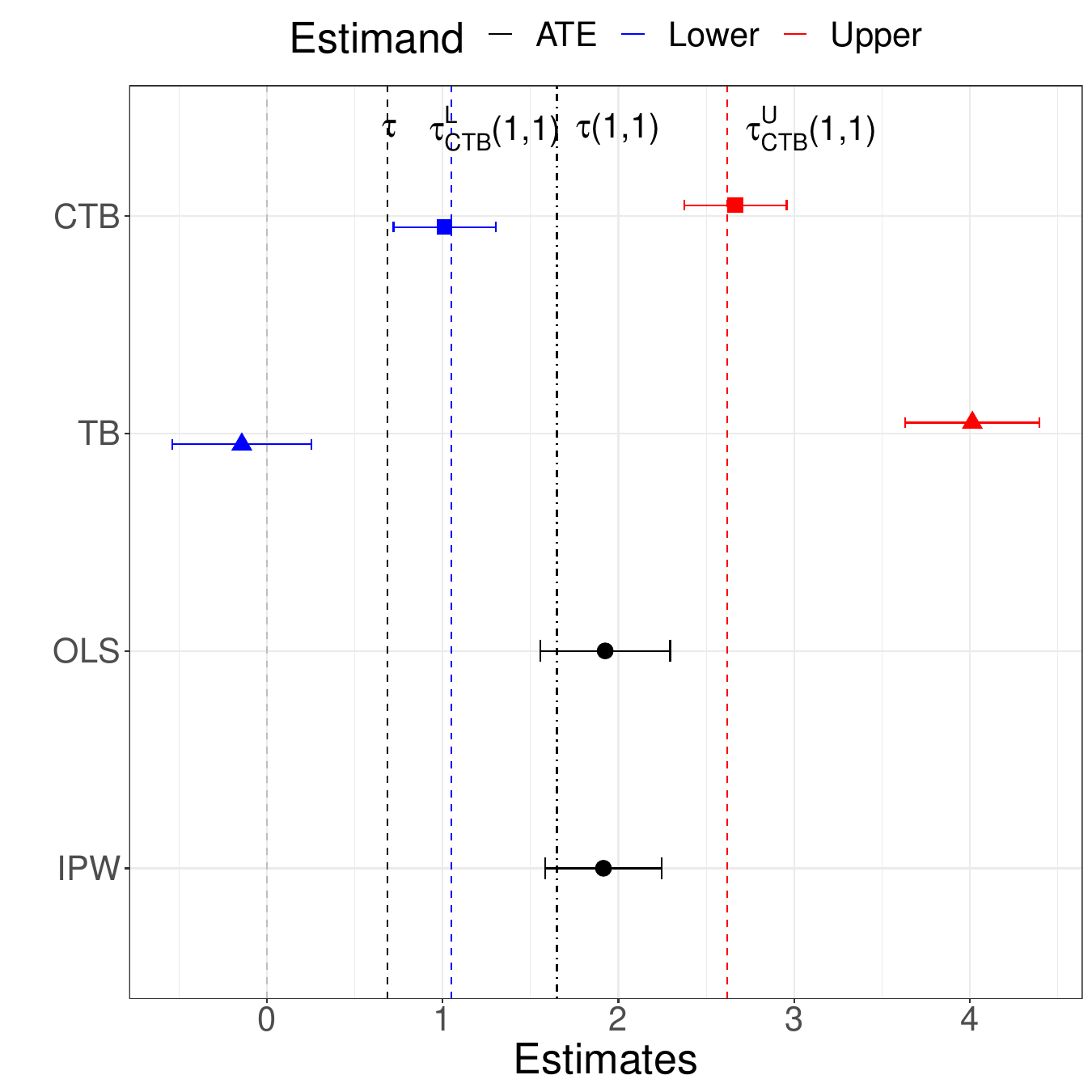}
\includegraphics[width=.49\linewidth, height=.6\linewidth]{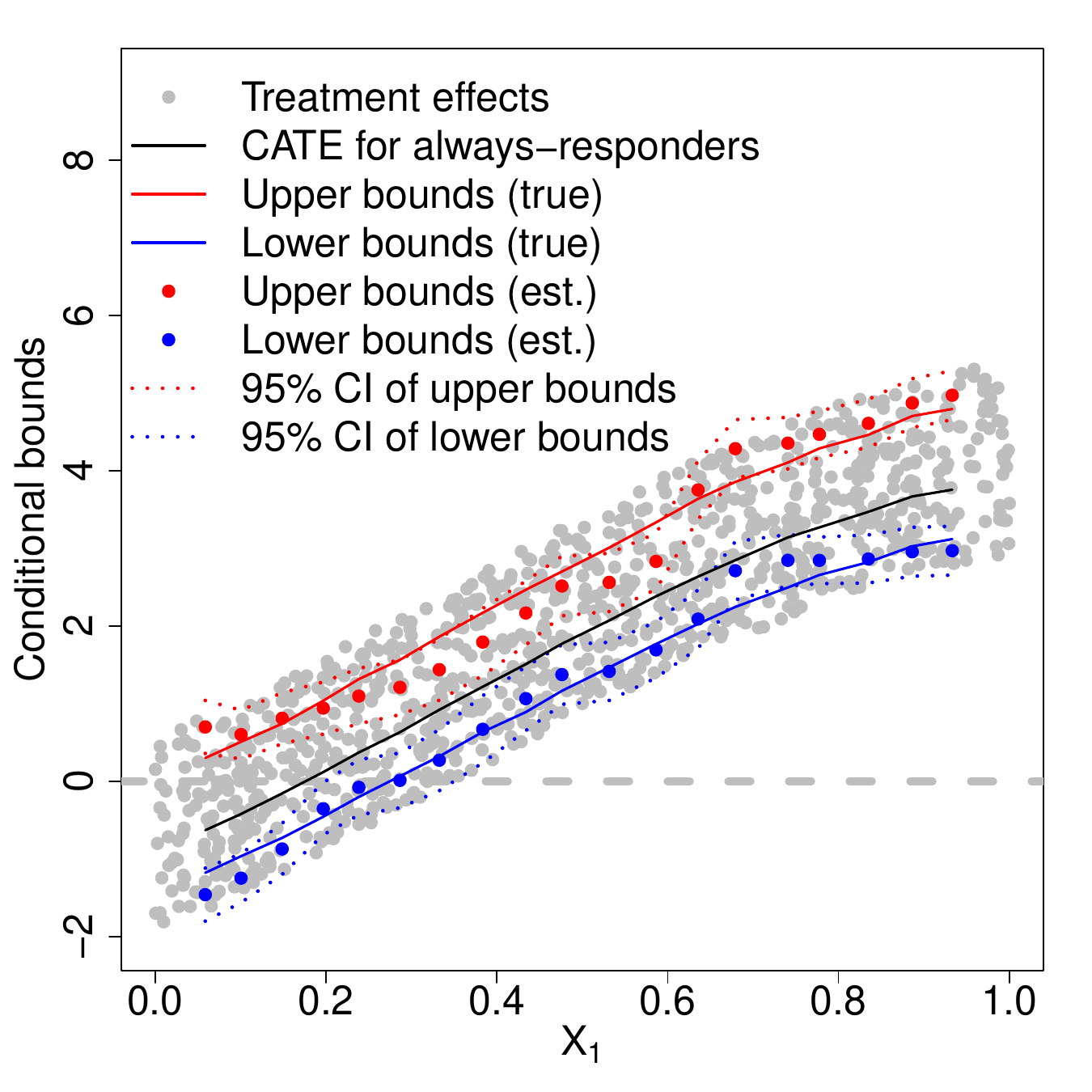}
 \end{center}
 \footnotesize\textbf{Note:} From top to bottom, the plot on the left shows the averages of the estimated covariate-tightened trimming bounds (CTB, in squares), basic trimming bounds (TB, in triangles), ATE estimate using OLS on the non-missing sample (OLS, in circles), and ATE estimate using OLS re-weighted by the inverse probability of attrition (IPW, in circles), over 1,000 assignments. Lower bound estimates are in blue and upper bound estimates are in red. The ATE estimate is in black. The segments represent the 95\% confidence intervals for the estimates. The red and blue dotted lines mark the true values of the CTB. The black dotted line represents the true ATE and the black dash-dotted line represents the true ATE for the always-responders. The plot on the right shows the estimates of the conditional covariate-tightened trimming bounds across 
 19 evaluation points along $X_1$, holding $X_2$ to $X_{10}$ to their sample averages.
 The red and blue dots represent upper and lower bound estimates, respectively. The dotted lines around them are the 95\% confidence intervals. The red and blue solid lines are the true bounds. The black solid line represents the CATE for the always-responders. The gray dots mark the individualistic treatment effect's value for the corresponding observation.
\end{figure}

In the left plot of Figure \ref{fig_sim_results}, the top two rows present the averages of our estimates of the aggregated bounds and their 95\% confidence intervals across 1,000 treatment assignments. It is clear that our estimates lead to a narrower identified set of $\tau(1,1)$, the ATE for the always-responders. The estimates are also centered around the true CTB, represented by the red and blue dotted lines. In addition, the standard errors of our estimates are not larger, although more complicated models are exploited to infer the parameters. Consequently, the estimated CTBs are more informative of $\tau(1,1)$'s value than the basic trimming bounds (TB). In the third and the fourth row, we show the average estimates of the ATE using OLS on the non-missing sample and OLS  with inverse probability of attrition weighting (IPW), respectively. Both are biased for the true ATE (the black dashed line) and the true ATE for the always-responders (the black dash-dotted line), which is expected since $U$ makes it impossible to point identify the ATE by conditioning on the observable covariates.\footnote{We do not compare to outcome imputation, since it works under the same identifying conditions as IPW, or the Heckman correction because we do not have selection instruments that satisfy the exclusion restriction and so identification would depend exclusively on parametric assumptions \citep{honore2020selection}.} 

We examine the asymptotic properties of our estimators by varying the sample size from 400 to 2,000. In Figure \ref{fig_sim_asym}, we show the mean squared error (MSE) and the confidence interval coverage rate over 1,000 repeated assignments for both the lower and the upper bound estimate. From the left panel of Figure \ref{fig_sim_asym}, we can see that the MSE for either estimate declines toward zero as $N$ grows. The right panel indicates that our method provides the correct (95\%) coverage across the sample sizes. In the Online Appendix, we present evidence that the performance of the method remains stable when the number of features in the data increases and is superior to alternative methods proposed by \citet{olma2020nonparametric} and \citet{semenova2020better}. 

\begin{figure}[htp]
\caption{Asymptotic Performance of the Method}
\label{fig_sim_asym}
 \begin{center}
\includegraphics[width=.48\linewidth, height=.4\linewidth]{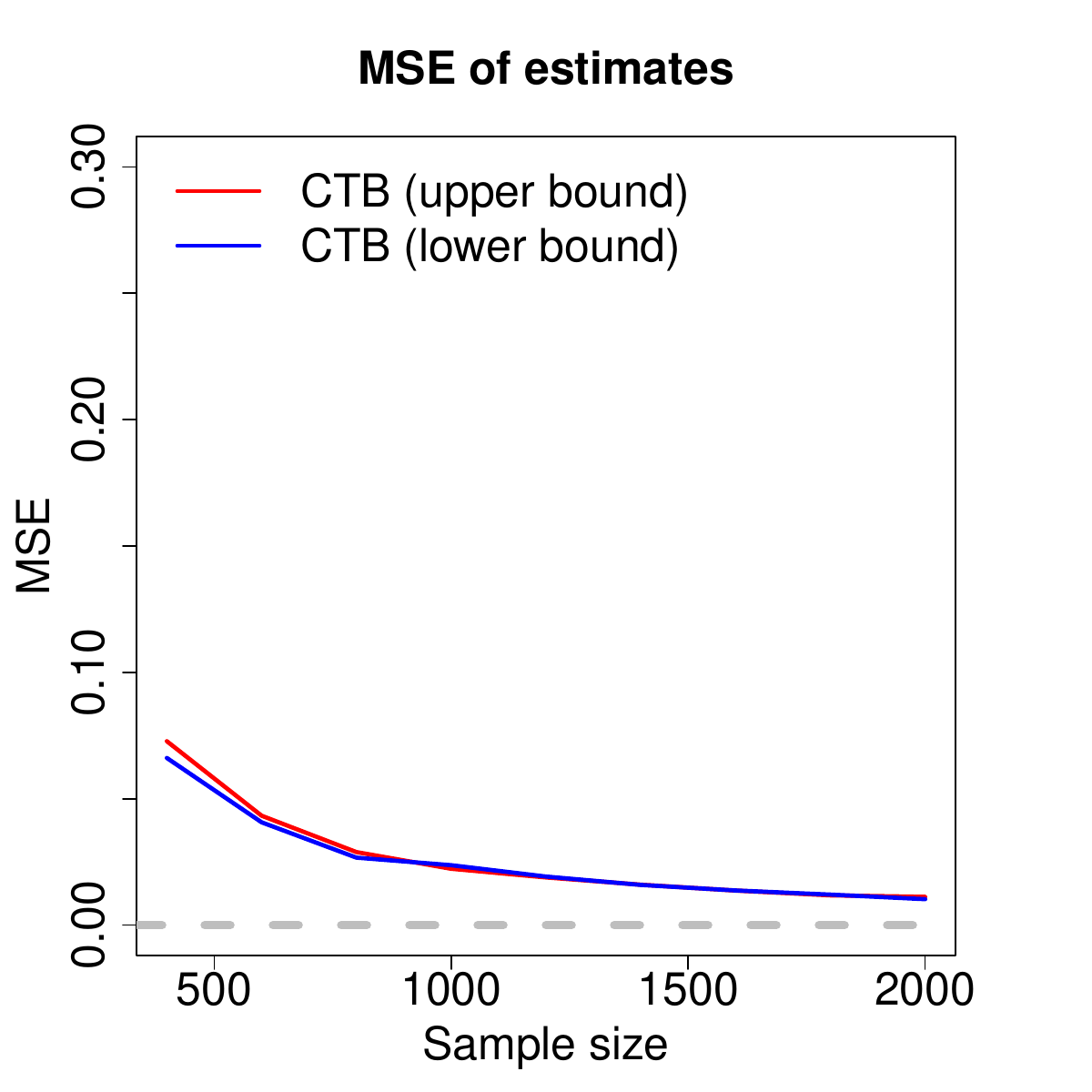}
\includegraphics[width=.48\linewidth, height=.4\linewidth]{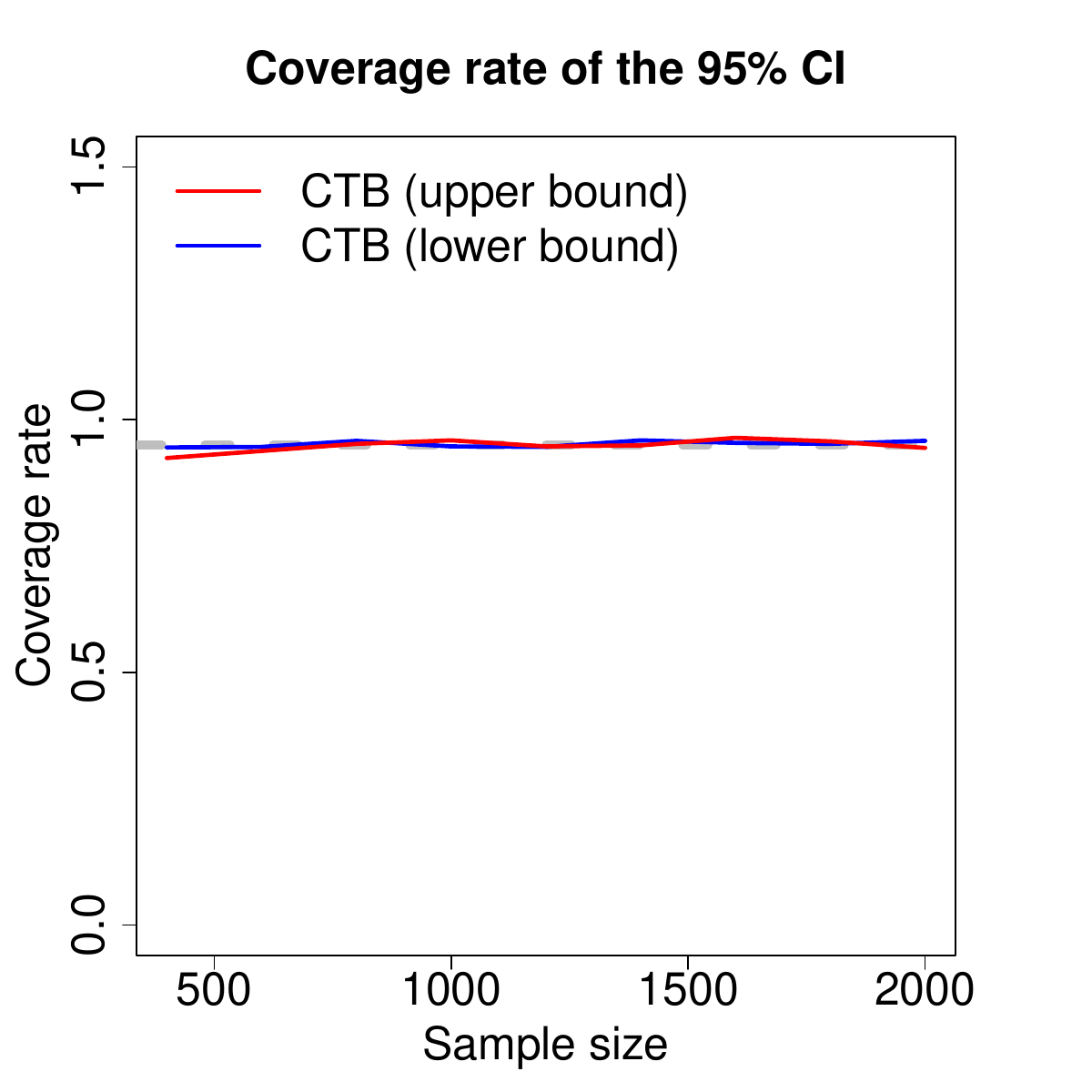} \\
 \end{center}
 \footnotesize\textbf{Note:} The left panel shows how the MSE of the lower bound estimate (in blue) and upper bound estimate (in red) varies with sample sizes. The right panel presents the variation of the coverage rate across sample sizes. The gray line marks the nominal level of coverage, 95\%.
\end{figure}

\section{Applications}\label{application}

In this section, we present applications of our covariate-adjusted bounds to two political science studies by \citet{santoro2022promise} and \citet{blattman2010consequences}.\footnote{For both studies, we use the publicly available replication data files. IRB approvals are as follows: for \citet{santoro2022promise}, Stanford University and University of California-Berkeley (2020-10-13766); for \citet{blattman2010consequences}, University of California-Berkeley (2005-5-7).} Our Appendix reanalyzes the Job Corps study, which was the original application in \citet{lee2009training}, as well as a study by \citet{kalla2022outside} to demonstrate bounds with a binary outcome. 

\subsection{Santoro and Broockman (2022)}\label{application1}
\citet[Study 1]{santoro2022promise} invited subjects to have a video chat on an online platform with a partner from a different party. 
The theme of the conversation is what their perfect day would be like. The study started with 986 subjects who satisfied the screening criteria. The subjects were then randomly assigned into either the treatment group ($D_i = 1$), in which they were informed that the partner would be an outpartisan, or the control group ($D_i = 0$), in which they received no extra information.\footnote{The actual assignment process is slightly more complex: When a subject logs into the platform, she/he will be matched to the subject who has the longest waiting time and inherits the treatment status of that subject. If no one is waiting, then she/he is randomly assigned a treatment status. The authors cluster their standard errors at the level of matched pairs. But if the timing of the subjects to log into the platform is independent, so will be the treatment assignment process across them. Our estimates do not change much with clustered standard errors.} 
Among subjects that were assigned to a treatment condition, 45.2\% of the treated subjects and 39.5\% of the control subjects completed the conversation and post-treatment survey.
The authors examined the treatment effect on a series of outcome variables that measure a subject's attitude toward the other parties and found significant effects of the treatment in the short run.

The authors indicate that the treatment effect on the response rates had a $p$-value of 0.055, and their omnibus test for covariate balance with respect to education, race/ethnicity, gender, age, and party identification had a $p$-value of 0.28. 
Nonetheless, the difference in response rates is not trivial, and other confounding factors may be imbalanced beyond those tested.
We can use our covariate-tightened trimming bounds to assess the robustness of their findings. 
Our inference targets the ``always responders'' that would complete the conversation and survey regardless of being informed about their partner's partisanship. 
We focus on the ``warmth toward outpartisan voters'' outcome (measured by a rescaled thermometer) and rely on the same covariates the authors selected for their analysis, including the age, gender, race, education level, and party identification of the subjects, as well as their pre-treatment outcome.

\begin{figure}[!t]
\caption{Covariate-tightened Trimming Bounds in Santoro and Broockman (2022)}
\label{fig_app_SB}
 \begin{center}
\includegraphics[width=.49\linewidth, height=\linewidth]{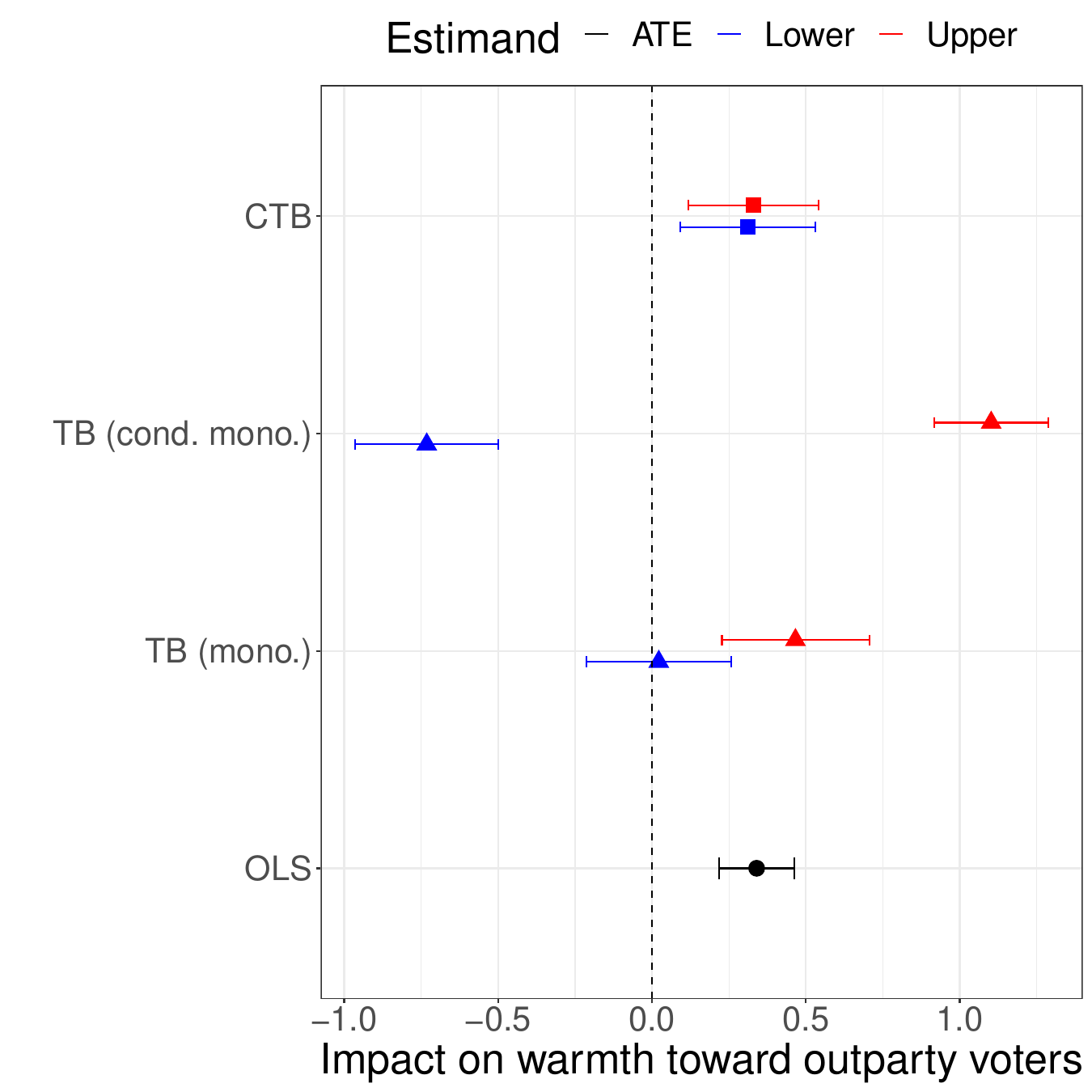}
\includegraphics[width=.49\linewidth, height=.6\linewidth]{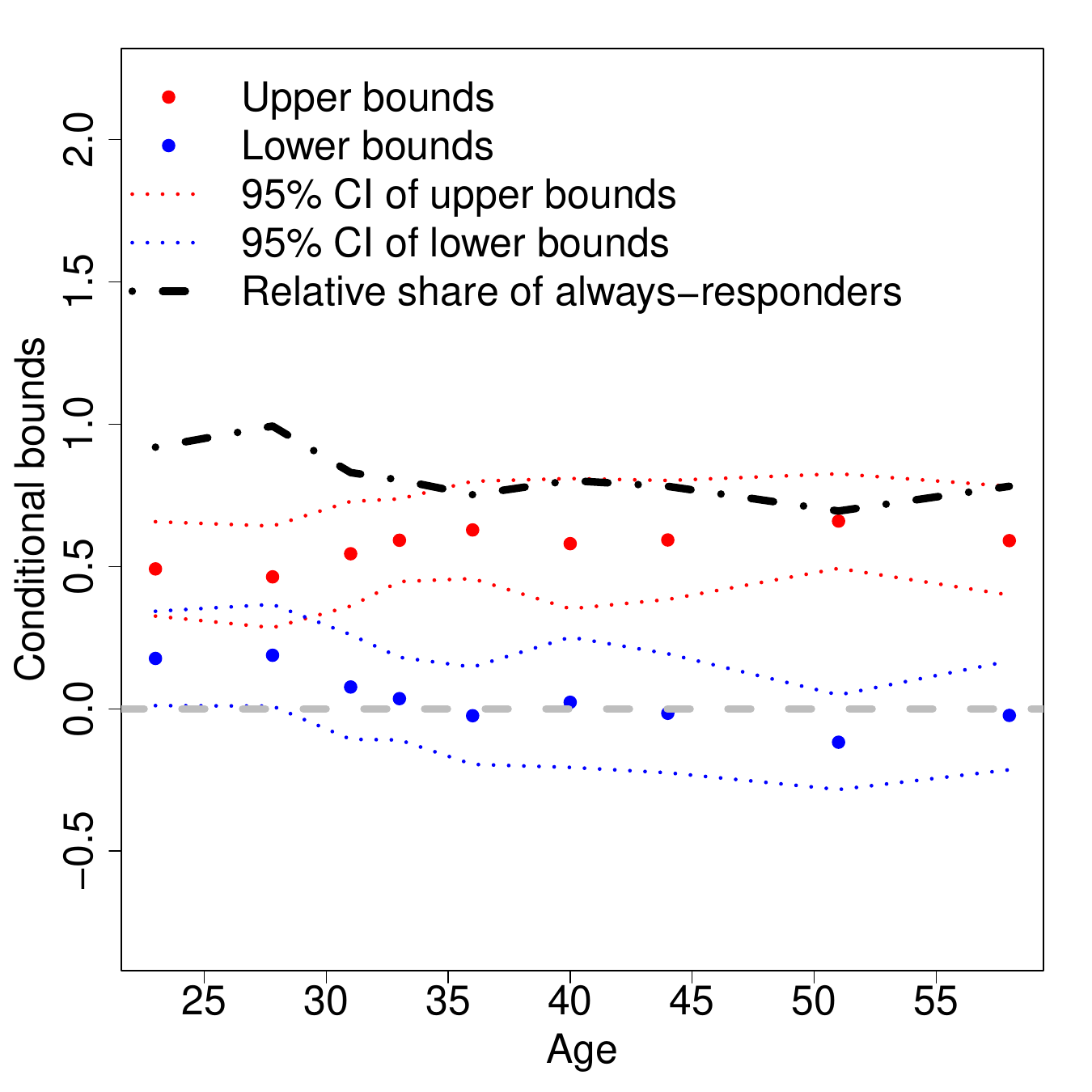}
 \end{center}
 \footnotesize\textbf{Note:} In both plots, the outcome variable is warmth toward outpartisan voters. From top to bottom, the plot on the left shows the estimated covariate-tightened trimming bounds (CTB, in squares), basic trimming bounds under monotonicity (TB (mono.), in triangles) and under conditional monotonicity (TB (cond. mono.), in triangles), and the ATE estimate using OLS on the non-missing sample (OLS, in circles). Lower bound estimates are in blue and upper bound estimates are in red. The ATE estimate is in black. The segments represent the 95\% confidence intervals for the estimates. The plot on the right shows the estimates of the conditional covariates-tightened trimming bounds across 9 evaluation points for which demographic attributes are fixed at the sample mean or mode while age varies across deciles. The red and blue dots represent upper and lower bound estimates, respectively. The dotted lines around them are the 95\% confidence intervals. The dash-dotted curve depicts the conditional trimming probability.
\end{figure}

Figure \ref{fig_app_SB} shows the results. 
At the bottom is the difference-in-means for the observed sample, controlling for the covariates (OLS), replicating the main result in the paper.
Above that (TB mono.) are the basic trimming bounds, assuming monotonic selection.
In this case, however, unconditional monotonicity is questionable. 
While the overall pattern was such that missingness was lower in the treatment group, some types of people (e.g., those with high affective polarization) may actually choose to drop out {\it when informed} about their partner's partisan status. 
When we perform the test proposed in Section~\ref{cond-mono} above, the results suggest that unconditional monotonicity is violated.\footnote{We plot the distribution of the response rates, $q_0(\mathbf{x})$ and $q_1(\mathbf{x})$, as well as the conditional trimming probability in the appendix. These also suggest the need to use the weaker Assumption \ref{ass:cond-mono} of conditional monotonicity.}
We can construct two sets of units ($\mathcal{X}^{+}$ and $\mathcal{X}^{-}$) for which treatment is positively and negatively associated with response.
Trimming bounds for the first group would be based on bounding the treated mean, and bounds for the second group are based on bounding the control mean. 
We can form the basic trimming bounds for the overall sample by constructing the appropriately weighted average of these bounds and the observed treated and control means.
The result is shown as ``TB (cond. mono.).''
We see that relaxing the monotonicity assumption has strong consequences for our inference.
Our proposed bounds covariate-tightened trimming bounds estimator (CTB) makes better use of the covariate information, while also allowing for conditional monotonicity. 
The result is much tighter than the basic bounds, whether or not one allows for violations of unconditional monotonicity.
Both the lower bound ($0.272$) and the upper bound ($0.355$) are significantly larger than zero, suggesting that the ATE for the always-responders is indeed positive. 

We can further examine how bounds on the effects vary over the age of the subjects. We create $9$ ``representative subjects'' whose demographic attributes are fixed at the sample mean or mode while their age varies across its deciles. The estimates of the conditional bounds are shown in the right plot of Figure \ref{fig_app_SB}. The red and blue dots represent the upper and the lower bounds, respectively, while the dotted lines around them are the 95\% confidence intervals. The dash-dotted curve depicts the conditional trimming probability. We can see that the response rate (which determines the share of always-responders) is much higher among subjects who are younger than 30. The conditional lower bounds are also significantly larger than zero for this sub-group, while the identified set crosses zero for other subjects. We show how the conditional bounds vary over the pre-treatment warmth or the ideology of the subjects in the Online Appendix.

\subsection{Blattman and Annan (2010)}\label{application2}
\citet{blattman2010consequences} is an observational study investigating the consequences of being abducted into a rebel organization as a youth in Northern Uganda. The authors argue that conditional on personal characteristics, whether an individual was abducted ($D_i$) was a random event.
This motivates an identification strategy based on covariate conditioning. 
To conduct the study, they constructed a roster of household members in 1996 from 1,100 households across eight rural sub-counties of Uganda. Then, a random sample of 1,216 males born between 1975 and 1991 was drawn from the roster. For each subject in the sample, the authors collaborated with the household head to determine his treatment status---whether they were abducted by the rebels---as well as their relevant demographic characteristics. Among these 1,216 males, 346 had died or not returned from abduction when the study was conducted. The survey enumerators succeeded in tracking 741 males among the remaining 870 and asked them to complete the survey questionnaire. 
Analysis shows that while abductees exhibited ``resilience'' on certain dimensions, abduction affected these subjects negatively along various dimensions: they have fewer years of education, worse labor market performance, and a higher level of distress.

Blattman and Annan note that selection rates differed by treatment status: ``abductees are half as likely to be unfound migrants, twice as likely to have perished, and comprise all of those who did not return from abduction'' (p. 888).
The authors applied several techniques to gauge the potential influence of sample selection on their findings, including re-weighting the subjects with the estimated probability of attrition, a sensitivity analysis, and basic trimming bounds. 
We use our proposed covariate-adjusted bounds to evaluate the robustness of their findings.

We focus on two outcome variables: years of education and the level of psychological distress. 
We observe years of education for all 870 males that returned from abduction (household heads were able to provide information in case the subject himself was unreachable for interview), but psychological distress is measured only for the 741 males that were reachable for interview.
We use the covariates from Blattman and Annan's original analysis that do not contain any missing values, such as the age and location of each male, as well as the wealth level of his household. We estimate the treatment propensity score using the probability forest on the entire sample. We adjust our moment conditions accordingly following the discussion in Section \ref{est-pscore}. 

\begin{figure}[!t]
\caption{Covariate-tightened Trimming Bounds in Blattman and Annan (2010)}
\label{fig_app_BA_agg}
 \begin{center}
\includegraphics[width=.49\linewidth, height=.7\linewidth]{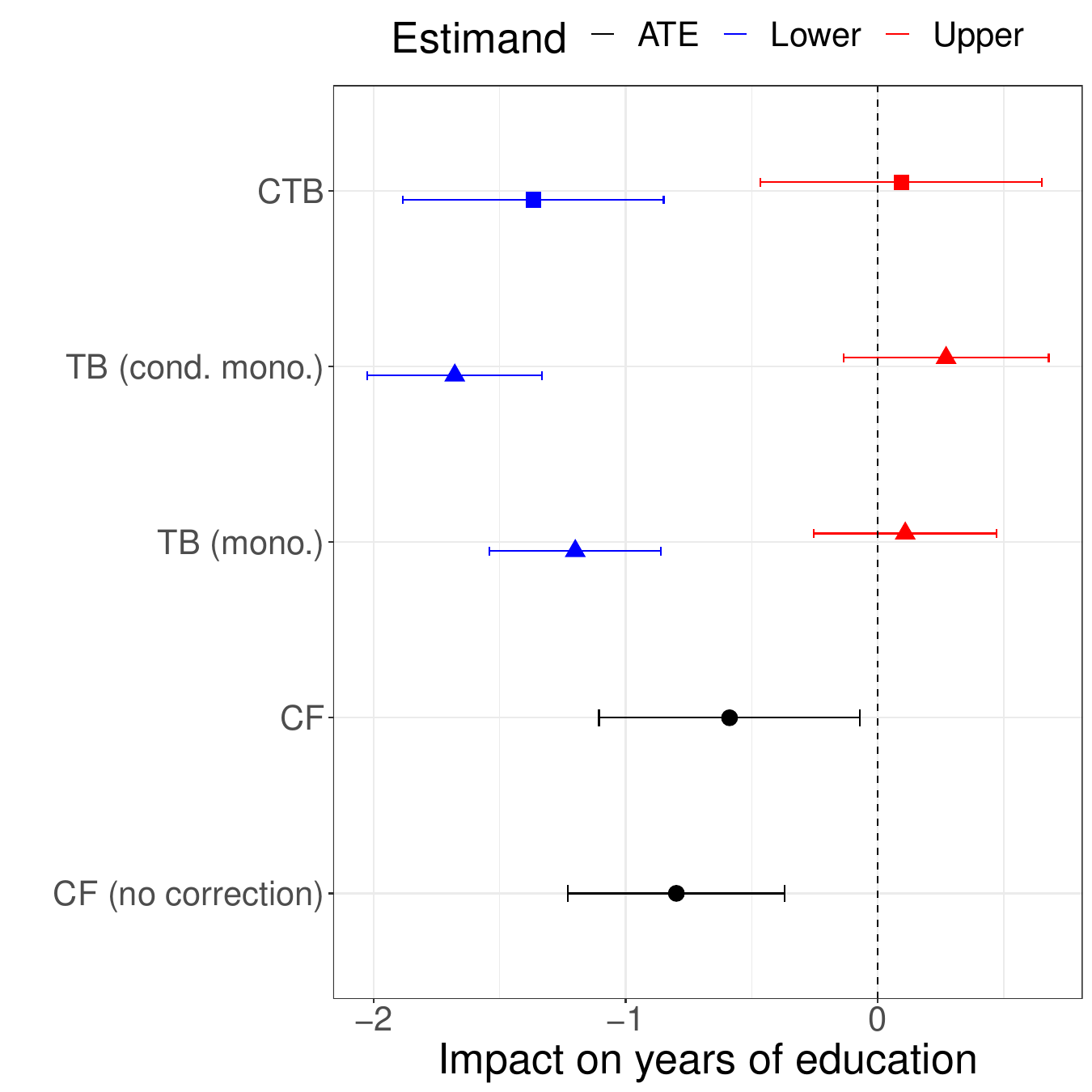}
\includegraphics[width=.49\linewidth, height=.7\linewidth]{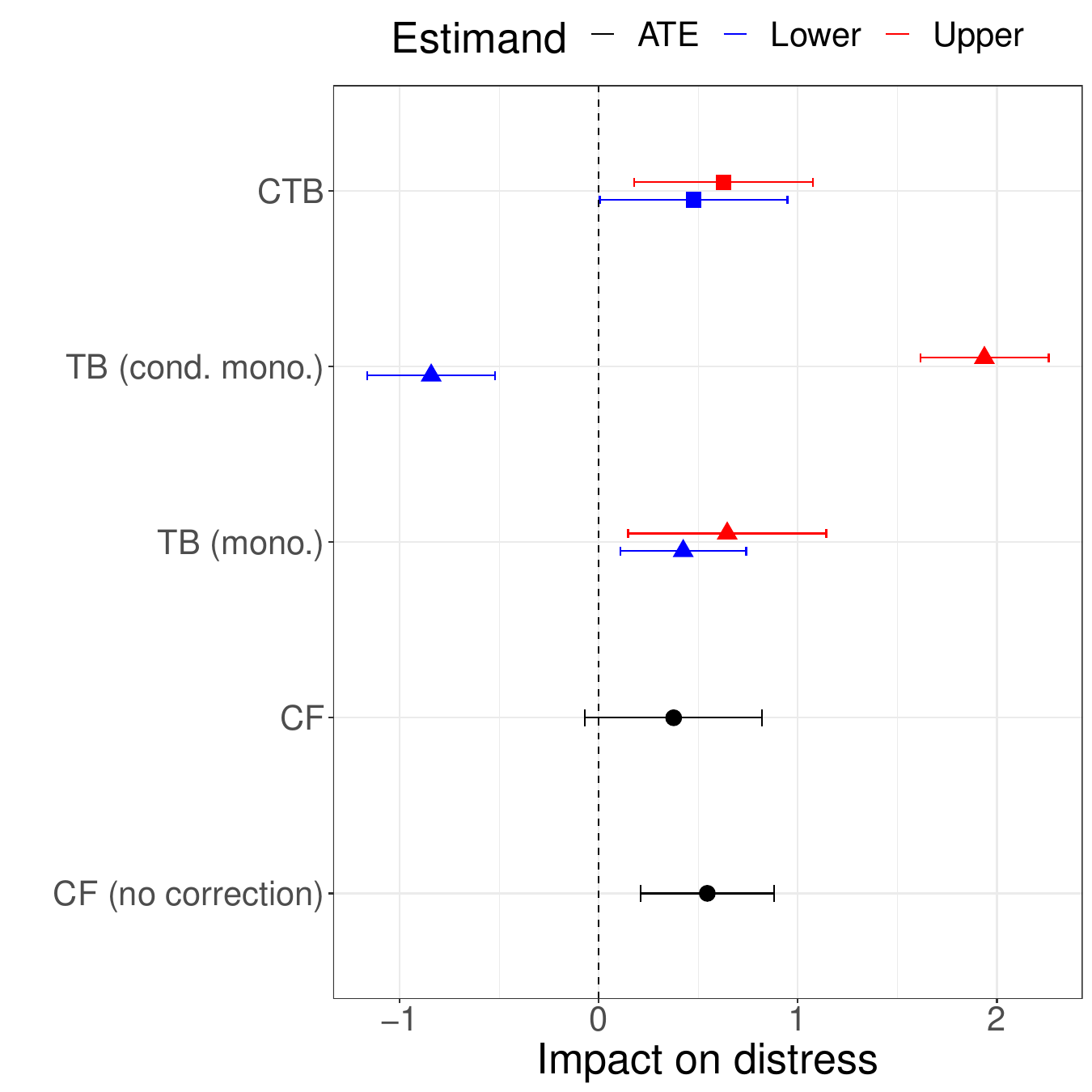}
 \end{center}
 \footnotesize\textbf{Note:} In the left plot, the outcome variable is years of education, while in the right one, it is psychological distress. From top to bottom, the plots show the estimated covariates-tightened trimming bounds (CTB, in crosses), basic trimming bounds under monotonicity (TB (mono.), in squares) and under conditional monotonicity (TB (cond. mono.), in triangles), the ATE estimate using causal forest on the non-missing sample weighted by the inverse probability of attrition (CF, in circles), and the same estimate without re-weighting (CF (no correction), in circles). Lower bound estimates are in blue and upper bound estimates are in red. The ATE estimates are in black. The segments represent the 95\% confidence intervals for the estimates.
\end{figure}

Estimates are presented in Figure \ref{fig_app_BA_agg}. 
At the very bottom are ``causal forest'' estimates that model outcomes in the treated and control group as a function of covariates using generalized random forest (``CF (no correction)'').
Above those are causal forest estimates that are also weighted by the inverse probability of attrition (``CF''). 
These are more elaborate specifications than in the original paper, which only used linear regression specifications for the covariates. 
Nonetheless, the findings are similar to what the original paper found.
Above the causal forest estimates are the basic trimming bounds, assuming monotonic selection. 
When we check for monotonicity using the approach discussed in Section~\ref{cond-mono}, we find evidence against it. 
Thus, we also show basic trimming bounds that allow for the direction of the monotonicity to vary by covariate-defined subgroups (``TB (cond. mono.)'').
Doing so yields extremely wide bounds. 
When we implement the full covariate-adjusted bound (``CTB''), which allows for conditional monotonicity, we obtain results that are more informative, and much more so for psychological distress. 

\section{Conclusion}\label{conclusion}
Even in experiments and plausible quasi-experiments, researchers often face problems of sample selection due to endogenous non-response or censoring. 
When the selection process is affected by treatment assignment, comparing the average outcome between the treated and untreated subjects no longer yields meaningful estimates, because the two groups consist of different principal strata. 
Conventional methods attempt to solve the problem by modeling the selection process, using the Heckman correction, inverse probability of attrition weighting, or outcome imputation. 
But the validity of these methods requires that either conditioning on observables is sufficient or that one knows obscure details of the data-generating process.
Inferences are invalid if these conditions are not met in actuality. 
Such approaches do not allow us to be agnostic about the data-generating process and undermine the design-based logic of experimental or quasi-experimental analysis.

We propose an alternative approach that avoids making strong structural assumptions while also improving, in terms of precision, on existing methods. 
Our proposal combines the trimming bounds method developed by \citet{lee2009training} with a recently developed machine learning algorithm, generalized random forest \citep{wager2018estimation, athey2019generalized}, to generate conditional or covariate-adjusted aggregated bounds for the average treatment effect on the always-responders. Our approach allows us to incorporate information from a large set of covariates to tighten the bounds. 
Evidence from our simulation and replication studies shows that the proposed method can be informative when existing methods are not. 
Importantly, our approach allows one to loosen a ``monotonicity'' assumption and to still construct informative bounds; we that the monotonicity assumption is indeed problematic in applied settings.
The price we pay is to forgo a point estimate and rather settle for a range of plausible estimates. 
In cases where experiments or quasi-experiments are tainted by complex selection problems, we consider this to be a reasonable way to balance credibility and robustness with the need to provide informative estimates \citep{manski2019communicating}.

We have developed an open-source R package, \textit{CTB}, for practitioners to implement the method. We expect it to have wide applications in political science and in other social science disciplines.


\bibliography{cttb}

\begin{thebibliography}{}

\bibitem[\protect\citeauthoryear{Angrist, Imbens, and Rubin}{Angrist
  et~al.}{1996}]{angrist_etal96_late}
Angrist, J.~D., G.~W. Imbens, and D.~B. Rubin (1996).
\newblock Identification of causal effects using instrumental variables.
\newblock {\em Journal of the American Statistical Association\/}~{\em
  91\/}(434), 444--455.

\bibitem[\protect\citeauthoryear{Athey and Imbens}{Athey and
  Imbens}{2016}]{athey2016recursive}
Athey, S. and G.~Imbens (2016).
\newblock Recursive partitioning for heterogeneous causal effects.
\newblock {\em Proceedings of the National Academy of Sciences\/}~{\em
  113\/}(27), 7353--7360.

\bibitem[\protect\citeauthoryear{Athey, Tibshirani, and Wager}{Athey
  et~al.}{2019}]{athey2019generalized}
Athey, S., J.~Tibshirani, and S.~Wager (2019).
\newblock Generalized random forests.
\newblock {\em The Annals of Statistics\/}~{\em 47\/}(2), 1148--1178.

\bibitem[\protect\citeauthoryear{Belloni, Chernozhukov, Fern{\'a}ndez-Val, and
  Hansen}{Belloni et~al.}{2017}]{belloni2017program}
Belloni, A., V.~Chernozhukov, I.~Fern{\'a}ndez-Val, and C.~Hansen (2017).
\newblock Program evaluation and causal inference with high-dimensional data.
\newblock {\em Econometrica\/}~{\em 85\/}(1), 233--298.

\bibitem[\protect\citeauthoryear{Blackwell, Brown, Hill, Imai, and
  Yamamoto}{Blackwell et~al.}{2023}]{blackwell2023priming}
Blackwell, M., J.~R. Brown, S.~Hill, K.~Imai, and T.~Yamamoto (2023).
\newblock Priming bias versus post-treatment bias in experimental designs.
\newblock {\em arXiv preprint arXiv:2306.01211\/}.

\bibitem[\protect\citeauthoryear{Blackwell, Honaker, and King}{Blackwell
  et~al.}{2017}]{blackwell2017unified}
Blackwell, M., J.~Honaker, and G.~King (2017).
\newblock A unified approach to measurement error and missing data: overview
  and applications.
\newblock {\em Sociological Methods \& Research\/}~{\em 46\/}(3), 303--341.

\bibitem[\protect\citeauthoryear{Blackwell and Olson}{Blackwell and
  Olson}{2022}]{blackwell2022reducing}
Blackwell, M. and M.~P. Olson (2022).
\newblock Reducing model misspecification and bias in the estimation of
  interactions.
\newblock {\em Political Analysis\/}~{\em 30\/}(4), 495--514.

\bibitem[\protect\citeauthoryear{Blattman and Annan}{Blattman and
  Annan}{2010}]{blattman2010consequences}
Blattman, C. and J.~Annan (2010).
\newblock The consequences of child soldiering.
\newblock {\em The review of economics and statistics\/}~{\em 92\/}(4),
  882--898.

\bibitem[\protect\citeauthoryear{Cheema, Khan, Liaqat, and Mohmand}{Cheema
  et~al.}{2023}]{cheema2023canvassing}
Cheema, A., S.~Khan, A.~Liaqat, and S.~K. Mohmand (2023).
\newblock Canvassing the gatekeepers: A field experiment to increase women
  voters’ turnout in pakistan.
\newblock {\em American Political Science Review\/}~{\em 117\/}(1), 1--21.

\bibitem[\protect\citeauthoryear{Chernozhukov, Chetverikov, Demirer, Duflo,
  Hansen, and Newey}{Chernozhukov et~al.}{2017}]{chernozhukov2017double}
Chernozhukov, V., D.~Chetverikov, M.~Demirer, E.~Duflo, C.~Hansen, and W.~Newey
  (2017).
\newblock Double/debiased/neyman machine learning of treatment effects.
\newblock {\em American Economic Review\/}~{\em 107\/}(5), 261--65.

\bibitem[\protect\citeauthoryear{Chernozhukov, Chetverikov, Demirer, Duflo,
  Hansen, Newey, and Robins}{Chernozhukov
  et~al.}{2018}]{chernozhukov2018double}
Chernozhukov, V., D.~Chetverikov, M.~Demirer, E.~Duflo, C.~Hansen, W.~Newey,
  and J.~Robins (2018).
\newblock Double/debiased machine learning for treatment and structural
  parameters.
\newblock {\em The Econometrics Journal\/}~{\em 21\/}(1), C1--C68.

\bibitem[\protect\citeauthoryear{Duarte, Finkelstein, Knox, Mummolo, and
  Shpitser}{Duarte et~al.}{2021}]{duarte2021automated}
Duarte, G., N.~Finkelstein, D.~Knox, J.~Mummolo, and I.~Shpitser (2021).
\newblock An automated approach to causal inference in discrete settings.
\newblock {\em arXiv preprint arXiv:2109.13471\/}.

\bibitem[\protect\citeauthoryear{Elwert and Winship}{Elwert and
  Winship}{2014}]{elwert2014endogenous}
Elwert, F. and C.~Winship (2014).
\newblock Endogenous selection bias: The problem of conditioning on a collider
  variable.
\newblock {\em Annual review of sociology\/}~{\em 40}, 31.

\bibitem[\protect\citeauthoryear{Frangakis and Rubin}{Frangakis and
  Rubin}{2002}]{frangakis_rubin2002}
Frangakis, C.~E. and D.~B. Rubin (2002).
\newblock Principal stratification in causal inference.
\newblock {\em Biometrics\/}~{\em 58}, 21--29.

\bibitem[\protect\citeauthoryear{Gulzar, Haas, and Pasquale}{Gulzar
  et~al.}{2020}]{gulzar2020does}
Gulzar, S., N.~Haas, and B.~Pasquale (2020).
\newblock Does political affirmative action work, and for whom? theory and
  evidence on india’s scheduled areas.
\newblock {\em American Political Science Review\/}~{\em 114\/}(4), 1230--1246.

\bibitem[\protect\citeauthoryear{Hall, Huff, and Kuriwaki}{Hall
  et~al.}{2019}]{hall2019wealth}
Hall, A.~B., C.~Huff, and S.~Kuriwaki (2019).
\newblock Wealth, slaveownership, and fighting for the confederacy: An
  empirical study of the american civil war.
\newblock {\em American Political Science Review\/}~{\em 113\/}(3), 658--673.

\bibitem[\protect\citeauthoryear{Heckman}{Heckman}{1979}]{heckman1979sample}
Heckman, J.~J. (1979).
\newblock Sample selection bias as a specification error.
\newblock {\em Econometrica: Journal of the econometric society\/}, 153--161.

\bibitem[\protect\citeauthoryear{Honaker, King, and Blackwell}{Honaker
  et~al.}{2011}]{honaker2011amelia}
Honaker, J., G.~King, and M.~Blackwell (2011).
\newblock Amelia ii: A program for missing data.
\newblock {\em Journal of statistical software\/}~{\em 45}, 1--47.

\bibitem[\protect\citeauthoryear{Honor{\'e} and Hu}{Honor{\'e} and
  Hu}{2020}]{honore2020selection}
Honor{\'e}, B.~E. and L.~Hu (2020).
\newblock Selection without exclusion.
\newblock {\em Econometrica\/}~{\em 88\/}(3), 1007--1029.

\bibitem[\protect\citeauthoryear{Imbens and Manski}{Imbens and
  Manski}{2004}]{imbens2004confidence}
Imbens, G.~W. and C.~F. Manski (2004).
\newblock Confidence intervals for partially identified parameters.
\newblock {\em Econometrica\/}~{\em 72\/}(6), 1845--1857.

\bibitem[\protect\citeauthoryear{Jacobson}{Jacobson}{2015}]{jacobson2015campaigns}
Jacobson, G.~C. (2015).
\newblock How do campaigns matter?
\newblock {\em Annual Review of Political Science\/}~{\em 18}, 31--47.

\bibitem[\protect\citeauthoryear{Kalla and Broockman}{Kalla and
  Broockman}{2018}]{kalla2018minimal}
Kalla, J.~L. and D.~E. Broockman (2018).
\newblock The minimal persuasive effects of campaign contact in general
  elections: Evidence from 49 field experiments.
\newblock {\em American Political Science Review\/}~{\em 112\/}(1), 148--166.

\bibitem[\protect\citeauthoryear{Kalla and Broockman}{Kalla and
  Broockman}{2022}]{kalla2022outside}
Kalla, J.~L. and D.~E. Broockman (2022).
\newblock “outside lobbying” over the airwaves: A randomized field
  experiment on televised issue ads.
\newblock {\em American Political Science Review\/}~{\em 116\/}(3), 1126--1132.

\bibitem[\protect\citeauthoryear{Kim, Londregan, and Ratkovic}{Kim
  et~al.}{2019}]{kim2019effects}
Kim, I.~S., J.~Londregan, and M.~Ratkovic (2019).
\newblock The effects of political institutions on the extensive and intensive
  margins of trade.
\newblock {\em International Organization\/}~{\em 73\/}(4), 755--792.

\bibitem[\protect\citeauthoryear{Knox, Lowe, and Mummolo}{Knox
  et~al.}{2020}]{knox2020administrative}
Knox, D., W.~Lowe, and J.~Mummolo (2020).
\newblock Administrative records mask racially biased policing.
\newblock {\em American Political Science Review\/}~{\em 114\/}(3), 619--637.

\bibitem[\protect\citeauthoryear{Lee}{Lee}{2002}]{lee2002trimming}
Lee, D.~S. (2002).
\newblock Trimming for bounds on treatment effects with missing outcomes.

\bibitem[\protect\citeauthoryear{Lee}{Lee}{2009}]{lee2009training}
Lee, D.~S. (2009).
\newblock Training, wages, and sample selection: Estimating sharp bounds on
  treatment effects.
\newblock {\em The Review of Economic Studies\/}~{\em 76\/}(3), 1071--1102.

\bibitem[\protect\citeauthoryear{Li, Shen, Li, and Robins}{Li
  et~al.}{2013}]{li2013weighting}
Li, L., C.~Shen, X.~Li, and J.~M. Robins (2013).
\newblock On weighting approaches for missing data.
\newblock {\em Statistical methods in medical research\/}~{\em 22\/}(1),
  14--30.

\bibitem[\protect\citeauthoryear{Liu}{Liu}{2021}]{liu2021latent}
Liu, N. (2021).
\newblock A latent factor approach to missing not at random.

\bibitem[\protect\citeauthoryear{Manski}{Manski}{2019}]{manski2019communicating}
Manski, C.~F. (2019).
\newblock Communicating uncertainty in policy analysis.
\newblock {\em Proceedings of the National Academy of Sciences\/}~{\em
  116\/}(16), 7634--7641.

\bibitem[\protect\citeauthoryear{Molinari}{Molinari}{2020}]{molinari2020microeconometrics}
Molinari, F. (2020).
\newblock Microeconometrics with partial identification.
\newblock {\em Handbook of econometrics\/}~{\em 7}, 355--486.

\bibitem[\protect\citeauthoryear{Montgomery, Nyhan, and Torres}{Montgomery
  et~al.}{2018}]{montgomery2018conditioning}
Montgomery, J.~M., B.~Nyhan, and M.~Torres (2018).
\newblock How conditioning on posttreatment variables can ruin your experiment
  and what to do about it.
\newblock {\em American Journal of Political Science\/}~{\em 62\/}(3),
  760--775.

\bibitem[\protect\citeauthoryear{Montgomery and Olivella}{Montgomery and
  Olivella}{2018}]{montgomery2018tree}
Montgomery, J.~M. and S.~Olivella (2018).
\newblock Tree-based models for political science data.
\newblock {\em American Journal of Political Science\/}~{\em 62\/}(3),
  729--744.

\bibitem[\protect\citeauthoryear{Olma}{Olma}{2020}]{olma2020nonparametric}
Olma, T. (2020).
\newblock Nonparametric estimation of truncated conditional expectation
  functions.
\newblock Technical report, University of Bonn and University of Mannheim,
  Germany.

\bibitem[\protect\citeauthoryear{Paulsen, Scheve, and Stasavage}{Paulsen
  et~al.}{2023}]{paulsen2023foundations}
Paulsen, T., K.~Scheve, and D.~Stasavage (2023).
\newblock Foundations of a new democracy: schooling, inequality, and voting in
  the early republic.
\newblock {\em American Political Science Review\/}~{\em 117\/}(2), 518--536.

\bibitem[\protect\citeauthoryear{Ratkovic}{Ratkovic}{2021}]{ratkovic2021relaxing}
Ratkovic, M. (2021).
\newblock Relaxing assumptions, improving inference: Utilizing machine learning
  for valid causal inference.

\bibitem[\protect\citeauthoryear{Santoro and Broockman}{Santoro and
  Broockman}{2022}]{santoro2022promise}
Santoro, E. and D.~E. Broockman (2022).
\newblock The promise and pitfalls of cross-partisan conversations for reducing
  affective polarization: Evidence from randomized experiments.
\newblock {\em Science advances\/}~{\em 8\/}(25), eabn5515.

\bibitem[\protect\citeauthoryear{Semenova}{Semenova}{2020}]{semenova2020better}
Semenova, V. (2020).
\newblock Better lee bounds.
\newblock {\em arXiv preprint arXiv:2008.12720\/}.

\bibitem[\protect\citeauthoryear{Slough}{Slough}{2022}]{slough2022phantom}
Slough, T. (2022).
\newblock Phantom counterfactuals.
\newblock {\em American Journal of Political Science forthcoming. URL:
  http://taraslough. com/assets/pdf/phantom\_counterfactuals. pdf\/}.

\bibitem[\protect\citeauthoryear{Spilker, Bernauer, Kim, Milner, Osgood, and
  Tingley}{Spilker et~al.}{2018}]{spilker2018trade}
Spilker, G., T.~Bernauer, I.~S. Kim, H.~Milner, I.~Osgood, and D.~Tingley
  (2018).
\newblock Trade at the margin: Estimating the economic implications of
  preferential trade agreements.
\newblock {\em The Review of International Organizations\/}~{\em 13}, 189--242.

\bibitem[\protect\citeauthoryear{Staub}{Staub}{2014}]{staub2014causal}
Staub, K.~E. (2014).
\newblock A causal interpretation of extensive and intensive margin effects in
  generalized tobit models.
\newblock {\em Review of Economics and Statistics\/}~{\em 96\/}(2), 371--375.

\bibitem[\protect\citeauthoryear{Stoye}{Stoye}{2009}]{stoye2009more}
Stoye, J. (2009).
\newblock More on confidence intervals for partially identified parameters.
\newblock {\em Econometrica\/}~{\em 77\/}(4), 1299--1315.

\bibitem[\protect\citeauthoryear{Vytlacil}{Vytlacil}{2002}]{vytlacil2002independence}
Vytlacil, E. (2002).
\newblock Independence, monotonicity, and latent index models: An equivalence
  result.
\newblock {\em Econometrica\/}~{\em 70\/}(1), 331--341.

\bibitem[\protect\citeauthoryear{Wager and Athey}{Wager and
  Athey}{2018}]{wager2018estimation}
Wager, S. and S.~Athey (2018).
\newblock Estimation and inference of heterogeneous treatment effects using
  random forests.
\newblock {\em Journal of the American Statistical Association\/}~{\em
  113\/}(523), 1228--1242.

\end{thebibliography}


\begin{thebibliography}{}

\bibitem[\protect\citeauthoryear{Athey, Tibshirani, and Wager}{Athey
  et~al.}{2019}]{athey2019generalized}
Athey, S., J.~Tibshirani, and S.~Wager (2019).
\newblock Generalized random forests.
\newblock {\em The Annals of Statistics\/}~{\em 47\/}(2), 1148--1178.

\bibitem[\protect\citeauthoryear{Belloni, Chernozhukov, Fern{\'a}ndez-Val, and
  Hansen}{Belloni et~al.}{2017}]{belloni2017program}
Belloni, A., V.~Chernozhukov, I.~Fern{\'a}ndez-Val, and C.~Hansen (2017).
\newblock Program evaluation and causal inference with high-dimensional data.
\newblock {\em Econometrica\/}~{\em 85\/}(1), 233--298.

\bibitem[\protect\citeauthoryear{Blattman and Annan}{Blattman and
  Annan}{2010}]{blattman2010consequences}
Blattman, C. and J.~Annan (2010).
\newblock The consequences of child soldiering.
\newblock {\em The review of economics and statistics\/}~{\em 92\/}(4),
  882--898.

\bibitem[\protect\citeauthoryear{Chernozhukov, Chetverikov, Demirer, Duflo,
  Hansen, Newey, and Robins}{Chernozhukov
  et~al.}{2018}]{chernozhukov2018double}
Chernozhukov, V., D.~Chetverikov, M.~Demirer, E.~Duflo, C.~Hansen, W.~Newey,
  and J.~Robins (2018).
\newblock Double/debiased machine learning for treatment and structural
  parameters.
\newblock {\em The Econometrics Journal\/}~{\em 21\/}(1), C1--C68.

\bibitem[\protect\citeauthoryear{Imbens and Manski}{Imbens and
  Manski}{2004}]{imbens2004confidence}
Imbens, G.~W. and C.~F. Manski (2004).
\newblock Confidence intervals for partially identified parameters.
\newblock {\em Econometrica\/}~{\em 72\/}(6), 1845--1857.

\bibitem[\protect\citeauthoryear{Kalla and Broockman}{Kalla and
  Broockman}{2022}]{kalla2022outside}
Kalla, J.~L. and D.~E. Broockman (2022).
\newblock “outside lobbying” over the airwaves: A randomized field
  experiment on televised issue ads.
\newblock {\em American Political Science Review\/}~{\em 116\/}(3), 1126--1132.

\bibitem[\protect\citeauthoryear{Lee}{Lee}{2002}]{lee2002trimming}
Lee, D.~S. (2002).
\newblock Trimming for bounds on treatment effects with missing outcomes.

\bibitem[\protect\citeauthoryear{Lee}{Lee}{2009}]{lee2009training}
Lee, D.~S. (2009).
\newblock Training, wages, and sample selection: Estimating sharp bounds on
  treatment effects.
\newblock {\em The Review of Economic Studies\/}~{\em 76\/}(3), 1071--1102.

\bibitem[\protect\citeauthoryear{Olma}{Olma}{2020}]{olma2020nonparametric}
Olma, T. (2020).
\newblock Nonparametric estimation of truncated conditional expectation
  functions.
\newblock Technical report, University of Bonn and University of Mannheim,
  Germany.

\bibitem[\protect\citeauthoryear{Robins and Ritov}{Robins and
  Ritov}{1997}]{robins1997toward}
Robins, J.~M. and Y.~Ritov (1997).
\newblock Toward a curse of dimensionality appropriate (coda) asymptotic theory
  for semi-parametric models.
\newblock {\em Statistics in medicine\/}~{\em 16\/}(3), 285--319.

\bibitem[\protect\citeauthoryear{Santoro and Broockman}{Santoro and
  Broockman}{2022}]{santoro2022promise}
Santoro, E. and D.~E. Broockman (2022).
\newblock The promise and pitfalls of cross-partisan conversations for reducing
  affective polarization: Evidence from randomized experiments.
\newblock {\em Science advances\/}~{\em 8\/}(25), eabn5515.

\bibitem[\protect\citeauthoryear{Semenova}{Semenova}{2020}]{semenova2020better}
Semenova, V. (2020).
\newblock Better lee bounds.
\newblock {\em arXiv preprint arXiv:2008.12720\/}.

\bibitem[\protect\citeauthoryear{Stoye}{Stoye}{2009}]{stoye2009more}
Stoye, J. (2009).
\newblock More on confidence intervals for partially identified parameters.
\newblock {\em Econometrica\/}~{\em 77\/}(4), 1299--1315.

\bibitem[\protect\citeauthoryear{Wager and Athey}{Wager and
  Athey}{2018}]{wager2018estimation}
Wager, S. and S.~Athey (2018).
\newblock Estimation and inference of heterogeneous treatment effects using
  random forests.
\newblock {\em Journal of the American Statistical Association\/}~{\em
  113\/}(523), 1228--1242.

\end{thebibliography}

\end{document}


\renewcommand\thesection{\Alph{section}}

\begin{center}
{\Large\bf Supplementary Information \emph{for}}\\\bigskip
{\large\bf Generalizing Trimming Bounds for Endogenously Missing Outcome Data Using Random Forests}\\\bigskip
{\normalsize Cyrus Samii (NYU)\hspace{2em}Ye Wang (UNC)\hspace{2em} Junlong Aaron Zhou (Tencent America)}
\end{center}
\bigskip

\vspace{2em}

\noindent\hspace{0em}{\large\bf \underline{Table of Contents}}

{\bf
\begin{enumerate}\itemsep0ex
    \item[A.] Proofs
    \begin{enumerate}\itemsep0ex\vspace{-0.5em}
    	\item[A.1.] Improvement from incorporating covariates
            \item[A.2.] Moment conditions in the general case
            \item[A.3.] Statistical theory
    	\item[A.4.] Unit missingness and missing covariates
	  \end{enumerate}
    \item[B.] Extra results
    \begin{enumerate}\itemsep0ex\vspace{-0.5em}
      \item[B.1.] Extra results from simulation
      \item[B.2.] Extra results from applications
      \item[B.3.] Revisit the Job Corps experiment
      \item[B.4.] Replication results of Kalla and Broockman (2022)
    \end{enumerate}
\end{enumerate}
}
\clearpage

\small

\newpage
\section{Proofs}\label{appx:A}
\subsection{Improvement from incorporating covariates}\label{appx:A1}
Here we show how incorporating the information from covariates reduces the length of the partially identified set (i.e., the length of the interval between the bounds). We first re-state a lemma from \citet{lee2002trimming}:
\begin{lemma}\label{lemma:mono}
Suppose $f_{Y}(y) = qf^{\dagger}_{Y}(y) + (1-q)f^{\ddagger}_{Y}(y)$, then
\begin{align*}
    & \frac{1}{q} \int_{-\infty}^{y_{q}}y f_{Y}(y)dy \leq \int y f_{Y}^{\dagger}(y)dy, \\
    & \frac{1}{1-q} \int_{y_{1-q}}^{\infty}y f_{Y}(y)dy \geq \int y f_{Y}^{\ddagger}(y)dy,
\end{align*}
where $\int_{-\infty}^{y_{q}} f_{Y}(y)dy = q$.
\end{lemma}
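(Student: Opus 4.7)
The plan is to prove the first inequality via a rearrangement argument: the lower truncation $\tfrac{1}{q}\int_{-\infty}^{y_q} y f_Y(y)\,dy$ is the expectation of $Y$ under the density obtained by redistributing the mass of $f_Y$ onto the lowest $q$-quantile set, and no other way of extracting a sub-density of total mass $q$ from $f_Y$ can yield a smaller mean. The second inequality follows by an identical mirror argument applied to the upper tail.

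Concretely, I would define the signed function
\[
g(y) \;=\; q\,f^{\dagger}_Y(y) \;-\; f_Y(y)\mathbf{1}\{y\le y_q\},
\]
and first observe that $\int g(y)\,dy = q - q = 0$, since $f^{\dagger}_Y$ is a density and $y_q$ is the $q$-th quantile of $f_Y$. Next I would use the mixture decomposition $f_Y = q f^{\dagger}_Y + (1-q) f^{\ddagger}_Y$ together with nonnegativity of $f^{\ddagger}_Y$ to conclude $q f^{\dagger}_Y(y) \le f_Y(y)$ pointwise; hence $g(y)\le 0$ for $y\le y_q$, while $g(y)=q f^{\dagger}_Y(y)\ge 0$ for $y>y_q$. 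This single-crossing (or ``cut'') property is the heart of the argument.

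From here the inequality is immediate: the product $(y-y_q)g(y)$ is nonnegative everywhere, since on $\{y\le y_q\}$ both factors are nonpositive and on $\{y>y_q\}$ both are nonnegative. Therefore
\[
0 \;\le\; \int (y - y_q)\,g(y)\,dy \;=\; \int y\,g(y)\,dy \;-\; y_q\!\int g(y)\,dy \;=\; \int y\,g(y)\,dy,
\]
which rearranges precisely to $\int_{-\infty}^{y_q} y f_Y(y)\,dy \le q\int y f^{\dagger}_Y(y)\,dy$. Dividing by $q$ delivers the first claim. For the second inequality I would set $\tilde g(y) = (1-q) f^{\ddagger}_Y(y) - f_Y(y)\mathbf{1}\{y\ge y_{1-q}\}$ and repeat the analysis with the inequality signs reversed, since now $\tilde g\ge 0$ below $y_{1-q}$ and $\tilde g\le 0$ above it, making $(y - y_{1-q})\tilde g(y)\le 0$ throughout.

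The only subtlety I anticipate is handling cases where $F_Y$ is not strictly increasing or has atoms, so that $y_q$ is not uniquely defined; but any choice of $y_q$ with $\int_{-\infty}^{y_q} f_Y = q$ satisfies the cut property used above, so the argument goes through without modification. No differentiability or continuity of $f^{\dagger}_Y$ or $f^{\ddagger}_Y$ is needed beyond their being valid (nonnegative, integrable) densities in the mixture.
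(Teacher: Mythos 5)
Your proof is correct and rests on the same key fact as the paper's: the pointwise bound $q f_Y^{\dagger}(y) \le f_Y(y)$ forces the sign of $g(y) = q f_Y^{\dagger}(y) - f_Y(y)\mathbf{1}\{y\le y_q\}$ to flip exactly at $y_q$, which is precisely what the paper exploits when it rewrites the difference as a gap between two conditional expectations over the ordered supports $(-\infty,y_q]$ and $[y_q,\infty)$. Your finishing step --- integrating $(y-y_q)g(y)\ge 0$ against the mean-zero $g$ --- is a slightly cleaner packaging of the same idea, and it has the minor advantage of avoiding the paper's normalization by $\kappa=\int_{y_q}^{\infty} f_Y^{\dagger}(y)\,dy$, which degenerates when $f_Y^{\dagger}$ is supported entirely below $y_q$.
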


\begin{proof}
Let's denote $\int_{y_{q}}^{\infty} f_{Y}^{\dagger}(y)dy$ as $\kappa$, then
\begin{align*}
    & \frac{1}{\kappa}\left[\frac{1}{q} \int_{-\infty}^{y_{q}}y f_{Y}(y)dy - \int y f_{Y}^{\dagger}(y)dy \right] \\
    = & \int_{-\infty}^{y_{q}}y \frac{f_{Y}(y)/q - f_{Y}^{\dagger}(y)}{\kappa}dy - \int_{y_{q}}^{\infty} y \frac{f_{Y}^{\dagger}(y)}{\kappa}dy.
\end{align*}
Note that $\int_{-\infty}^{y_{q}} \frac{f_{Y}(y)/q - f_{Y}^{\dagger}(y)}{\kappa}dy = \frac{1}{\kappa} - \frac{1 - \kappa}{\kappa} = 1$ and $\int_{y_{q}}^{\infty} \frac{f_{Y}^{\dagger}(y)}{\kappa}dy = 1$, hence the first term is the expectation of $Y$ over $[-\infty, y_q]$ and the second is the expectation of $Y$ over $[y_q, \infty]$. As the first support is below the second one, the first expectation is always smaller than the second as long as $f_{Y}^{\dagger}(y)$ is not degenerate. The other inequality can be similarly proved.

\end{proof}


For the lower bound, we define
\begin{align*}
    & f_{Y|D=0, S=1, \mathbf{X} = \mathbf{x}}^{\dagger}(y) = f_{Y|D=0, S=1, \mathbf{X} = \mathbf{x}}(y) \mathbf{1}\{y \leq y_{q(\mathbf{x})}(\mathbf{x})\}/q(\mathbf{x}) \\
    & f_{Y|D=0, S=1, \mathbf{X} = \mathbf{x}}^{\ddagger}(y) = f_{Y|D=0, S=1, \mathbf{X} = \mathbf{x}}(y) \mathbf{1}\{y \geq y_{q(\mathbf{x})}(\mathbf{x})\}/(1-q(\mathbf{x})) \\
    & f_{Y|D=0, S=1}^{\dagger}(y) = \int f_{Y|D=0, S=1, \mathbf{X} = \mathbf{x}}^{\dagger}(y)\frac{q(\mathbf{x})}{q}  f_{\mathbf{X}|D=0, S=1}(\mathbf{x})d\mathbf{x} \\
    & f_{Y|D=0, S=1}^{\ddagger}(y) = \int f_{Y|D=0, S=1, \mathbf{X} = \mathbf{x}}^{\ddagger}(y) \frac{1-q(\mathbf{x})}{1-q} f_{\mathbf{X}|D=0, S=1}(\mathbf{x})d\mathbf{x}.
\end{align*}
Then, by definition,
\begin{align*}
 f_{Y|D=0, S=1}(y) = & \int f_{Y|D=0,S=1, \mathbf{X} = \mathbf{x}}(y) f_{\mathbf{X}|D=0, S=1}(\mathbf{x})d\mathbf{x} \\
    = & \int f_{Y|D=0, S=1, \mathbf{X} = \mathbf{x}}^{\dagger}(y) q(\mathbf{x})  f_{\mathbf{X}|D=0, S=1}(\mathbf{x})d\mathbf{x} \\
    & + \int f_{Y|D=0, S=1, \mathbf{X} = \mathbf{x}}^{\ddagger}(y) (1-q(\mathbf{x}))  f_{\mathbf{X}|D=0, S=1}(\mathbf{x})d\mathbf{x} \\
    = & q \int f_{Y|D=0, S=1, \mathbf{X} = \mathbf{x}}^{\dagger}(y)\frac{q(\mathbf{x})}{q}  f_{\mathbf{X}|D=0, S=1}(\mathbf{x})d\mathbf{x} \\
    & + (1-q) \int f_{Y|D=0, S=1, \mathbf{X} = \mathbf{x}}^{\ddagger}(y) \frac{1-q(\mathbf{x})}{1-q} f_{\mathbf{X}|D=0, S=1}(\mathbf{x})d\mathbf{x} \\
    = & q f_{Y|D=0, S=1}^{\dagger}(y) + (1-q)f_{Y|D=0, S=1}^{\ddagger}(y)
\end{align*}
From Lemma \ref{lemma:mono}, we know that
\begin{align*}
    & E[Y_{i} | D_i = 0, S_i = 1, Y_i \leq y_{q}] = \frac{1}{q}\int_{-\infty}^{y_{q}}y f_{Y|D=0, S=1}(y)dy \\
    \leq & \int y f_{Y|D=0, S=1}^{\dagger}(y)dy = \int \int yf_{Y|D=0, S=1, \mathbf{X} = \mathbf{x}}^{\dagger}(y)\frac{q(\mathbf{x})}{q}  f_{\mathbf{X}|D=0, S=1}(\mathbf{x})d\mathbf{x}dy \\
    = & \int \theta_{0}^{L}(\mathbf{x})\frac{q(\mathbf{x})}{q}  f_{\mathbf{X}|D=0, S=1}(\mathbf{x})d\mathbf{x} \\
    = & \int \theta_{0}^{L}(\mathbf{x})f_{\mathbf{X}|D=1, S=1}(\mathbf{x})d\mathbf{x} \\
    = & E[E[Y_{i} | D_i = 0, \mathbf{X}_i = \mathbf{x}, Y_i \leq y_{q(\mathbf{x})}(\mathbf{x})]]
\end{align*}
The second to the last equality holds since by definition
\begin{align*}
    q(\mathbf{x}) = & \frac{P[S=1, D=1 | \mathbf{X}=\mathbf{x}]}{p(\mathbf{x})} = \frac{P[S=1, D=1, \mathbf{X}=\mathbf{x}]}{p(\mathbf{x})P[\mathbf{X}=\mathbf{x}]} \\
    = & \frac{P[\mathbf{X}=\mathbf{x} | S=1, D=1]P[S=1, D=1]}{p(\mathbf{x})P[\mathbf{X}=\mathbf{x}]} \\
    = & \frac{f_{\mathbf{X}|D=1, S=1}(\mathbf{x})P[S=1, D=1]}{p(\mathbf{x})f_{\mathbf{X}|D=0, S=1}(\mathbf{x})} \\
    = & \frac{qf_{\mathbf{X}|D=1, S=1}(\mathbf{x})}{f_{\mathbf{X}|D=0, S=1}(\mathbf{x})}.
\end{align*}

From the derivation, we see that
\begin{align*}
    & E[Y_{i} | D_i = 0, S_i = 1, Y_i \leq y_{q}] -  E[E[Y_{i} | D_i = 0, \mathbf{X}_i = \mathbf{x}, Y_i \leq y_{q(\mathbf{x})}(\mathbf{x})]] \\ 
    = & \frac{1}{q}\int_{-\infty}^{y_{q}}y f_{Y|D=0, S=1}(y)dy - \int \theta_{0}^{L}(\mathbf{x})\frac{q(\mathbf{x})}{q}  f_{\mathbf{X}|D=0, S=1}(\mathbf{x})d\mathbf{x} \\
    = & \int \theta_{0}^{L}f_{\mathbf{X}|D=0, S=1}(\mathbf{x})d\mathbf{x} - \int \theta_{0}^{L}(\mathbf{x})\frac{q(\mathbf{x})}{q}  f_{\mathbf{X}|D=0, S=1}(\mathbf{x})d\mathbf{x} \\
    = & \int \frac{\theta_{0}^{L}q - \theta_{0}^{L}(\mathbf{x})q(\mathbf{x})}{q}f_{\mathbf{X}|D=0, S=1}(\mathbf{x})d\mathbf{x}
\end{align*}
Therefore, the improvement is more pronounced when the $\theta_{0}^{L}q - \theta_{0}^{L}(\mathbf{x})q(\mathbf{x})$ is small where $f_{\mathbf{X}|D=0, S=1}(\mathbf{x})$ is small.

\begin{figure}[htp]
 \begin{center}
\includegraphics[width=.48\linewidth, height=.8\linewidth]{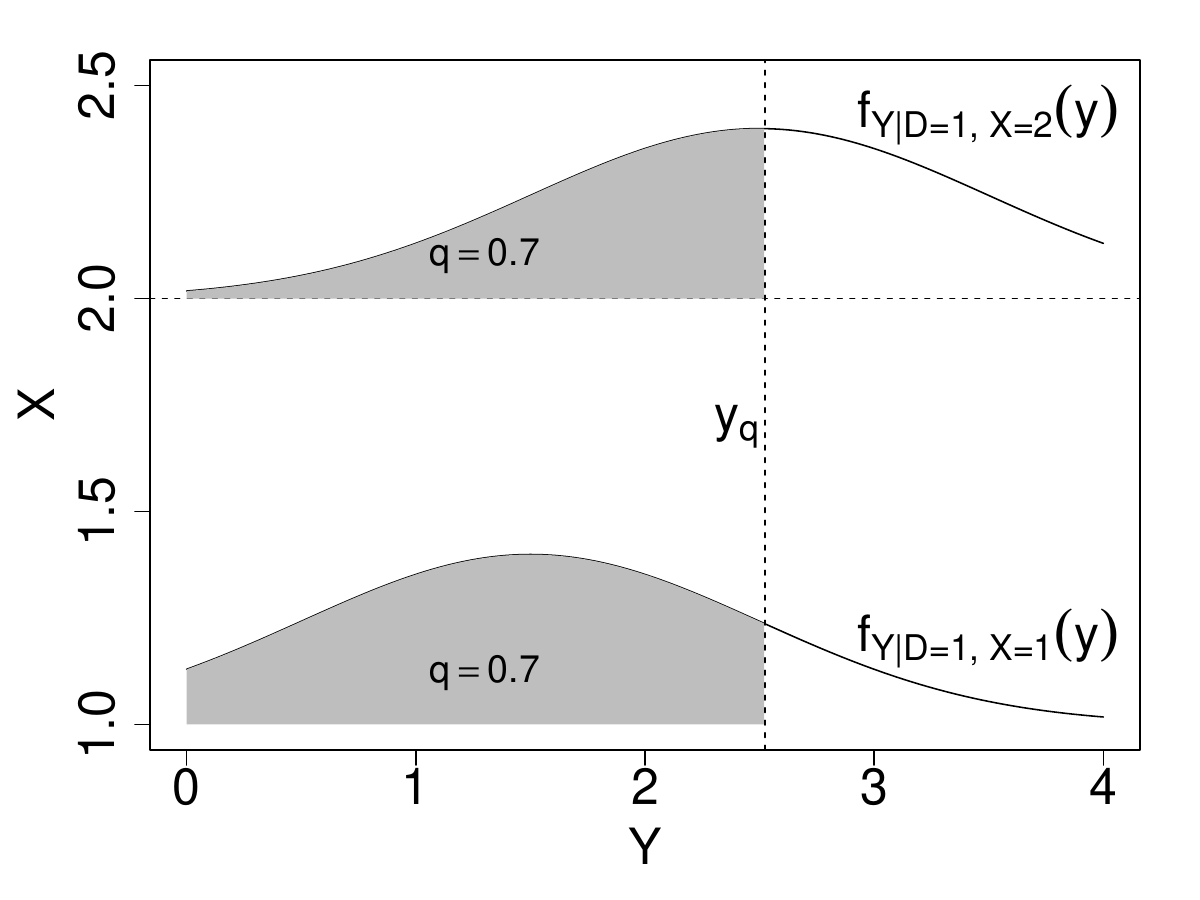}
\includegraphics[width=.48\linewidth, height=.8\linewidth]{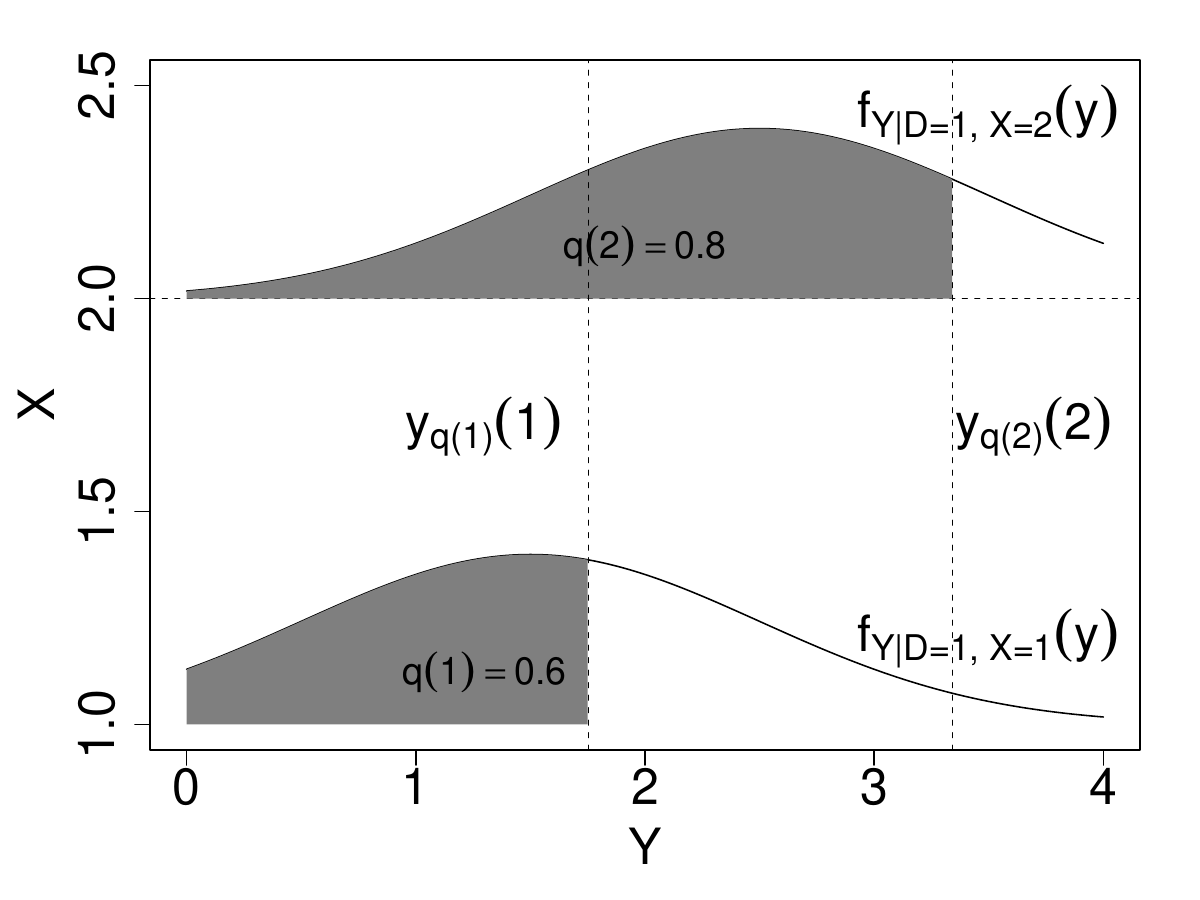} \\
\includegraphics[width=.48\linewidth, height=.8\linewidth]{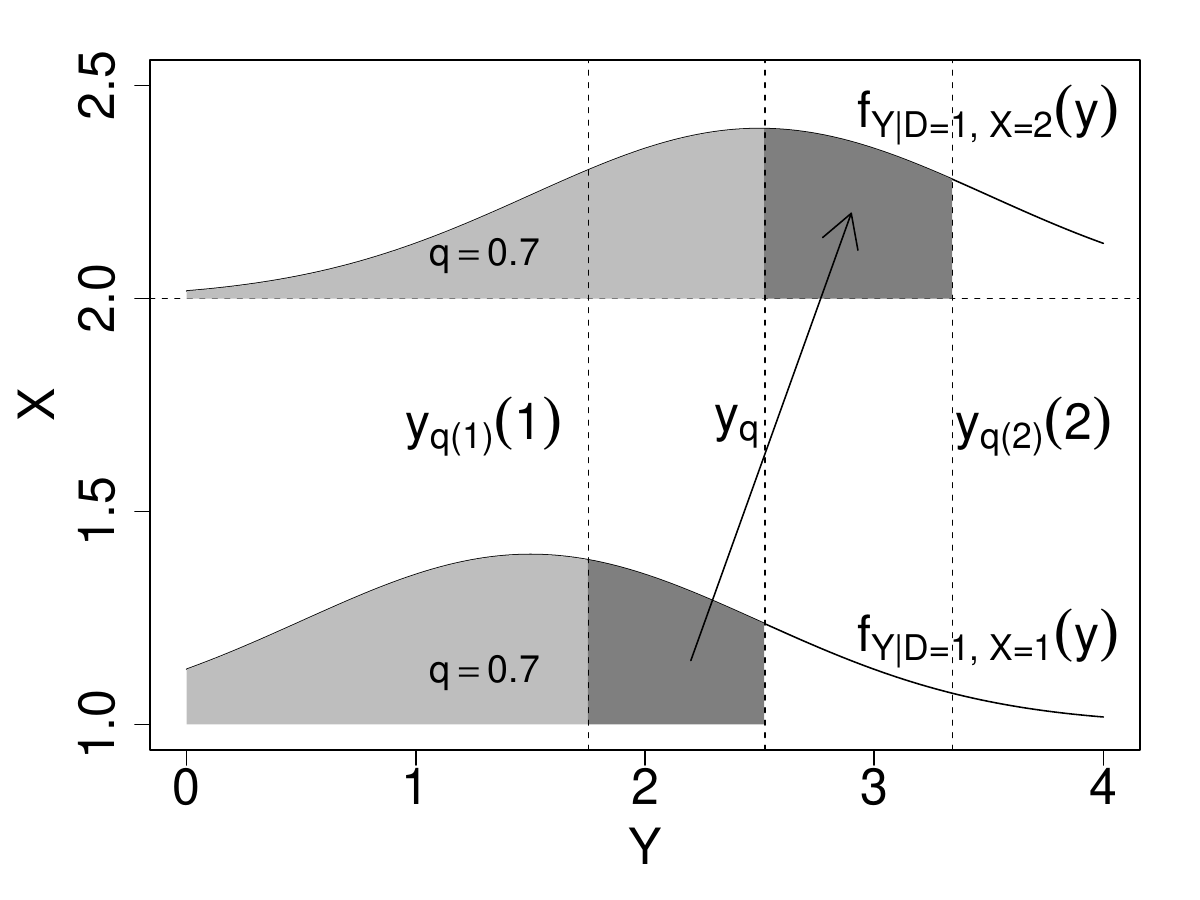}
\caption{An illustration of the CTB}
\label{fig_app_ctb_exp}
 \end{center}
\end{figure}

Let's consider a simple case where $X$ takes two values $\{1, 2\}$ with equal probability (0.5). Suppose $q(1) = 0.2$, $q(2) = 0.4$, then $q = 0.3$. Note that
$$
f_{Y|D=0, S=1}(y) = 0.5*f_{Y|D=0, S=1, X=1}(y) + 0.5*f_{Y|D=0, S=1, X=2}(y).
$$
We illustrate why the CTB lead to a smaller identified set using Figure \ref{fig_app_ctb_exp}. The top-left plot shows how we construct the lower bound using the basic method, where the trimming probability equals 0.3 for both values of $X$. We take expectation of $Y_i(1)$ over the shaded region under each condition distribution and their average equals the lower bound of the treated outcome. The top-right plot shows the idea behind the CTB, where the trimming probability equals 0.2 when $X = 1$ and 0.4 when $X = 2$. We follow the same steps, taking conditional expectations first and then calculating their average. Note that in the two plots, the total area of the shaded region is the same. But the value of $Y_i(1)$ is strictly larger in the top-right plot, as illustrated by the plot at the bottom. This explains where the improvement comes from under the CTB.

\subsection{Moment conditions}\label{appx:A2}
We start from the scenario where Assumptions 1 and 2 are satisfied. It is equivalent to conditioning on the set $\mathcal{X}^{+}$ under Assumption 3. Following the terminology in the generalized method of moments, we denote $\nu(\mathbf{x}) = c(q_0(\mathbf{x}), q_1(\mathbf{x}), y_{q(\mathbf{x})}(\mathbf{x}), y_{1-q(\mathbf{x})}(\mathbf{x}))$ as the nuisance parameters and $\theta(\mathbf{x}) = c(\theta_0(\mathbf{x}), \theta_1^L(\mathbf{x}), \theta_1^U(\mathbf{x}))$ as the target parameters. Note that $q(\mathbf{x}) = \frac{q_0(\mathbf{x})}{q_1(\mathbf{x})}$ and $p(\mathbf{x}) = P(D = 1 | \mathbf{X} = \mathbf{x})$. To facilitate illustration, we focus on the modified target parameters $\tilde \theta(\mathbf{x}) = c(\tilde \theta_0(\mathbf{x}), \tilde \theta_1^L(\mathbf{x}), \tilde \theta_1^U(\mathbf{x})) = c(q_0(\mathbf{x})\theta_0(\mathbf{x}), q_0(\mathbf{x})\theta_1^L(\mathbf{x}), q_0(\mathbf{x})\theta_1^U(\mathbf{x}))$. We use the superscript $^0$ to denote the true value of a parameter. From their definition, we can see that the nuisance parameters satisfy the following local moment conditions:
$$
\begin{aligned}
& E\left[m^{(1)}(\mathbf{O}_i; \nu(\mathbf{x})) | \mathbf{X}_i = \mathbf{x}\right] = E\left[S_i(1 - D_i) - q_0(\mathbf{x})(1 - p(\mathbf{x})) | \mathbf{X}_i = \mathbf{x}\right],\\
& E\left[m^{(2)}(\mathbf{O}_i; \nu(\mathbf{x})) | \mathbf{X}_i = \mathbf{x}\right] = E\left[S_i D_i - q_1(\mathbf{x})p(\mathbf{x}) | \mathbf{X}_i = \mathbf{x}\right],\\
& E\left[m^{(3)}(\mathbf{O}_i; \nu(\mathbf{x})) | \mathbf{X}_i = \mathbf{x}\right] = E\left[S_i D_i\mathbf{1} \{Y_i \leq y_{q(\mathbf{x})}(\mathbf{x})\} - q_0(\mathbf{x})p(\mathbf{x}) | \mathbf{X}_i = \mathbf{x}\right],\\
& E\left[m^{(4)}(\mathbf{O}_i; \nu(\mathbf{x})) | \mathbf{X}_i = \mathbf{x}\right] = E\left[S_i D_i\mathbf{1} \{Y_i \geq y_{1-q(\mathbf{x})}(\mathbf{x})\} - q_0(\mathbf{x})p(\mathbf{x}) | \mathbf{X}_i = \mathbf{x}\right]. \\
\end{aligned}
$$
For the target parameters, we can show that under Assumptions 1 and 2, they satisfy the following local moment conditions:
$$
\begin{aligned}
& E\left[\psi^{(1)}(\mathbf{O}_i; \tilde \theta(\mathbf{x}), \nu(\mathbf{x})) | \mathbf{X}_i = \mathbf{x}\right] = E\left[S_{i}(1-D_{i})Y_{i} - (1-p(\mathbf{x}))\tilde \theta_{0}(\mathbf{x}) | \mathbf{X}_i = \mathbf{x}\right] \\
= & (1-p(\mathbf{x}))\left[q_0^0(\mathbf{x})\int yf_{Y|D=0, S=1, \mathbf{X} = \mathbf{x}}(y)dy - \tilde \theta_{0}(\mathbf{x}) \right], \\
& E\left[\psi^{(2)}(\mathbf{O}_i; \tilde \theta(\mathbf{x}), \nu(\mathbf{x})) | \mathbf{X}_i = \mathbf{x}\right] = E\left[ S_i D_i Y_{i} \mathbf{1}\{Y_i \leq y_{q(\mathbf{x})}(\mathbf{x})\} - p(\mathbf{x})\tilde \theta_{1}^{L}(\mathbf{x}) | \mathbf{X}_i = \mathbf{x}\right] \\
= & p(\mathbf{x})\left[q_1^0(\mathbf{x})\int_{-\infty}^{y_{q(\mathbf{x})}(\mathbf{x})} yf_{Y|D=1, S=1, \mathbf{X} = \mathbf{x}}(y)dy - \tilde \theta_{1}^L(\mathbf{x}) \right], \\
& E\left[\psi^{(3)}(\mathbf{O}_i; \tilde \theta(\mathbf{x}), \nu(\mathbf{x})) | \mathbf{X}_i = \mathbf{x}\right] = E\left[ S_i D_i Y_{i} \mathbf{1}\{Y_i \geq y_{1-q(\mathbf{x})}(\mathbf{x})\} - p(\mathbf{x})\tilde \theta_{1}^{U}(\mathbf{x}) | \mathbf{X}_i = \mathbf{x}\right] \\
= & p(\mathbf{x})\left[q_1^0(\mathbf{x})\int_{y_{1-q(\mathbf{x})}(\mathbf{x})}^{\infty} yf_{Y|D=1, S=1, \mathbf{X} = \mathbf{x}}(y)dy - \tilde \theta_{1}^U(\mathbf{x}) \right].
\end{aligned}
$$
When $\mathbf{X}$ is discrete or low-dimensional, we can approximate each of the moment conditions with stratification or kernel regression \citep{lee2009training, olma2020nonparametric}. We then solve $\nu(\mathbf{x})$ from their moment conditions and plug their values into the moment conditions above to solve $\theta(\mathbf{x})$. Nevertheless, with a large number of covariates, this approach suffers from the ``curse of dimensionality'' \citep{robins1997toward}. Therefore, we adopt the generalized random forest algorithm (\textit{grf}) \citep{wager2018estimation, athey2019generalized} to approximate each moment condition with the following expression:
$$
E[\psi_{\theta(\mathbf{x}), \nu(\mathbf{x})}(\mathbf{O}_i) | \mathbf{X}_i = \mathbf{x}] \approx \sum_{i=1}^N \alpha_i(\mathbf{x})\psi_{\theta(\mathbf{x}), \nu(\mathbf{x})}(\mathbf{O}_i)
$$
where $\psi(\cdot)$ represents the moment function, $\mathbf{O}_i = (Y_i, S_i, D_i)$, and $\alpha_i(\mathbf{x})$ is an adaptive kernel that weights the contribution of each observation $i$ to the moment condition evaluated at $\mathbf{x}$. 
The {\it grf} estimation of $\alpha_i(\mathbf{x})$ is ``honest'' in the sense that only half of each sub-sample is used to train the tree and the other half is reserved for making predictions. We can estimate any parameter by minimizing its approximated local moment condition's $L_2$ norm, $||\sum_{i=1}^N \alpha_i(\mathbf{x})\psi_{\theta(\mathbf{x}), \hat \nu(\mathbf{x})}(\mathbf{O}_i)||_2$.

Nevertheless, as illustrated in the main text, using the same sample to estimate both $\nu(\mathbf{x})$ and $\theta(\mathbf{x})$ leads to the regularization bias. It can be avoided if we implement both cross-fitting and Neyman orthogonalization \citep{belloni2017program, chernozhukov2018double}. The idea behind Neyman orthogonalization is to adjust the local moment conditions for $\theta(\mathbf{x})$ such that they are invariant to small perturbations of the estimated $\nu(\mathbf{x})$. To be more precise, If a moment condition $\psi(\cdot)$ satisfies Neyman orthogonality, then the following two equations must hold:
$$
E[\psi_{\theta^0(\mathbf{x}), \nu^0(\mathbf{x})}(\mathbf{O}_i) | \mathbf{X}_i = \mathbf{x}] = 0,
$$
and
$$
\frac{\partial}{\partial \nu} E[\psi_{\theta^{0}(\mathbf{x}), \nu(\mathbf{x})}(\mathbf{O}_i) | \mathbf{X}_i = \mathbf{x}] |_{\nu(\mathbf{x})=\nu^{0}(\mathbf{x})} = 0.
$$
The process of modifying a moment condition such that it satisfies the two equations is termed as Neyman orthogonalization.




Note that our algorithm is essentially a three-step process: we first estimate $q(\mathbf{x})$, then ($y_{q(\mathbf{x})}(\mathbf{x}), y_{1-q(\mathbf{x})}(\mathbf{x})$), finally the conditional or aggregated bounds, with the estimates of nuisance parameters plugged in. Therefore, to conduct Neyman orthogonalization, we first orthogonalize $m^{(3)}(\mathbf{O}_i; \nu(\mathbf{x}))$ and $m^{(4)}(\mathbf{O}_i; \nu(\mathbf{x}))$, and then $\psi^{(2)}(\mathbf{O}_i; \tilde \theta(\mathbf{x}), \nu(\mathbf{x}))$ and $\psi^{(3)}(\mathbf{O}_i; \tilde \theta(\mathbf{x}), \nu(\mathbf{x}))$ with regards to the nuisance parameters. We use $\tilde{m}^{(3)}(\mathbf{O}_i; \nu(\mathbf{x}))$, $\tilde{m}^{(4)}(\mathbf{O}_i; \nu(\mathbf{x}))$, $\tilde{\psi}^{(2)}(\mathbf{O}_i; \tilde \theta(\mathbf{x}), \nu(\mathbf{x}))$, and $\tilde{\psi}^{(3)}(\mathbf{O}_i; \tilde \theta(\mathbf{x}), \nu(\mathbf{x}))$ to denote the orthogonalized moments. We have
$$
\begin{aligned}
& \tilde{m}^{(3)}(\mathbf{O}_i; \nu(\mathbf{x})) = m^{(3)}(\mathbf{O}_i; \nu(\mathbf{x})) - \frac{p(\mathbf{x})}{1-p(\mathbf{x})}m^{(1)}(\mathbf{O}_i; \nu(\mathbf{x})) \\
= & S_i D_i\mathbf{1} \{Y_i \leq y_{q(\mathbf{x})}(\mathbf{x})\} - \frac{S_i(1-D_i)p(\mathbf{x})}{1-p(\mathbf{x})} \\
& \tilde{m}^{(4)}(\mathbf{O}_i; \nu(\mathbf{x})) = m^{(4)}(\mathbf{O}_i; \nu(\mathbf{x})) - \frac{p(\mathbf{x})}{1-p(\mathbf{x})}m^{(1)}(\mathbf{O}_i; \nu(\mathbf{x})) \\
= & S_i D_i\mathbf{1} \{Y_i \geq y_{1-q(\mathbf{x})}(\mathbf{x})\} - \frac{S_i(1-D_i)p(\mathbf{x})}{1-p(\mathbf{x})} \\
& \tilde{\psi}^{(2)}(\mathbf{O}; \tilde \theta(\mathbf{x}), \nu(\mathbf{x})) = \psi^{(2)}(\mathbf{O}; \tilde \theta(\mathbf{x}), \nu(\mathbf{x})) - y^0_{q(\mathbf{x})}(\mathbf{x})\tilde{m}^{(3)}_i(\mathbf{O}_i; \nu(\mathbf{x})) \\
= & S_i D_i (Y_{i} - y^0_{q(\mathbf{x})}(\mathbf{x})) \mathbf{1}\{Y_i \leq y_{q(\mathbf{x})}(\mathbf{x})\} - p(\mathbf{x})\tilde \theta_{1}^{L}(\mathbf{x}) + \frac{y^0_{q(\mathbf{x})}(\mathbf{x})p(\mathbf{x})S_i(1-D_i)}{1-p(\mathbf{x})} \\
& \tilde{\psi}^{(3)}(\mathbf{O}; \tilde \theta(\mathbf{x}), \nu(\mathbf{x})) = \psi^{(3)}(\mathbf{O}; \tilde \theta(\mathbf{x}), \nu(\mathbf{x})) - y^0_{1-q(\mathbf{x})}(\mathbf{x})\tilde{m}^{(4)}_i(\mathbf{O}_i; \nu(\mathbf{x})) \\
= & S_i D_i (Y_{i} - y^0_{1-q(\mathbf{x})}(\mathbf{x})) \mathbf{1}\{Y_i \geq y_{1-q(\mathbf{x})}(\mathbf{x})\} - p(\mathbf{x})\tilde \theta_{1}^{U}(\mathbf{x}) + \frac{y^0_{1-q(\mathbf{x})}(\mathbf{x})p(\mathbf{x})S_i(1-D_i)}{1-p(\mathbf{x})}.
\end{aligned}
$$
As each of the orthogonalized moments is a linear combination of two original moments, the first requirement is satisfied. To verify that the second requirement is also satisfied, first note that $\tilde{m}^{(3)}(\mathbf{O}_i; \nu(\mathbf{x}))$ and $\tilde{m}^{(4)}(\mathbf{O}_i; \nu(\mathbf{x}))$ no longer depend on $q(\mathbf{x})$. Furthermore,
$$
\begin{aligned}
& \frac{\partial E\left[\tilde{\psi}^{(2)}(\mathbf{O}_i; \tilde \theta(\mathbf{x}), \nu(\mathbf{x})) | \mathbf{X}_i = \mathbf{x}\right]}{\partial y_{q(\mathbf{x})}(\mathbf{x})}\Bigg|_{\nu(\mathbf{x}) = \nu^0(\mathbf{x})} \\
= & \frac{\partial E\left[\psi^{(2)}(\mathbf{O}_i; \tilde \theta(\mathbf{x}), \nu(\mathbf{x})) | \mathbf{X}_i = \mathbf{x}\right]}{\partial y_{q(\mathbf{x})}(\mathbf{x})}\Bigg|_{\nu(\mathbf{x}) = \nu^0(\mathbf{x})} - y^0_{q(\mathbf{x})}(\mathbf{x})\frac{\partial E\left[\tilde{m}^{(3)}(\mathbf{O}_i; \nu(\mathbf{x})) | \mathbf{X}_i = \mathbf{x}\right]}{\partial y_{q(\mathbf{x})}(\mathbf{x})}\Bigg|_{\nu(\mathbf{x}) = \nu^0(\mathbf{x})} \\
= & p(\mathbf{x})q_1^0(\mathbf{x}) y_{q(\mathbf{x})}^0(\mathbf{x})f_{Y|D=1, S=1, \mathbf{X} = \mathbf{x}}(y^0_{q(\mathbf{x})}(\mathbf{x})) - y_{q(\mathbf{x})}^0(\mathbf{x})p(\mathbf{x})q_1^0(\mathbf{x}) f_{Y|D=1, S=1, \mathbf{X} = \mathbf{x}}(y_{q(\mathbf{x})}^0(\mathbf{x})) = 0 \\
& \frac{\partial E\left[\psi^{(3)}(\mathbf{O}_i; \tilde \theta(\mathbf{x}), \nu(\mathbf{x})) | \mathbf{X}_i = \mathbf{x}\right]}{\partial y_{1-q(\mathbf{x})}(\mathbf{x})}\Bigg|_{\nu(\mathbf{x}) = \nu^0(\mathbf{x})} \\
= & \frac{\partial E\left[\psi^{(3)}(\mathbf{O}_i; \tilde \theta(\mathbf{x}), \nu(\mathbf{x})) | \mathbf{X}_i = \mathbf{x}\right]}{\partial y_{1-q(\mathbf{x})}(\mathbf{x})}\Bigg|_{\nu(\mathbf{x}) = \nu^0(\mathbf{x})} - y^0_{1-q(\mathbf{x})}(\mathbf{x})\frac{\partial E\left[\tilde{m}^{(4)}(\mathbf{O}_i; \nu(\mathbf{x})) | \mathbf{X}_i = \mathbf{x}\right]}{\partial y_{1-q(\mathbf{x})}(\mathbf{x})}\Bigg|_{\nu(\mathbf{x}) = \nu^0(\mathbf{x})} \\
= & -p(\mathbf{x})q_1^0(\mathbf{x}) y_{1-q(\mathbf{x})}^0(\mathbf{x})f_{Y|D=1, S=1, \mathbf{X} = \mathbf{x}}(y^0_{1-q(\mathbf{x})}(\mathbf{x})) \\
& + y_{1-q(\mathbf{x})}^0(\mathbf{x})p(\mathbf{x})q_1^0(\mathbf{x}) f_{Y|D=1, S=1, \mathbf{X} = \mathbf{x}}(y^0_{1-q(\mathbf{x})}(\mathbf{x})) = 0.
\end{aligned}
$$
Hence, $\tilde{m}^{(3)}(\mathbf{O}_i; \nu(\mathbf{x}))$, $\tilde{m}^{(4)}(\mathbf{O}_i; \nu(\mathbf{x}))$, $\tilde{\psi}^{(2)}(\mathbf{O}_i; \tilde \theta(\mathbf{x}), \nu(\mathbf{x}))$, and $\tilde{\psi}^{(3)}(\mathbf{O}_i; \tilde \theta(\mathbf{x}), \nu(\mathbf{x}))$ all satisfy Neyman orthogonality, which justifies our algorithm in the main text. It is worth noting that $E\left[\frac{S_i(1-D_i)}{1-p(\mathbf{x})} | \mathbf{X}_i = \mathbf{x}\right] = q_0^0(\mathbf{x})$. Thus, Neyman orthogonalization does not affect the estimation of $y_{q(\mathbf{x})}(\mathbf{x})$ or $y_{1-q(\mathbf{x})}(\mathbf{x})$. For the conditional lower bound $\tilde \theta_{1}^{L}(\mathbf{x})$, we know that
$$
\begin{aligned}
& E\left[\tilde{\psi}^{(2)}(\mathbf{O}; \tilde \theta(\mathbf{x}), \nu(\mathbf{x})) | \mathbf{X}_i = \mathbf{x}\right] \\
= & E\left[S_i D_i (Y_{i} - y^0_{q(\mathbf{x})}(\mathbf{x})) \mathbf{1}\{Y_i \leq y_{q(\mathbf{x})}(\mathbf{x})\} - p(\mathbf{x})\tilde \theta_{1}^{L}(\mathbf{x}) + y^0_{q(\mathbf{x})}(\mathbf{x})p(\mathbf{x})q_0^0(\mathbf{x}) | \mathbf{X}_i = \mathbf{x}\right] \\
= & E\left[(Y_{i} - y^0_{q(\mathbf{x})}(\mathbf{x})) \mathbf{1}\{Y_i \leq y_{q(\mathbf{x})}(\mathbf{x})\} | S_i = 1, D_i = 1, \mathbf{X}_i = \mathbf{x}\right]p(\mathbf{x})q_1^0(\mathbf{x}) \\
& - p(\mathbf{x})\tilde \theta_{1}^{L}(\mathbf{x}) + y^0_{q(\mathbf{x})}(\mathbf{x})p(\mathbf{x})q_0^0(\mathbf{x}).
\end{aligned}
$$
Therefore,
$$
\begin{aligned}
& \theta_{1}^{L}(\mathbf{x}) = \frac{E\left[(Y_{i} - y^0_{q(\mathbf{x})}(\mathbf{x})) \mathbf{1}\{Y_i \leq y_{q(\mathbf{x})}(\mathbf{x})\} | S_i = 1, D_i = 1, \mathbf{X}_i = \mathbf{x}\right]}{q^0(\mathbf{x})} + y^0_{q(\mathbf{x})},
\end{aligned}
$$
and a similar formula holds for $\theta_{1}^{U}(\mathbf{x})$.

For the aggregated bounds, we conduct Neyman orthogonalization on the unconditional moments, which leads to the following score functions:
$$
\begin{aligned}
& s^L(\mathbf{X}_i) = \frac{S_i D_i (Y_{i} - \hat y_{\hat q(\mathbf{X}_i)}(\mathbf{X}_i)) \mathbf{1}\{Y_i \leq \hat y_{\hat q(\mathbf{X}_i)}(\mathbf{X}_i))\}}{p(\mathbf{X}_i)} - \frac{S_i(1-D_i)(Y_i - \hat{y}_{\hat q(\mathbf{X}_i)}(\mathbf{X}_i))}{1-p(\mathbf{X}_i)} \\
& s^U(\mathbf{X}_i) = \frac{S_i D_i (Y_{i} - \hat{y}_{1-\hat{q}(\mathbf{X}_i)}(\mathbf{X}_i)) \mathbf{1}\{Y_i \geq \hat{y}_{1-\hat{q}(\mathbf{X}_i)}(\mathbf{X}_i)\}}{p(\mathbf{X}_i)} - \frac{S_i(1-D_i)(Y_i - \hat{y}_{1-\hat{q}(\mathbf{X}_i)}(\mathbf{X}_i))}{1-p(\mathbf{X}_i)}.
\end{aligned}
$$
Then,
$$
\begin{aligned}
\hat{\tau}^L_{CTB}(1,1) = & \frac{\frac{1}{N}\sum_{i=1}^N s^L(\mathbf{X}_i)}{\hat q_0}, \hat{\tau}^U_{CTB}(1,1) = \frac{\frac{1}{N}\sum_{i=1}^N s^U(\mathbf{X}_i)}{\hat q_0},
\end{aligned}
$$
where $\hat q_0 = \frac{1}{N} \sum_{i=1}^N \frac{S_i(1-D_i)}{1-p(\mathbf{X}_i)}$.

When the propensity scores need to be estimated, we have an extra moment condition
$$
E\left[m^{(0)}(\mathbf{O}_i; \nu(\mathbf{x})) | \mathbf{X}_i = \mathbf{x}\right] = E[D_i - p(\mathbf{x}) | \mathbf{X}_i = \mathbf{x}]  = 0.
$$
We need to orthogonalize our moment conditions to account for the estimation error from $m^{(0)}(\cdot)$. It turns out that there will be an extra correction term in $\tilde{\psi}^{(1)}(\cdot)$, $\tilde{\psi}^{(2)}(\cdot)$, and $\tilde{\psi}^{(3)}(\cdot)$, which are $\tilde \theta_0(\mathbf{x})[D_i - p(\mathbf{x})]$, $y_{q(\mathbf{x})}q_0(\mathbf{x})\left[\frac{D_i}{p(\mathbf{x})} - \frac{1-D_i}{1-p(\mathbf{x})}\right] - \tilde \theta_1^L(\mathbf{x})[D_i - p(\mathbf{x})]$ and $y_{1-q(\mathbf{x})}q_0(\mathbf{x})\left[\frac{D_i}{p(\mathbf{x})} - \frac{1-D_i}{1-p(\mathbf{x})}\right] - \tilde \theta_1^U(\mathbf{x})[D_i - p(\mathbf{x})]$, respectively.

If the assumption of monotonic selection holds for the other direction: $S_i(0) \geq S_i(1)$, then the moment conditions $m^{(3)}(\mathbf{O}_i; \nu(\mathbf{x}))$, $m^{(4)}(\mathbf{O}_i; \nu(\mathbf{x}))$, and $\psi(\mathbf{O};  \tilde \theta(\mathbf{x}), \nu(\mathbf{x}))$ become:
$$
\begin{aligned}
E\left[m^{(3)}_i(\mathbf{O}_i; \nu(\mathbf{x}))\right] & = E\left[S_i (1-D_i)\mathbf{1} \{Y_i \leq y_{q(\mathbf{x})}(\mathbf{x})\} - q_1^0(\mathbf{x})(1-p(\mathbf{x})) | \mathbf{X}_i = \mathbf{x}\right],\\
E\left[m^{(4)}_i(\mathbf{O}_i; \nu(\mathbf{x}))\right] & = E\left[S_i (1-D_i\mathbf{1}) \{Y_i \geq y_{1-q(\mathbf{x})}(\mathbf{x})\} - q_1^0(\mathbf{x})(1-p(\mathbf{x})) | \mathbf{X}_i = \mathbf{x}\right]\\
E\left[\psi^{(1)}(\mathbf{O}; \tilde \theta(\mathbf{x}), \nu(\mathbf{x}))\right] & = E\left[S_{i}D_{i}Y_{i} - p(\mathbf{x})\tilde \theta_{1}(\mathbf{x}) | \mathbf{X}_i = \mathbf{x}\right] \\
& = p(\mathbf{x})\left[q_1^0(\mathbf{x}) \int yf_{Y|D=1, S=1, \mathbf{X} = \mathbf{x}}(y)dy - \tilde \theta_{1}(\mathbf{x}) \right], \\
E\left[\psi^{(2)}(\mathbf{O}; \tilde \theta(\mathbf{x}), \nu(\mathbf{x}))\right] & = E\left[ S_i (1-D_i) Y_{i} \mathbf{1}\{Y_i \leq y_{q(\mathbf{x})}(\mathbf{x})\} - (1-p(\mathbf{x}))\tilde \theta_{0}^{L}(\mathbf{x}) | \mathbf{X}_i = \mathbf{x}\right] \\
& = (1-p(\mathbf{x}))\left[q_0^0(\mathbf{x})\int_{-\infty}^{y_{q(\mathbf{x})}(\mathbf{x})} yf_{Y|D=0, S=1, \mathbf{X} = \mathbf{x}}(y)dy - \tilde \theta_{0}^L(\mathbf{x}) \right], \\
E\left[\psi^{(3)}(\mathbf{O}; \tilde \theta(\mathbf{x}), \nu(\mathbf{x}))\right] & = E\left[ S_i (1-D_i) Y_{i} \mathbf{1}\{Y_i \geq y_{1-q(\mathbf{x})}(\mathbf{x})\} - (1-p(\mathbf{x}))\tilde \theta_{0}^{U}(\mathbf{x}) | \mathbf{X}_i = \mathbf{x}\right] \\
& = (1-p(\mathbf{x}))\left[q_0^0(\mathbf{x})\int_{y_{1-q(\mathbf{x})}(\mathbf{x})}^{\infty} yf_{Y|D=0, S=1, \mathbf{X} = \mathbf{x}}(y)dy - \tilde \theta_{0}^U(\mathbf{x}) \right].
\end{aligned}
$$
We can similarly obtain the orthogonalized moment conditions:
$$
\begin{aligned}
& \tilde{\psi}^{(2)}(\mathbf{O}_i; \tilde \theta(\mathbf{x}), \nu(\mathbf{x})) \\
= & S_i (1-D_i) (Y_{i} - y^0_{q(\mathbf{x})}(\mathbf{x})) \mathbf{1}\{Y_i \leq y_{q(\mathbf{x})}(\mathbf{x})\} - (1-p(\mathbf{x}))\tilde \theta_{1}^{L}(\mathbf{x}) + \frac{(1-p(\mathbf{x}))y^0_{q(\mathbf{x})}(\mathbf{x})S_iD_i}{p(\mathbf{x})} \\
& \tilde{\psi}^{(3)}(\mathbf{O}; \tilde \theta(\mathbf{x}), \nu(\mathbf{x})) \\
= & S_i (1-D_i) (Y_{i} - y^0_{1-q(\mathbf{x})}(\mathbf{x})) \mathbf{1}\{Y_i \geq y_{1-q(\mathbf{x})}(\mathbf{x})\} - (1-p(\mathbf{x}))\tilde \theta_{1}^{U}(\mathbf{x}) \\
& + \frac{(1-p(\mathbf{x}))y^0_{1-q(\mathbf{x})}(\mathbf{x})S_i D_i}{p(\mathbf{x})}.
\end{aligned}
$$
When we have conditionally monotonic selection, $S_i(0) \leq S_i(1)$ for certain units and $S_i(0) \geq S_i(1)$ for the others. We define a binary variable $direction_i \coloneqq \mathbf{1}\{\mathbf{X}_i \in \mathcal{X}^{+}\}$. When $\mathbf{X}_i \in \mathcal{X}^{+}$, we have score functions $s^{L, help}(\mathbf{X}_i)$ and $s_i^{U, help}(\mathbf{X}_i)$. When $\mathbf{X}_i \in \mathcal{X}^{-}$, we have $s^{L, hurt}(\mathbf{X}_i)$ and $s_i^{U, hurt}(\mathbf{X}_i)$. Finally, $s^{L}(\mathbf{X}_i) = direction_i * s^{L, help}(\mathbf{X}_i) + (1-direction_i) * s^{L, hurt}(\mathbf{X}_i)$ and $s^{U}(\mathbf{X}_i) = direction_i * s^{U, help}(\mathbf{X}_i) + (1-direction_i) * s^{U, hurt}(\mathbf{X}_i)$. 

\subsection{Statistical theory}\label{appx:A5}
We first derive the asymptotic behavior of the estimator for the aggregated bounds, using Theorem 3.1 in \citet{chernozhukov2018double}. Our moment conditions are linear in the target parameters and satisfy Neyman orthogonality, hence Assumption 3.1 required by the theorem holds. Another required assumption, Assumption 3.2, is ensured by the convergence rate of the \textit{grf} algorithm. Therefore, under monotonic selection, we know that
$$
\begin{aligned}
& \frac{1}{\sqrt{N}} \sum_{i=1}^N \left[s^L(\mathbf{X}_i) - q_0 \tau^L_{CTB}(1,1)\right] \rightarrow \mathcal{N}(0, V_s^L) \\
& \frac{1}{\sqrt{N}} \sum_{i=1}^N \left[s^U(\mathbf{X}_i) - q_0 \tau^U_{CTB}(1,1)\right] \rightarrow \mathcal{N}(0, V_s^U) \\
& \sqrt{N} (\hat q_0 - q_0) \rightarrow \mathcal{N}(0, V_{q_0}),
\end{aligned}
$$
Using the Delta method, we can obtain
$$
\begin{aligned}
& \frac{1}{\sqrt{N}} \sum_{i=1}^N \left(\hat{\tau}^L_{CTB}(1,1) - \tau^L_{CTB}(1,1)\right) \rightarrow \mathcal{N}(0, V^L) \\
& \frac{1}{\sqrt{N}} \sum_{i=1}^N \left(\hat{\tau}^U_{CTB}(1,1) - \tau^U_{CTB}(1,1)\right) \rightarrow \mathcal{N}(0, V^U),
\end{aligned}
$$
We can estimate $V^L$ via its sample analogue:
$$
\begin{aligned}
& \hat V^L = \begin{pmatrix}\frac{1}{\hat q_0} & -\frac{\hat \tau^L_{CTB}(1,1)}{\hat q_0} \end{pmatrix}\begin{pmatrix}\hat V_s^L & \widehat{Cov}_{(s^L, q_0)} \\ \widehat{Cov}_{(s^L, q_0)} & \hat V_{q_0} \end{pmatrix}\begin{pmatrix}\frac{1}{\hat q_0} \\ -\frac{\hat \tau^L_{CTB}(1,1)}{\hat q_0} \end{pmatrix},
\end{aligned}
$$
where
$$
\begin{aligned}
& \hat V_s^L = \frac{1}{N^2} \sum_{i=1}^N \left[s^L(\mathbf{X}_i) - \frac{1}{N} \sum_{i=1}^N s^L(\mathbf{X}_i)\right]^2, \hat V_{q_0} = \frac{1}{N^2} \sum_{i=1}^N \left[\frac{S_i(1-D_i)}{1-p(\mathbf{X}_i)} - \hat q_0 \right]^2, \\
& \widehat{Cov}_{(s^L, q_0)} = \frac{1}{N^2} \sum_{i=1}^N \left[s^L(\mathbf{X}_i) - \frac{1}{N} \sum_{i=1}^N s^L(\mathbf{X}_i)\right] \left[\frac{S_i(1-D_i)}{1-p(\mathbf{X}_i)} - \hat q_0 \right].
\end{aligned}
$$
Similar results hold under conditionally monotonic selection, as the estimate is a weighted average of two estimates under monotonic selection.

For the conditional bounds, it is easy to verify that the moment conditions for $\nu(\mathbf{x})$ satisfy all the assumptions imposed in \citet{athey2019generalized}. Therefore, $\hat \nu(\mathbf{x})$ are consistent and asymptotically Normal, according to Theorem 5 in the paper. In addition, the theorem guarantees that these estimates converge to their true values at a sufficiently fast rate (higher than $1/N^{\frac{1}{4}}$). Since our algorithm incorporates both Neyman orthogonalization and cross-fitting, it is honest for the estimation of the target parameters. Consequently, the consistency and asymptotically Normality of $\hat \theta(\mathbf{x})$ and the conditional bounds, ($\hat{\tau}_{CTB, \mathbf{x}}^{U}(1,1), \hat{\tau}_{CTB, \mathbf{x}}^{L}(1,1)$), can be similarly derived from Theorem 5 in \citet{athey2019generalized}. To summarize, we have the following formal results:

\begin{theorem}\label{thm:asym}
Under regularity conditions required by Theorem 5 in \citet{athey2019generalized}, estimates of $\tau^{L}_{CTB, \mathbf{x}}(1,1)$, $\tau^{U}_{CTB, \mathbf{x}}(1,1)$, $\tau_{CTB}^{L}(1,1)$, and $\tau_{CTB}^{U}(1,1)$ from Algorithm 1 are consistent and asymptotically Normal:
$$
\frac{\hat \tau_{CTB, \mathbf{x}}^L(1,1) - \tau_{CTB, \mathbf{x}}^L(1,1)}{\sqrt{Var(\hat \tau_{CTB, \mathbf{x}}^L(1,1))}} \rightarrow \mathcal{N}(0, 1) \text{ ,} \frac{\hat \tau_{CTB, \mathbf{x}}^U(1,1) - \tau_{CTB, \mathbf{x}}^U(1,1)}{\sqrt{Var(\hat \tau_{CTB, \mathbf{x}}^U(1,1))}} \rightarrow \mathcal{N}(0, 1),
$$
$$
\frac{\hat \tau_{CTB}^L(1,1) - \tau_{CTB}^L(1,1)}{\sqrt{Var(\hat \tau_{CTB}^L(1,1))}} \rightarrow \mathcal{N}(0, 1) \text{ ,} \frac{\hat \tau_{CTB}^U(1,1) - \tau_{CTB}^U(1,1)}{\sqrt{Var(\hat \tau_{CTB}^U(1,1))}} \rightarrow \mathcal{N}(0, 1).
$$
\end{theorem}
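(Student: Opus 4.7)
The plan is to split the argument into the aggregated-bound claim and the conditional-bound claim, since each invokes a different off-the-shelf theorem, and then handle the conditionally monotonic case as a simple combination.

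For the aggregated bounds, I would apply Theorem 3.1 of \citet{chernozhukov2018double} to the orthogonalized score $s^L$ (and analogously $s^U$) together with the simple IPW score $\frac{S_i(1-D_i)}{1-p(\mathbf{X}_i)}-q_0$ defining $\hat q_0$. Assumption 3.1(a) of that theorem is immediate because $s^L$ is linear in $\tilde\theta_1^L(\mathbf{x})$; Assumption 3.1(b), Neyman orthogonality, has already been verified in Section A.2. Assumption 3.2 reduces to showing that the nuisance estimates $\hat y_{\hat q(\mathbf{x})}(\mathbf{x})$, $\hat y_{1-\hat q(\mathbf{x})}(\mathbf{x})$, $\hat q(\mathbf{x})$, and (when estimated) $\hat p(\mathbf{x})$ converge at rate faster than $N^{-1/4}$ in $L^2$, which follows from the pointwise rate results in Theorem~5 of \citet{athey2019generalized} applied to each grf stage, combined with the cross-fitting scheme in Algorithm~1. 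Once the scores are jointly asymptotically normal with the IPW mean defining $\hat q_0$, I would apply the Delta method to $\hat\tau^L_{CTB}(1,1)=\bar s^L/\hat q_0$ to obtain the stated normal limit with variance $V^L$ displayed in the text, and likewise for $\hat\tau^U_{CTB}(1,1)$.

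For the conditional bounds $\hat\tau^L_{CTB,\mathbf{x}}(1,1)$ and $\hat\tau^U_{CTB,\mathbf{x}}(1,1)$, the plan is to invoke Theorem~5 of \citet{athey2019generalized} twice. First, applied to the nuisance forests, it gives consistent and asymptotically normal estimates of $q_0(\mathbf{x})$, $q_1(\mathbf{x})$, $y_{q(\mathbf{x})}(\mathbf{x})$, and $y_{1-q(\mathbf{x})}(\mathbf{x})$ at a rate strictly faster than $N^{-1/4}$. Second, applied to the forest that solves the orthogonalized target moment $\tilde\psi^{(2)}$ (resp.\ $\tilde\psi^{(3)}$), it delivers a pointwise CLT for $\hat{\tilde\theta}_1^L(\mathbf{x})$ (resp.\ $\hat{\tilde\theta}_1^U(\mathbf{x}))$; because the moments are Neyman-orthogonal in $\nu$ and honest subsampling is used, the plug-in error from $\hat\nu(\mathbf{x})$ is second order and does not affect the influence function. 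A final Delta-method step converts these limits for $\hat{\tilde\theta}_1^L(\mathbf{x})/\hat q_0(\mathbf{x})+\hat y_{\hat q(\mathbf{x})}(\mathbf{x})$ into the claim for the conditional CATE bounds. The conditionally monotonic selection case is then handled by writing $s^L(\mathbf{X}_i)$ as $direction_i\cdot s^{L,help}+(1-direction_i)\cdot s^{L,hurt}$ and applying the monotonic-case argument within each stratum, since $direction_i$ is an observable indicator depending only on $\mathbf{X}_i$.

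The main obstacle I would expect is verifying Assumption 3.2 of \citet{chernozhukov2018double} when the trimming threshold is a \emph{random} quantile: because $\hat y_{\hat q(\mathbf{x})}(\mathbf{x})$ is indexed by the estimated probability level $\hat q(\mathbf{x})$, the composite error decomposes into a quantile-at-fixed-level error plus a level-perturbation error, and bounding the cross product requires the product-rate logic of double machine learning together with uniform control of the conditional density $f_{Y\mid D=1,S=1,\mathbf{X}}$ near the true threshold. Tracking this uniformly through the grf-induced kernel weights $\alpha_i(\mathbf{x})$, and ensuring that the resulting remainder is $o_p(N^{-1/2})$ for aggregated inference and $o_p((Nh(\mathbf{x}))^{-1/2})$ pointwise for conditional inference, is the technical crux that I would devote the bulk of the formal proof to; the rest is routine verification against the assumptions of the two cited theorems.
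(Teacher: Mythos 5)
Your proposal follows essentially the same route as the paper: Theorem 3.1 of \citet{chernozhukov2018double} plus the Delta method for the aggregated bounds (with Assumption 3.1 from linearity and the already-verified Neyman orthogonality, and Assumption 3.2 from the \textit{grf} rates), Theorem 5 of \citet{athey2019generalized} for the conditional bounds, and the weighted-average argument for the conditionally monotonic case. Your closing remark about the composite error from the estimated quantile level $\hat q(\mathbf{x})$ is a genuine technical point that the paper's own argument passes over silently, so if anything your sketch is the more careful of the two.
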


Theorem \ref{thm:asym} allows us to construct valid confidence intervals for either the conditional bounds or the aggregated bounds using estimates from our algorithm.\footnote{As there is no guarantee that our lower bound is necessarily smaller than the upper bound, the super-efficiency condition required by \citet{imbens2004confidence} may not hold. Hence, a prefereable approach is to rely on the critical values in proposition 3 of \citet{stoye2009more}.} For the conditional bounds, the standard error estimates are available in the output of the regression forest. This is due to the fact that our moment conditions for the target parameters have been orthogonalized and accounted for the extra uncertainties caused by the estimation of $\nu(\mathbf{x})$ in the first stage. Finally, if we only have conditionally monotonic selection, some extra conditions are needed to ensure that we can predict $\mathcal{X}^{+}$ and $\mathcal{X}^{-}$ accurately. Details of these conditions are discussed in \citet{semenova2020better}.

\subsection{Unit missingness and missing covariates}\label{appx:A3}
When there exist missing covariates, we have:
$$
\begin{aligned}
q(\mathbf{x}) = &  P(S_i(1)=S_i(0)=1|\mathbf{X}_i = \mathbf{x},S=1,D=1) \\
= & \frac{P(S_i(0)=1|D=1,\mathbf{X}_i = \mathbf{x})}{P(S_i(1)=1|D=1,\mathbf{X}_i = \mathbf{x})} \\
= & \frac{P(S_i(0)=1|\mathbf{X}_i = \mathbf{x})}{P(S_i(1)=1|\mathbf{X}_i = \mathbf{x})} \\
= & \frac{p(x)P(D=0|S=1,\mathbf{X}_i = \mathbf{x})}{(1-p(x))P(D=1|S=1,\mathbf{X}_i = \mathbf{x})}\\
\end{aligned}
$$
where the last equality comes from
$$
\begin{aligned}
P(D=1|S=1,\mathbf{X}_i = \mathbf{x})= & P(S=1|D=1,\mathbf{X}_i = \mathbf{x})\frac{P(D=1|\mathbf{X}_i = \mathbf{x})}{P(S=1|\mathbf{X}_i = \mathbf{x})} \\
= & P(S_i(1)=1|D=1,\mathbf{X}_i = \mathbf{x})\frac{p(x)}{P(S=1|\mathbf{X}_i = \mathbf{x})} \\
= & P(S_i(1)=1|\mathbf{X}_i = \mathbf{x})\frac{p(x)}{P(S=1|\mathbf{X}_i = \mathbf{x})}, \\
P(D=0|S=1,\mathbf{X}_i = \mathbf{x})= & P(S=1|D=0,\mathbf{X}_i = \mathbf{x})\frac{P(D=0|\mathbf{X}_i = \mathbf{x})}{P(S=1|\mathbf{X}_i = \mathbf{x})} \\
= & P(S_i(0)=1|D=0,\mathbf{X}_i = \mathbf{x})\frac{1-p(x)}{P(S=1|\mathbf{X}_i = \mathbf{x})} \\
= & P(S_i(0)=1|\mathbf{X}_i = \mathbf{x})\frac{1-p(x)}{P(S=1|\mathbf{X}_i = \mathbf{x})}. \\
\end{aligned}
$$ 
Hence, we can estimate $q(\mathbf{x})$ via estimating two probability forest models, $P(D=1|\mathbf{X} = \mathbf{x},S_i=1)$ and $P(D=0|\mathbf{X} = \mathbf{x},S_i=1)$.


\newpage
\section{Extra results}\label{appx:B}
\subsection{Extra results from simulation}\label{appx:B1}
\begin{figure}[htp]
 \begin{center}
\includegraphics[width=2\linewidth, height=.8\linewidth]{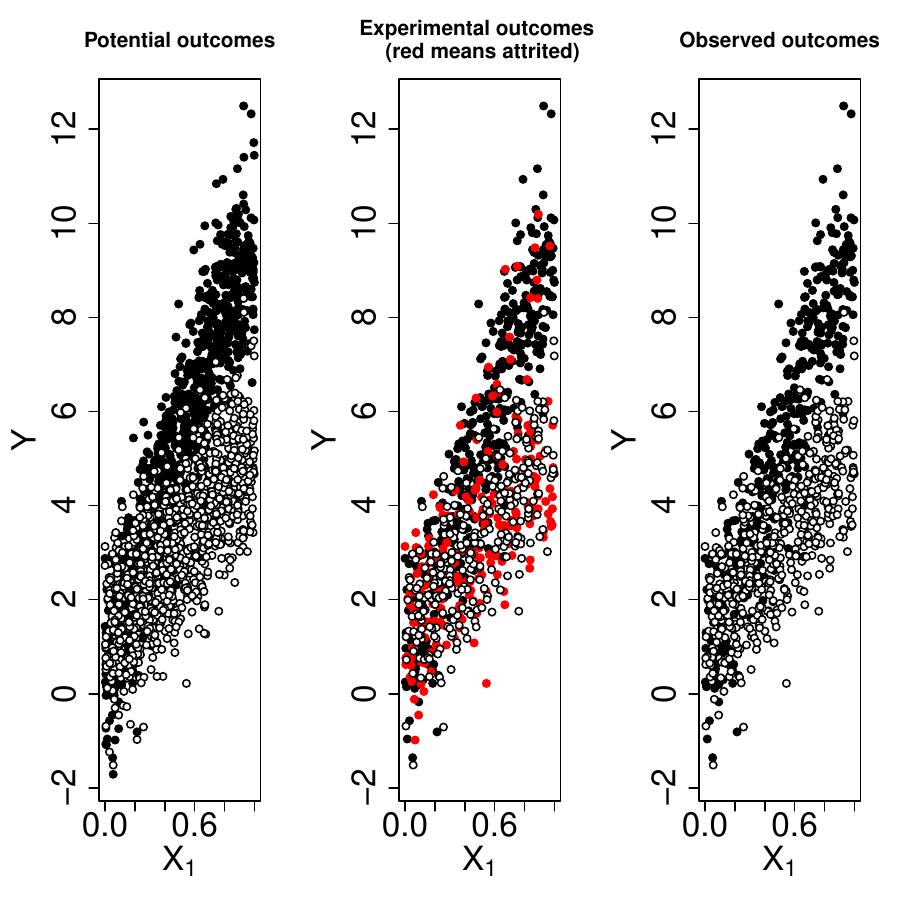}
\caption{Simulated Data}
\label{fig_simu_data}
 \end{center}
  \footnotesize\textbf{Note:} The left plot presents how $Y_i(0)$ (in white) and $Y_i(1)$ (in black) vary across the value of the first covariate for all the units in the simulated sample. The middle plot shows realized outcome for all the units under one assignment, with missing outcome values marked by red spots. The right plots shows outcome values that can be observed by the researcher.
\end{figure}

\begin{figure}[htp]
\caption{Estimates of Nuisance Parameters}
\label{fig_simu_qs}
 \begin{center}
\includegraphics[width=.48\linewidth, height=.4\linewidth]{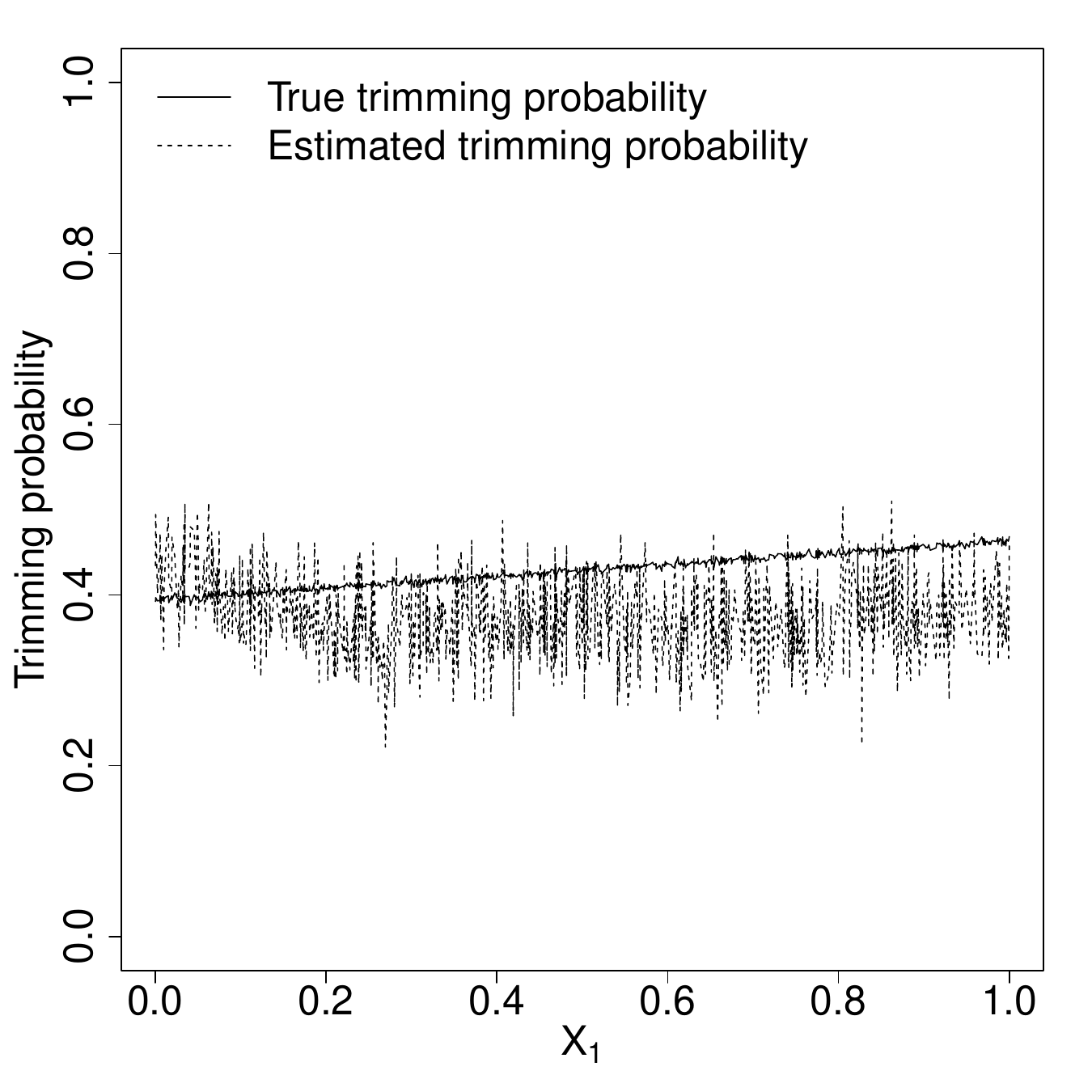} \\
\includegraphics[width=.48\linewidth, height=.4\linewidth]{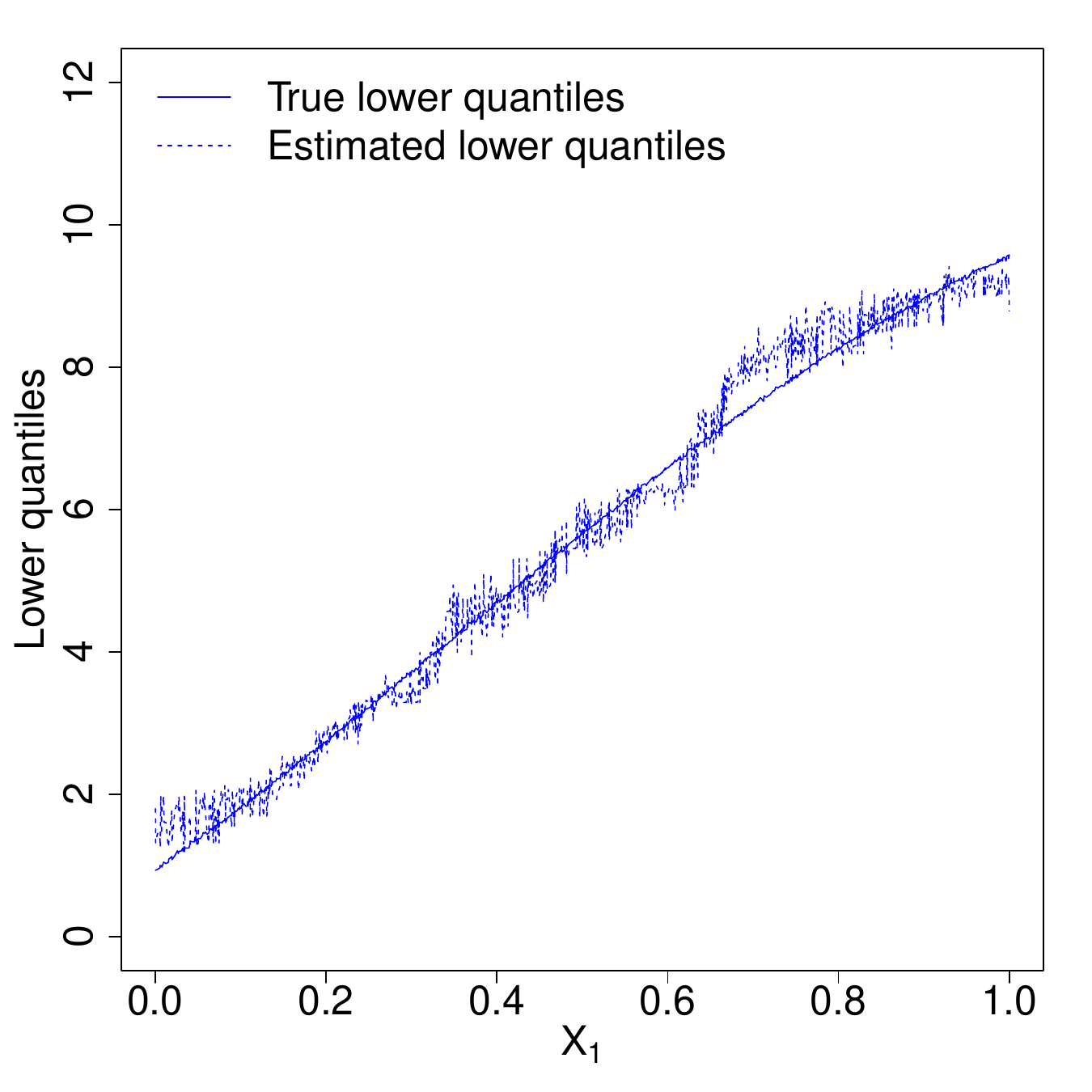}
\includegraphics[width=.48\linewidth, height=.4\linewidth]{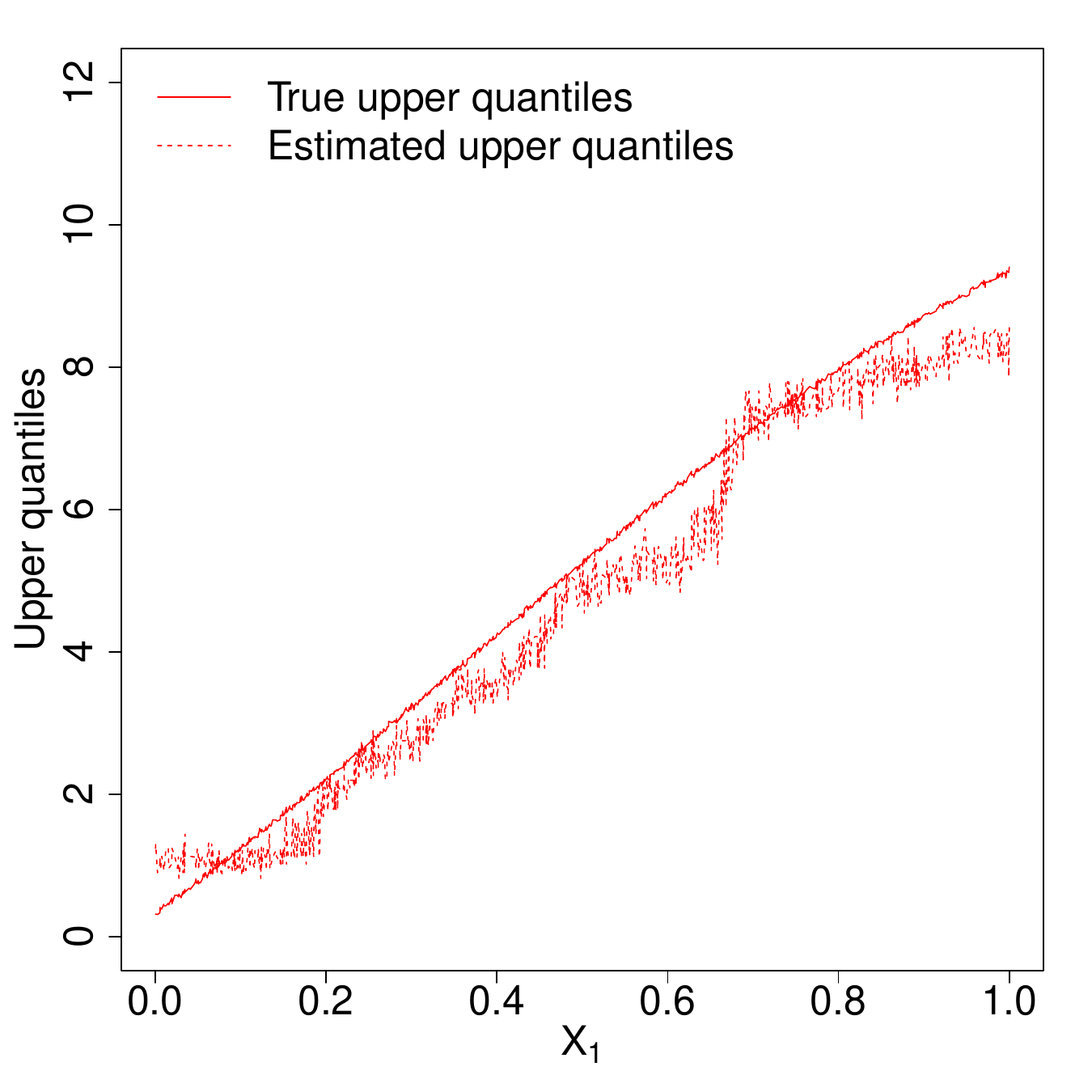}
 \end{center}
 \footnotesize\textbf{Note:} These plots compare the estimated nuisance parameters with their true values. The x-axis is $X_1$, the only observable covariate that affects the response rates. Top: trimming probability $q_0(\mathbf{x})/q_1(\mathbf{x})$; Bottom-left: lower quantile $y_{q\mathbf{x}}(\mathbf{x})$; Bottom-right: upper quantile $y_{1-q\mathbf{x}}(\mathbf{x})$.
\end{figure}

\begin{figure}[htp]
 \begin{center}
\includegraphics[width=.48\linewidth, height=.4\linewidth]{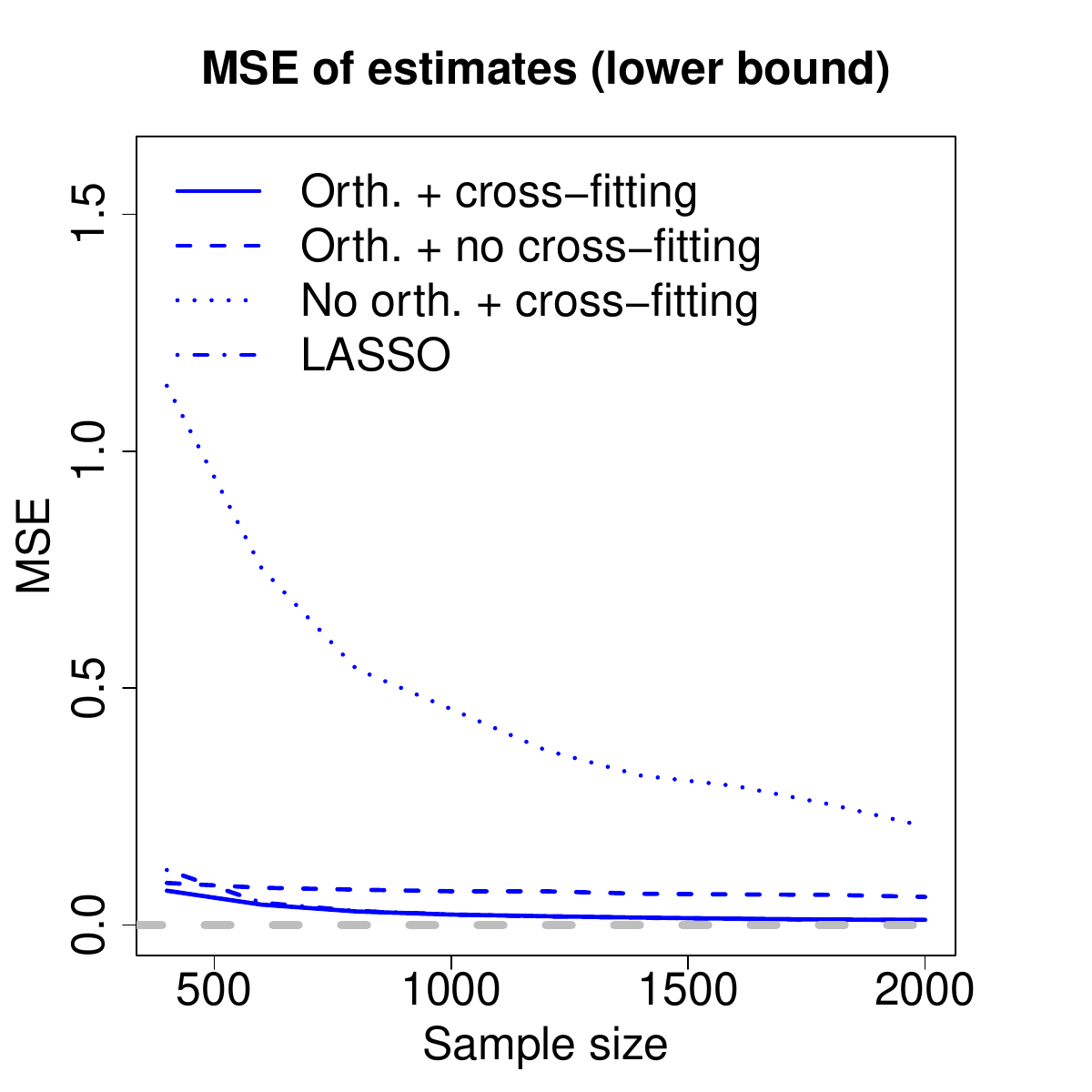}
\includegraphics[width=.48\linewidth, height=.4\linewidth]{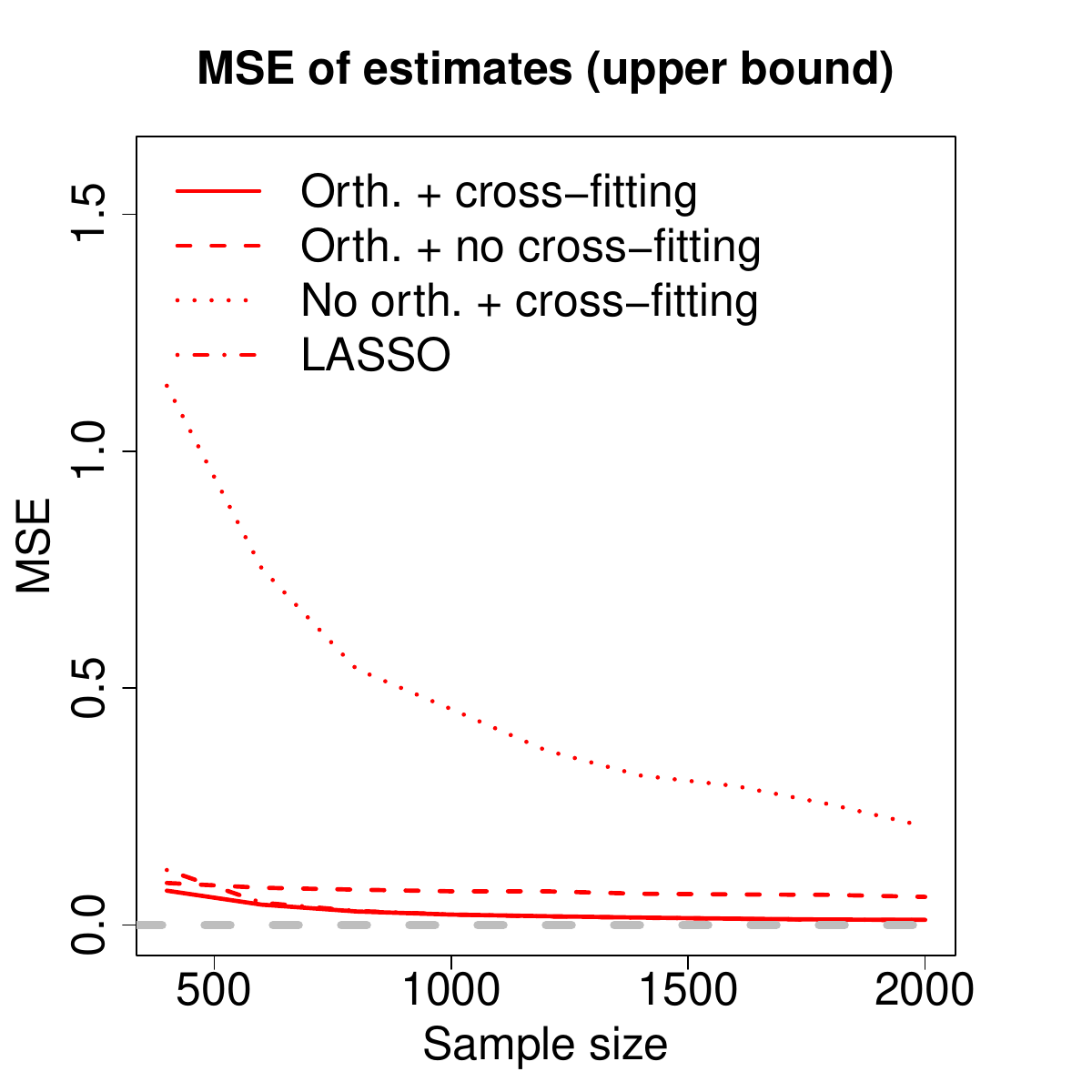} \\
\includegraphics[width=.48\linewidth, height=.4\linewidth]{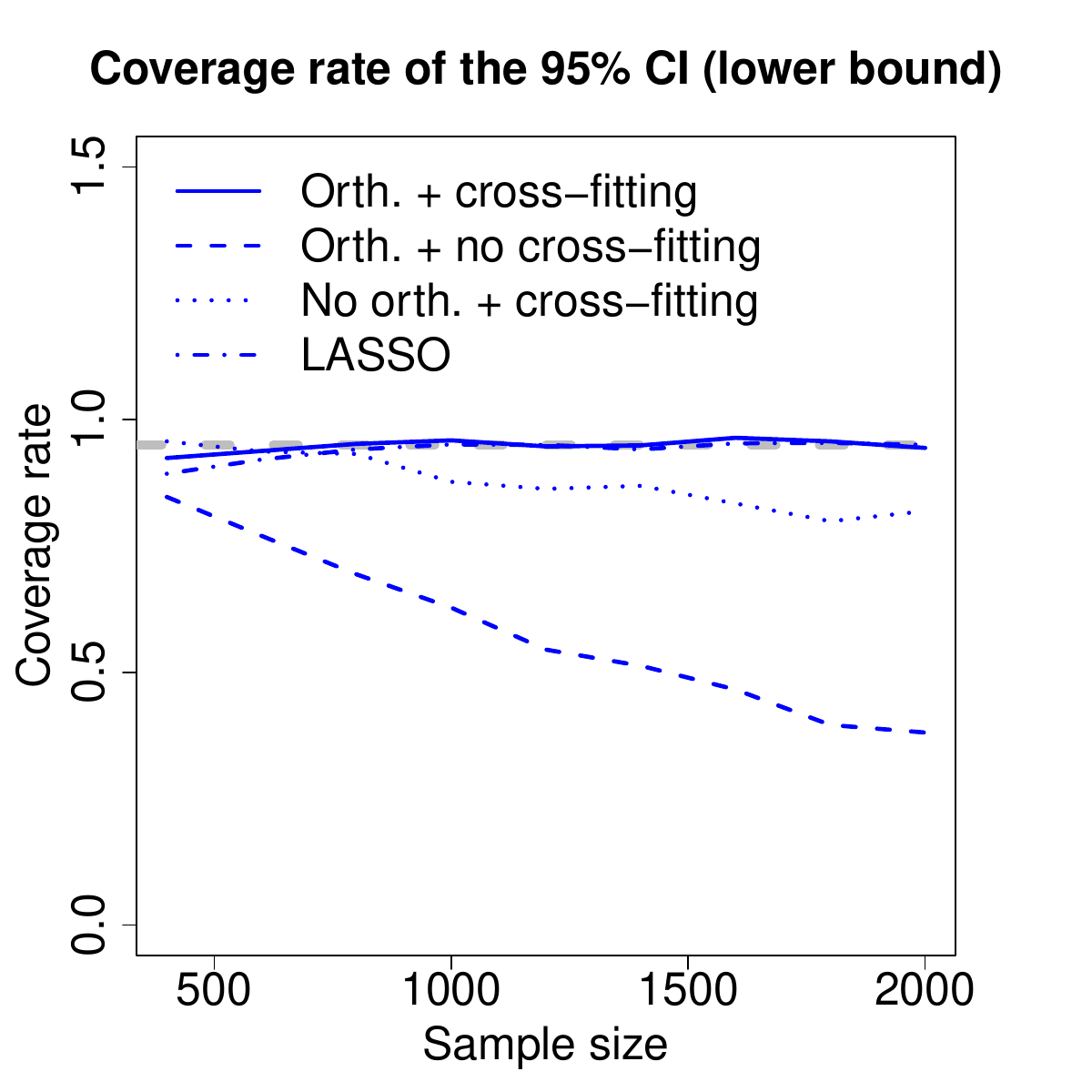}
\includegraphics[width=.48\linewidth, height=.4\linewidth]{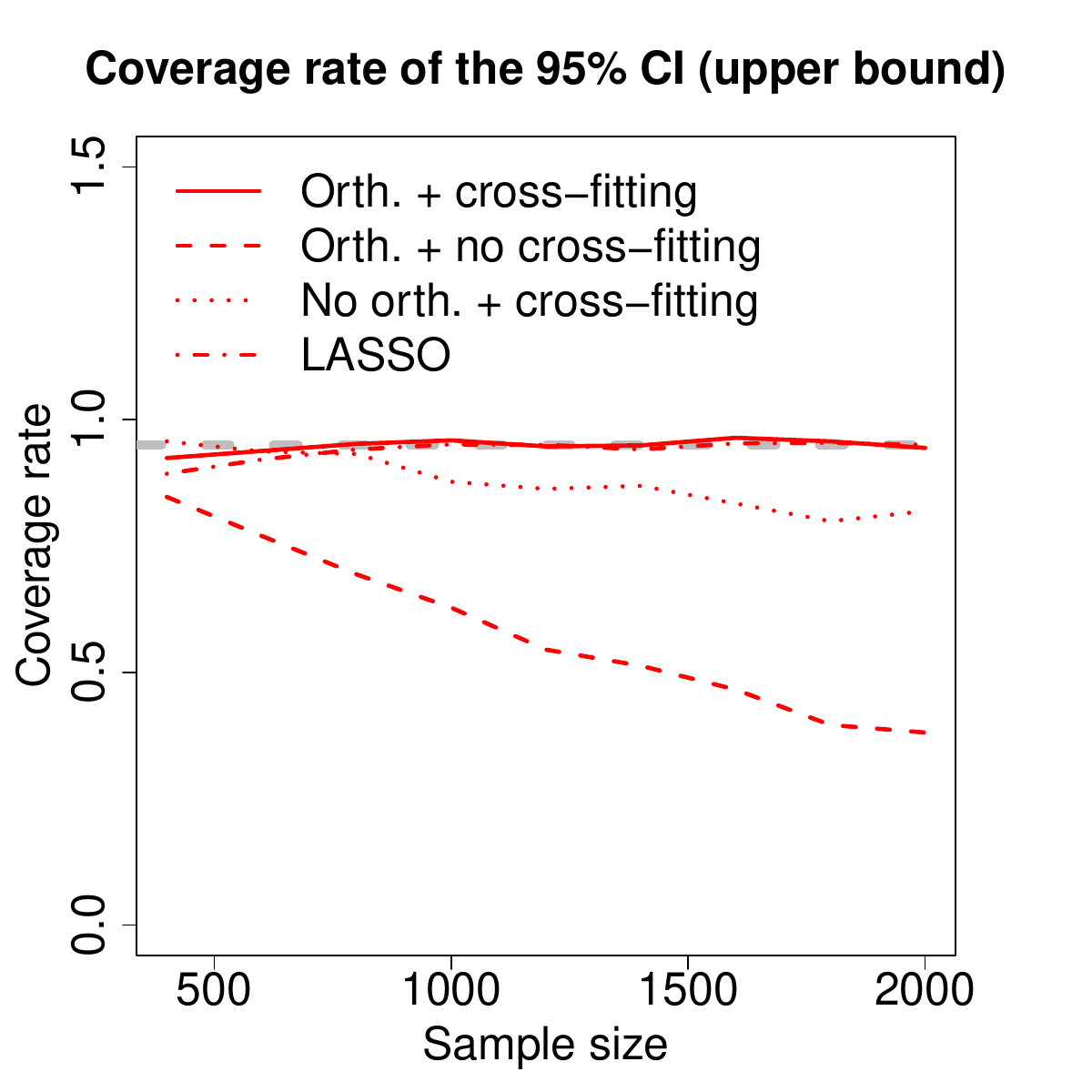}
\caption{Asymptotic Performance of Methods (Increasing Sample Size)}
\label{fig_simu_varyingP}
 \end{center}
 \footnotesize\textbf{Note:} Plots on the left show how the MSE and coverage rate of the lower bound estimates vary with sample sizes. Plots on the right show the same for the upper bound estimates. The solid lines represent the performance of the proposed method, with both Neyman orthogonalization and cross-fitting. The dashed lines represent the performance of the method with Neyman orthogonalization but without cross-fitting. The dotted lines represent the performance of the method with cross-fitting but without Neyman orthogonalization. The dash-dotted lines represent the performance of the LASSO method proposed by \cite{semenova2020better}. The gray lines on the bottom mark the nominal level of coverage, 95\%. We can see that the proposed method's MSE declines to zero as the sample size grows, and its coverage rate remains at the level of 95\%. It outperforms the LASSO-based approach under small samples, although the difference diminishes gradually. Without either Neyman orthogonalization or cross-fitting, the results are affected by the regularization bias hence do not lead to accurate inference.
\end{figure}

\begin{figure}[htp]
 \begin{center}
\includegraphics[width=.48\linewidth, height=.4\linewidth]{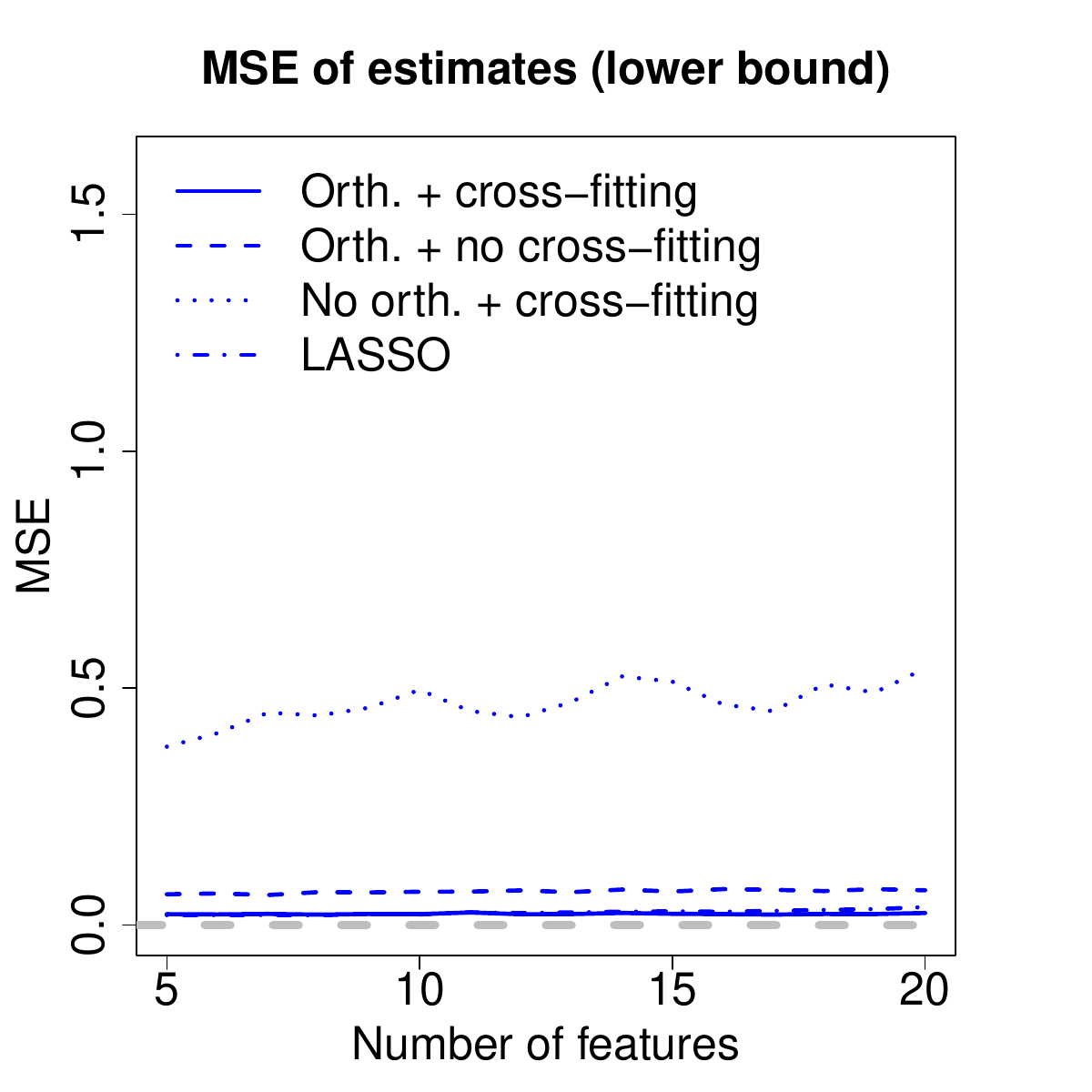}
\includegraphics[width=.48\linewidth, height=.4\linewidth]{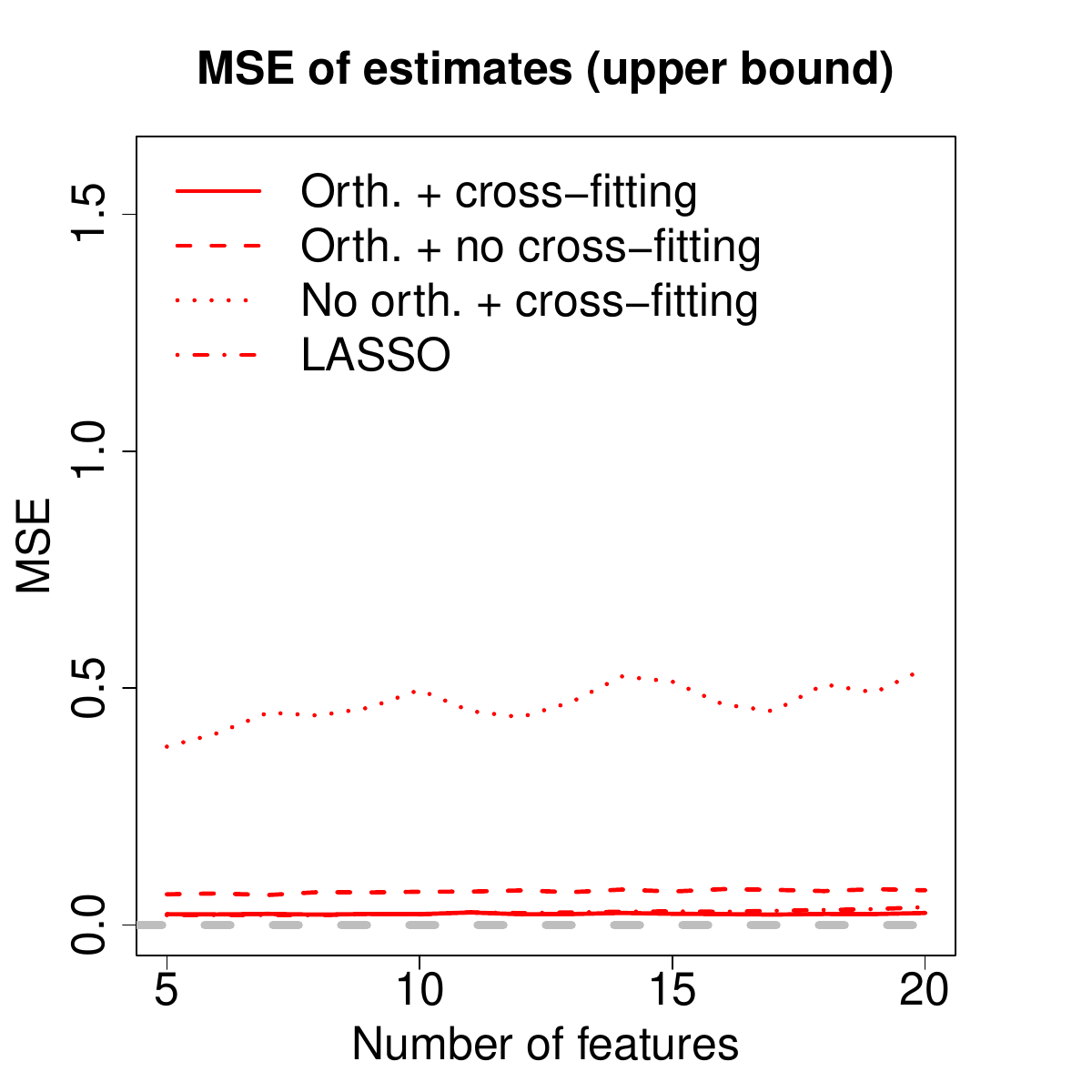} \\
\includegraphics[width=.48\linewidth, height=.4\linewidth]{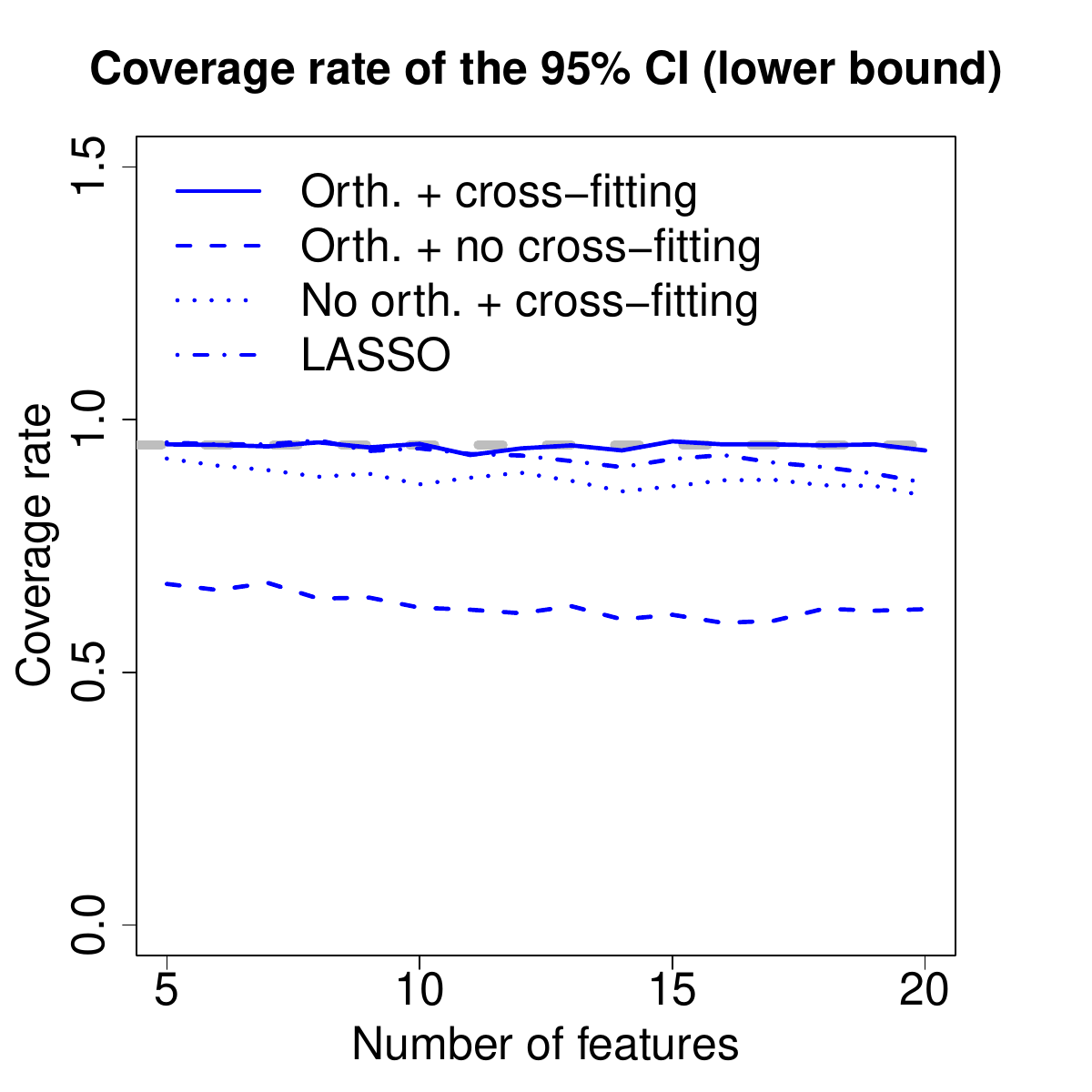}
\includegraphics[width=.48\linewidth, height=.4\linewidth]{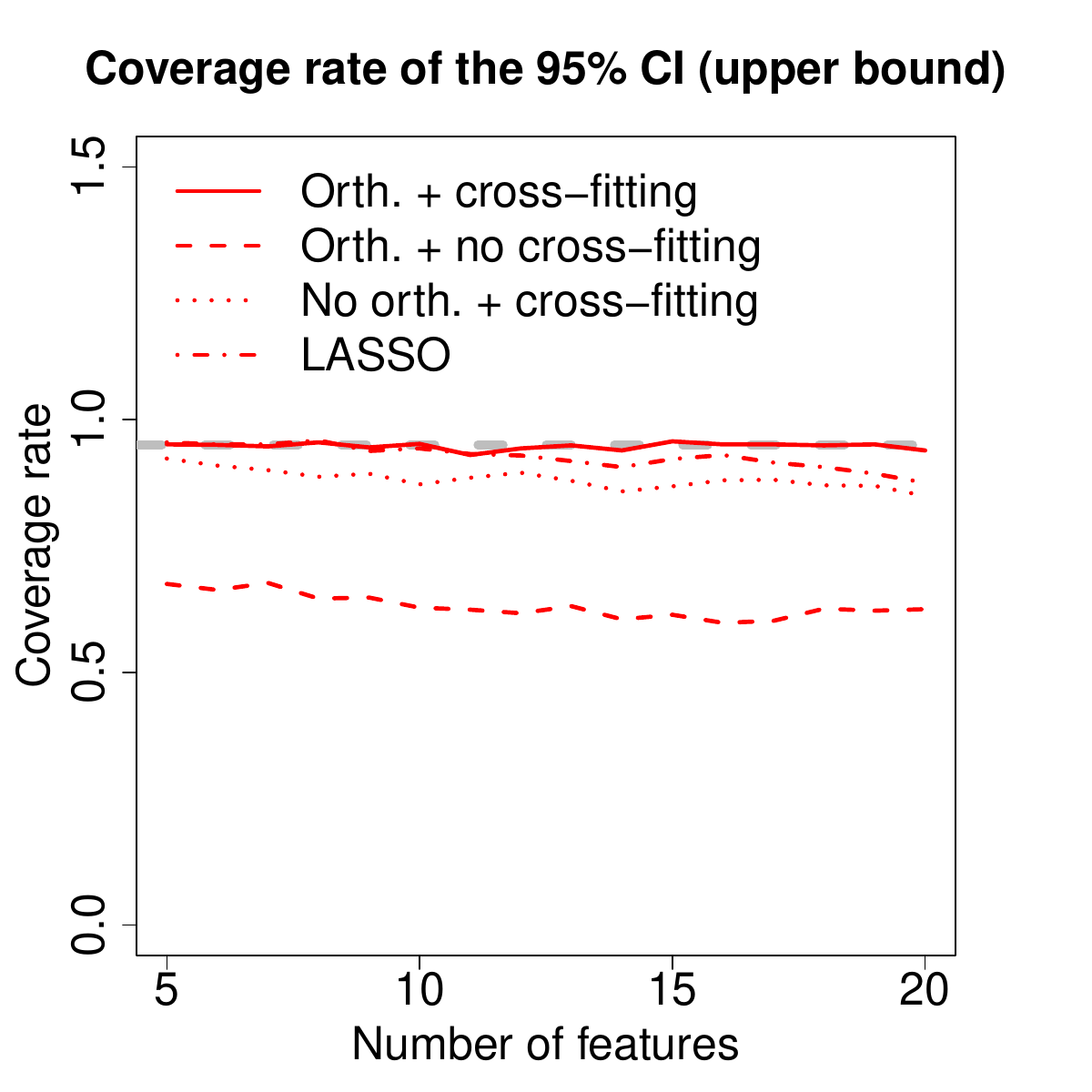}
\caption{Asymptotic Performance of Methods (Increasing Number of Covariates)}
\label{fig_simu_varyingP}
 \end{center}
 \footnotesize\textbf{Note:} Plots on the left show how the MSE and coverage rate of the lower bound estimates vary with the number of irrelevant covariates. Plots on the right show the same for the upper bound estimates. The solid lines represent the performance of the proposed method, with both Neyman orthogonalization and cross-fitting. The dashed lines represent the performance of the method with Neyman orthogonalization but without cross-fitting. The dotted lines represent the performance of the method with cross-fitting but without Neyman orthogonalization. The dash-dotted lines represent the performance of the LASSO method proposed by \cite{semenova2020better}. The gray lines on the bottom mark the nominal level of coverage, 95\%. We can see that the proposed method's MSE declines to zero and its coverage rate remains at the level of 95\%, even when the number of covariates is large. In contrast, the coverage rate of the LASSO-based approach falls below 95\% when there are more than 15 covariates. Without either Neyman orthogonalization or cross-fitting, the results are affected by the regularization bias hence do not lead to accurate inference.
\end{figure}

\begin{figure}[htp]
\caption{Performance with Unknown Propensity Scores}
\label{ap_fig_app_SK_cond}
 \begin{center}
\includegraphics[width=.7\linewidth, height=.6\linewidth]{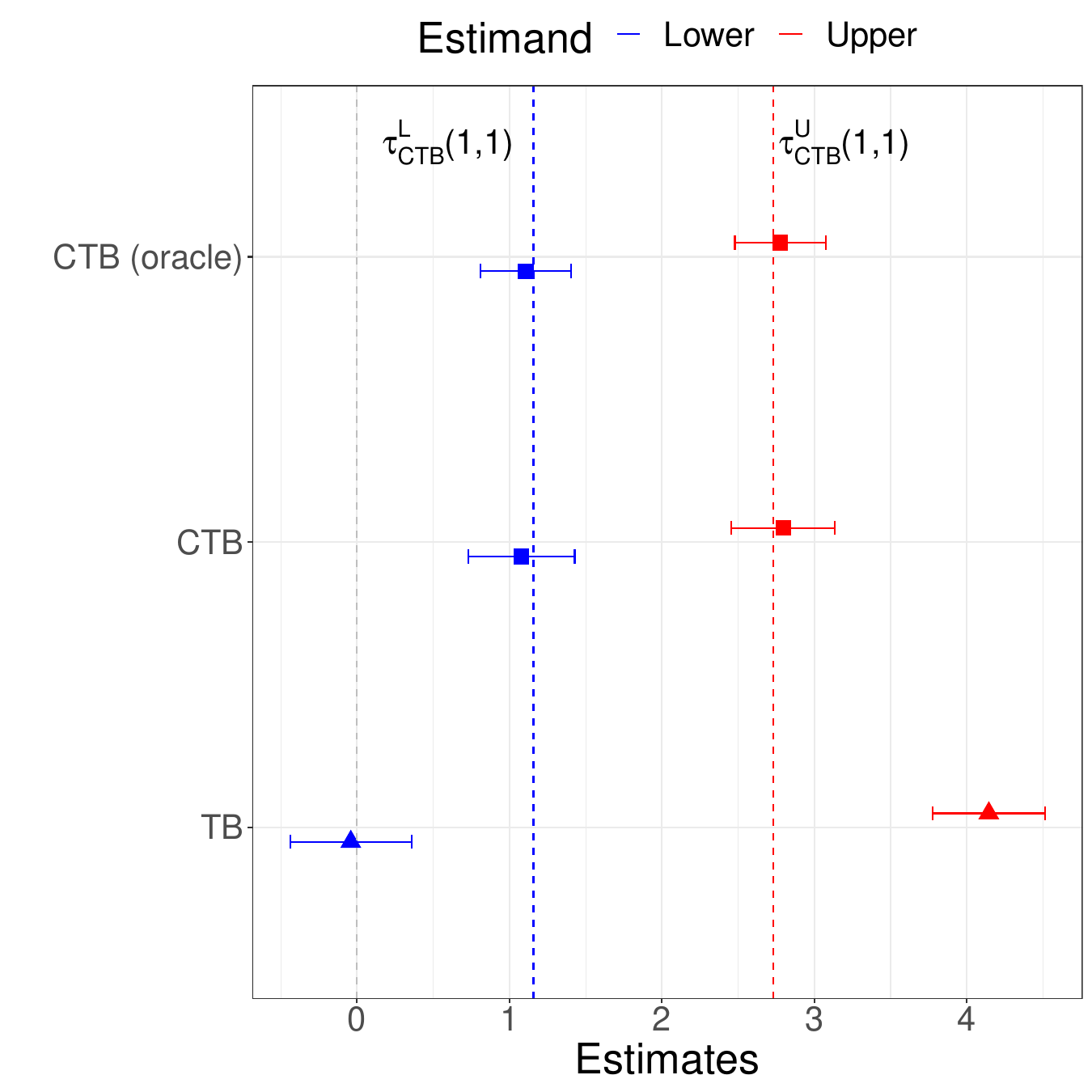}
 \end{center}
 \footnotesize\textbf{Note:} This plot shows the estimated covariates-tightened trimming bounds (CTB, in squares) and basic trimming bounds (TB, in triangles) in simulation, when the propensity scores have to be estimated from data. It also compares the results with the estimated covariates-tightened trimming bounds under the true propensity scores (CTB (oracle), in squares). Lower bound estimates are in blue and upper bounds estimates are in red. The segments represent the 95\% confidence intervals for the estimates. The results are averaged across 1,000 treatment assignments. The red and blue dotted lines mark the true values of the CTB. The black dotted line represents the true ATE and the black dash-dotted line represents the true ATE for the always-responders. We can see that the averages of the CTB estimates are close to their true values even when the propensity scores are unknown.
\end{figure}

\newpage
\subsection{Extra results from applications}\label{appx:B2}
\begin{figure}[htp]
\caption{Trimming Probability Estimates in Santoro and Broockman (2022)}
\label{ap_fig_app_SK_cond_q}
 \begin{center}
\includegraphics[width=.5\linewidth, height=.4\linewidth]{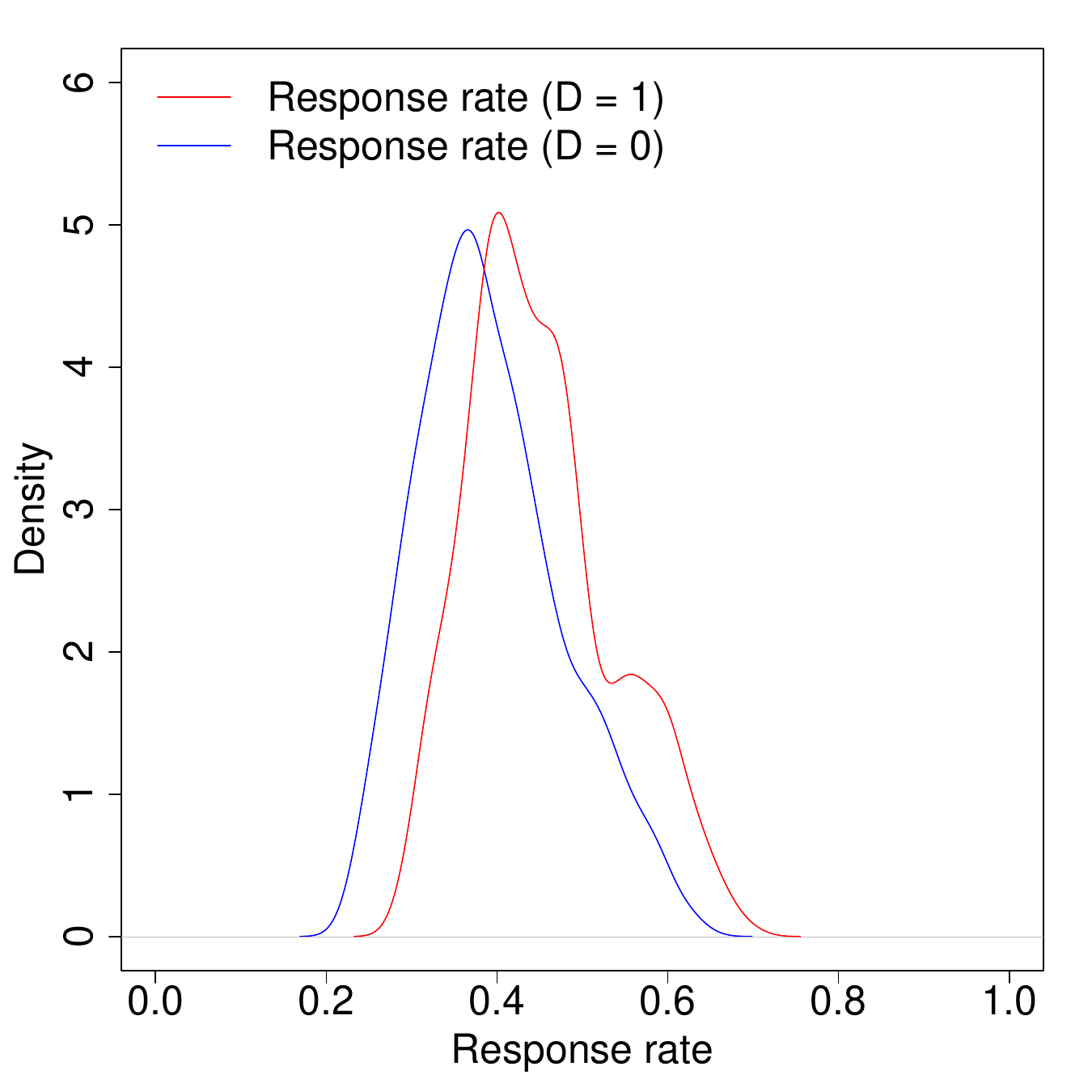}
\includegraphics[width=.5\linewidth, height=.4\linewidth]{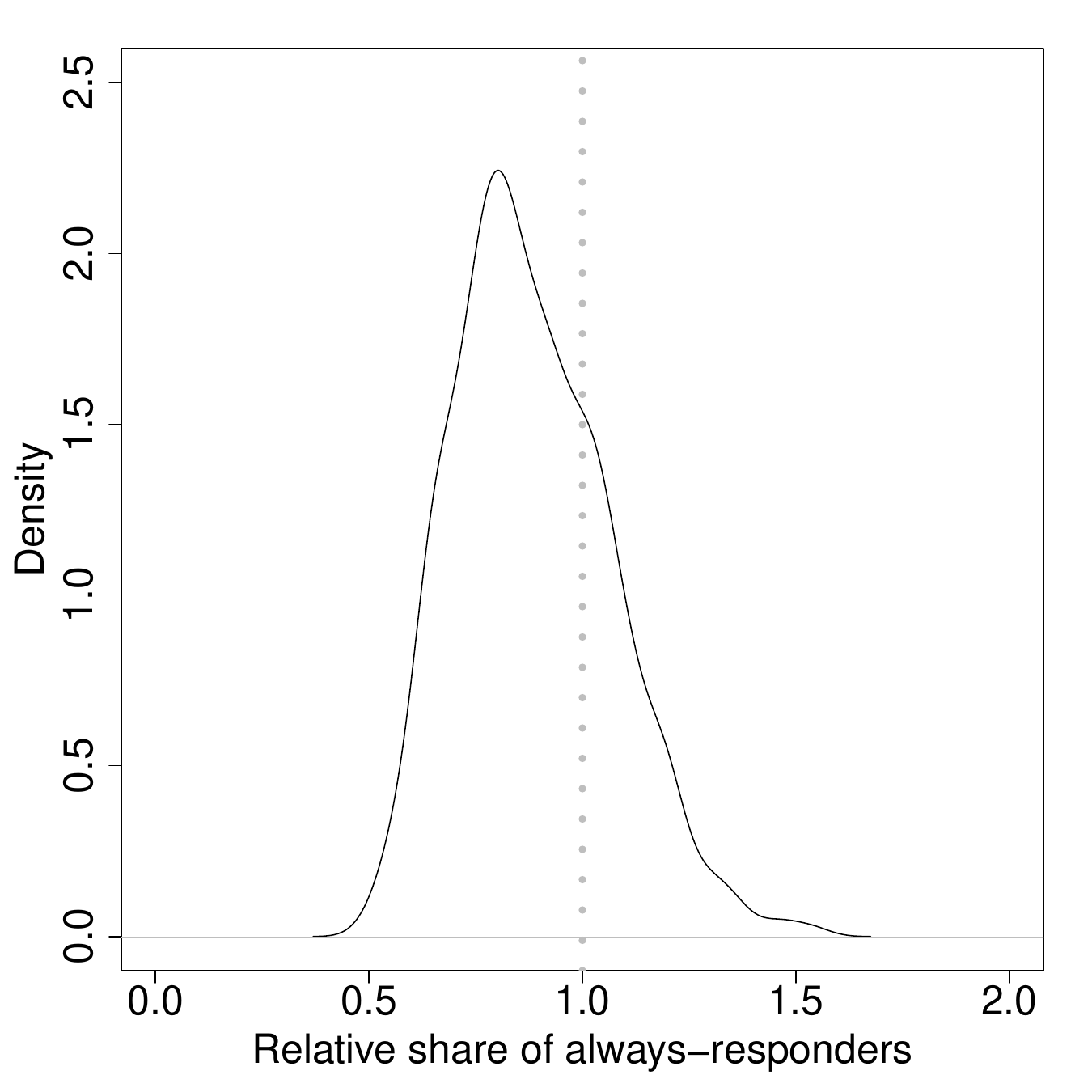}
 \end{center}
 \footnotesize\textbf{Note:} The left plot presents the distribution of the estimated conditional response rate under treatment ($q_1(\mathbf{x})$, in red) or under control ($q_0(\mathbf{x})$, in blue). The right plot presents the distribution of the estimated conditional trimming probability ($q(\mathbf{x}) = q_0(\mathbf{x}) / q_1(\mathbf{x})$).
\end{figure}

\newpage 
\begin{figure}[htp]
\caption{Conditional Bounds in Santoro and Broockman (2022)}
\label{ap_fig_app_SK_cond}
 \begin{center}
\includegraphics[width=.8\linewidth, height=.4\linewidth]{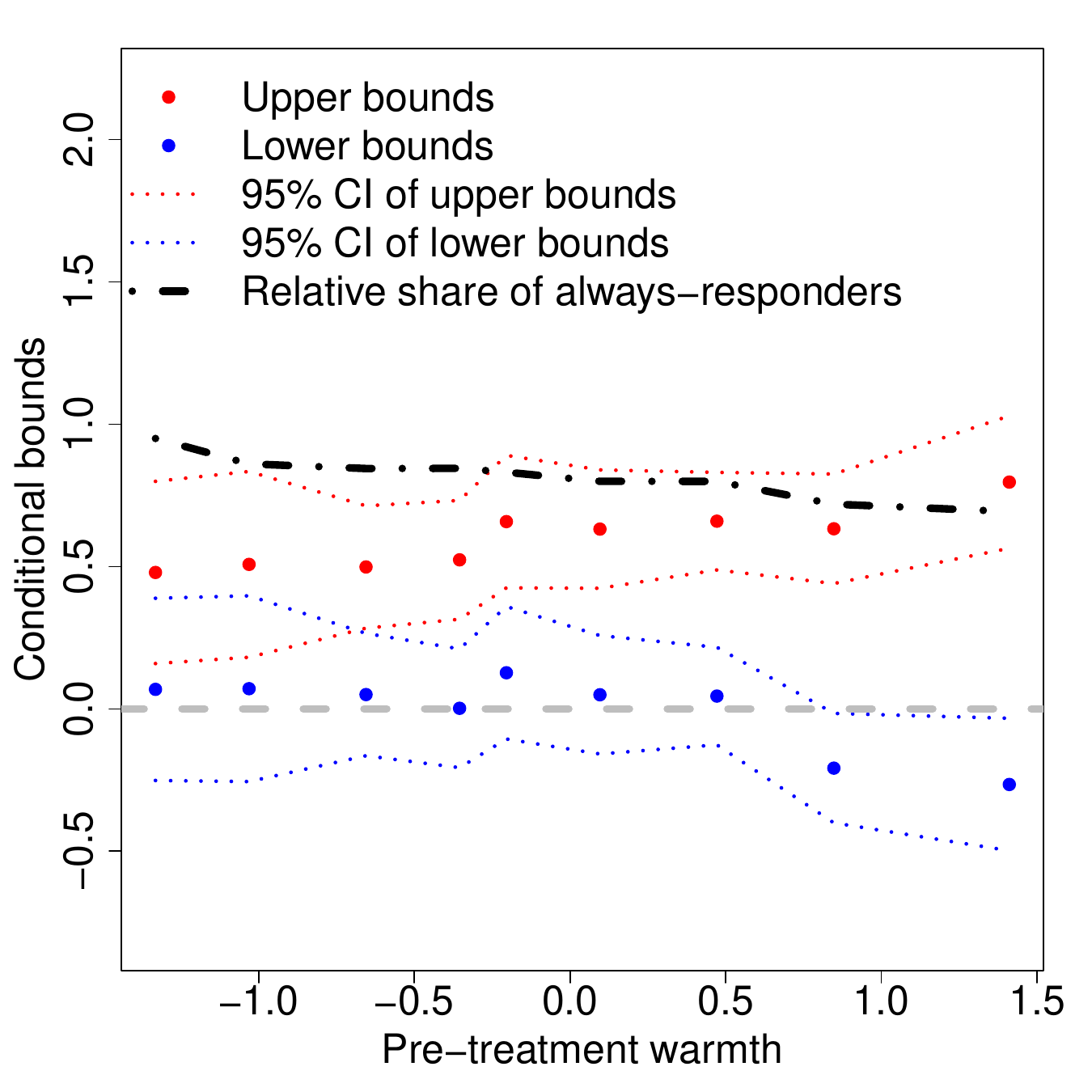}
\includegraphics[width=.8\linewidth, height=.4\linewidth]{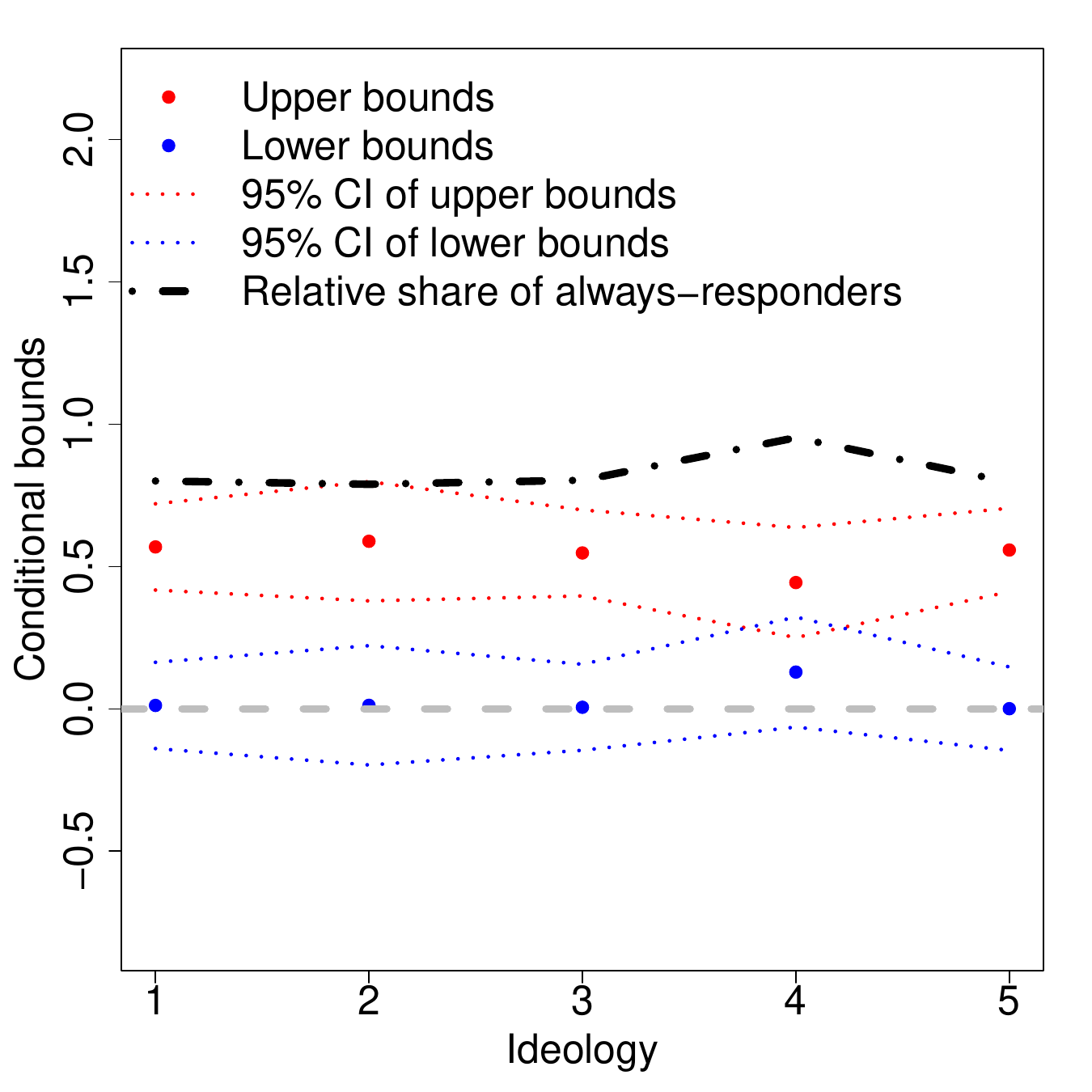}
 \end{center}
 \footnotesize\textbf{Note:} These plots show the estimates of the conditional covariates-tightened trimming bounds across 9 observations whose demographic attributes are fixed at the sample mean or mode while their pre-treatment warmth toward outpartisan voters or ideology varies across its quantiles. The red and blue dots represent upper and lower bound estimate, respectively. The dotted lines around them are the 95\% confidence intervals. The dash-dotted curve depicts the conditional trimming probability.
\end{figure}

\newpage 
\begin{figure}[htp]
\caption{Trimming Probability Estimates in Blattman and Annan (2010)}
\label{ap_fig_app_BA_cond_q}
 \begin{center}
\includegraphics[width=.5\linewidth, height=.4\linewidth]{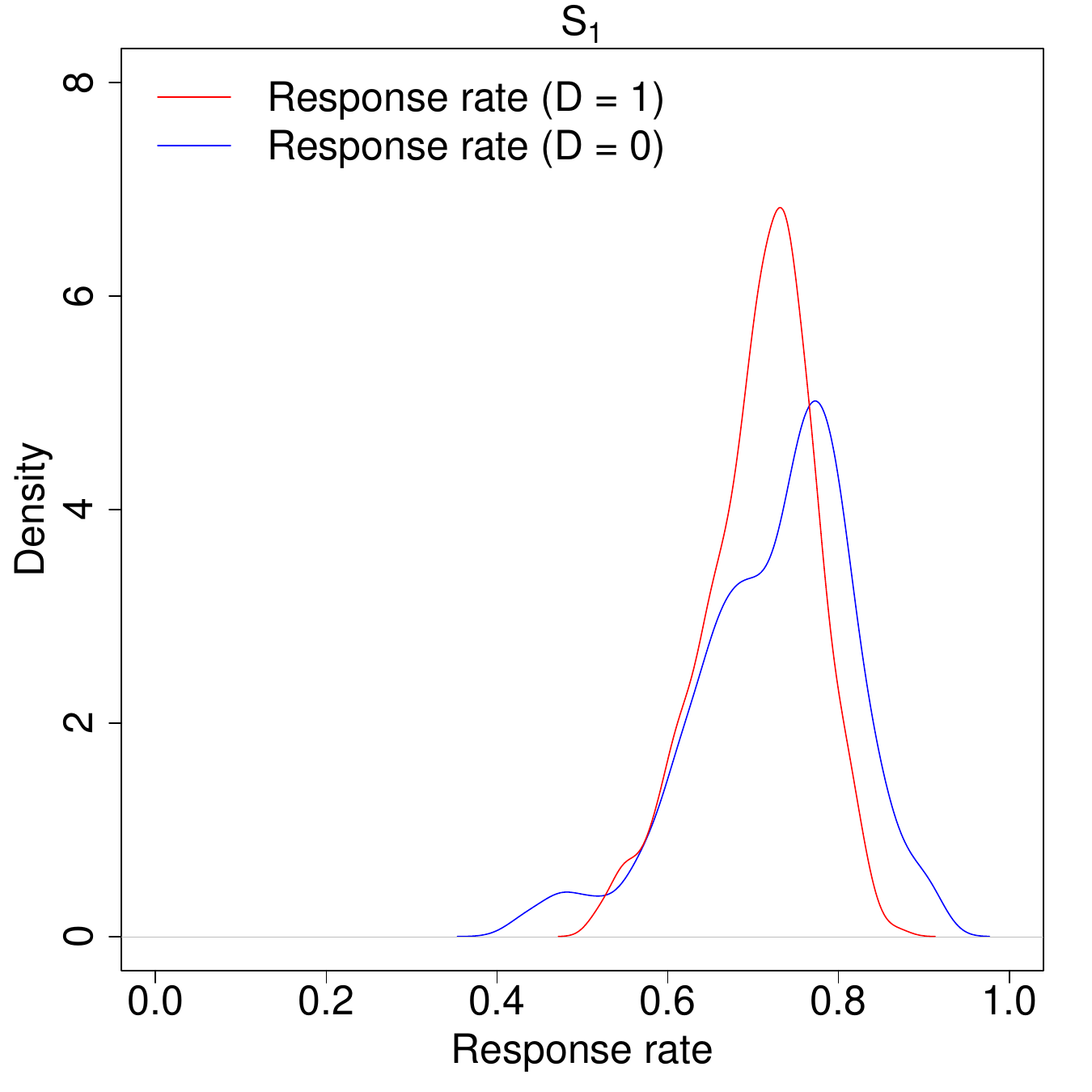}
\includegraphics[width=.5\linewidth, height=.4\linewidth]{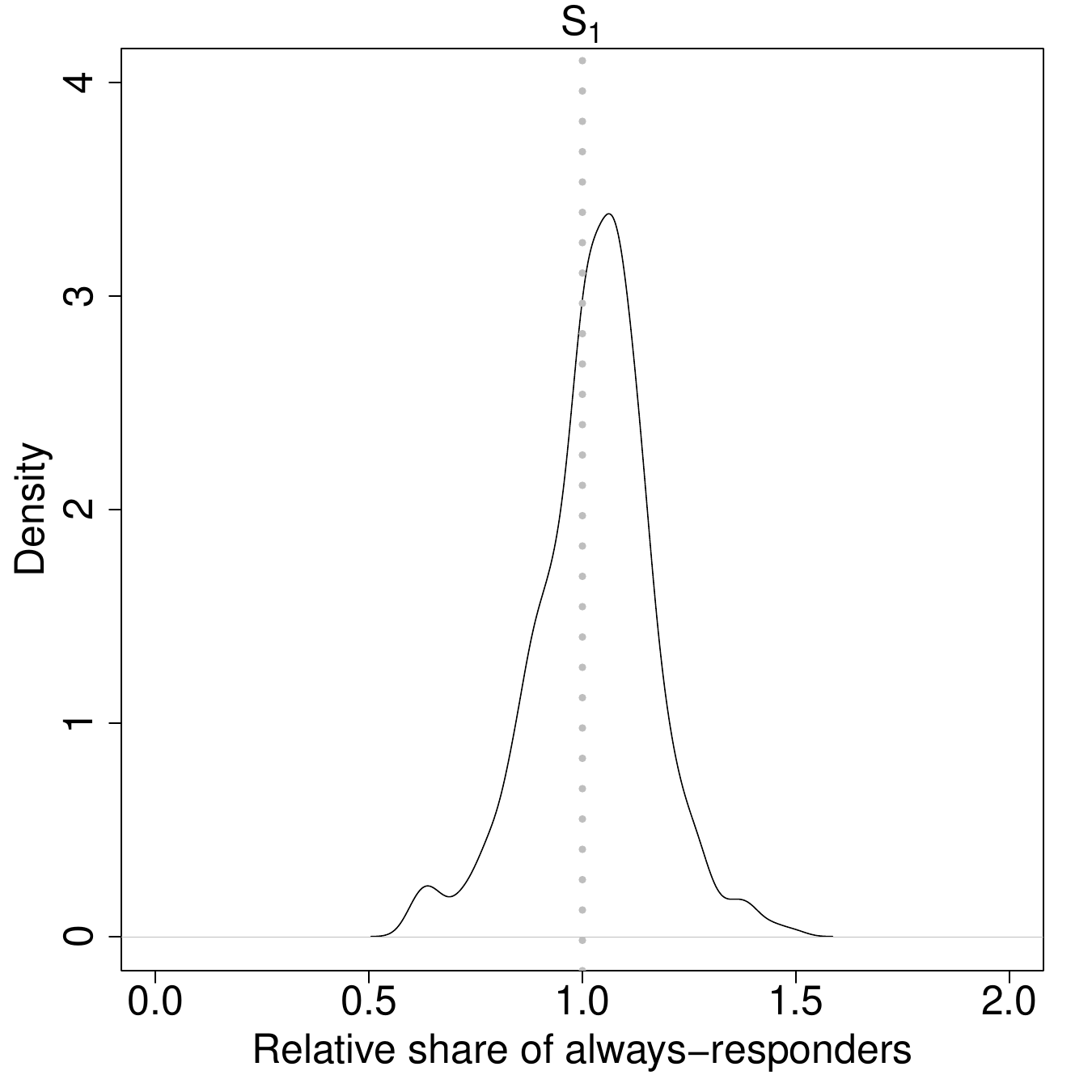}\\
\includegraphics[width=.5\linewidth, height=.4\linewidth]{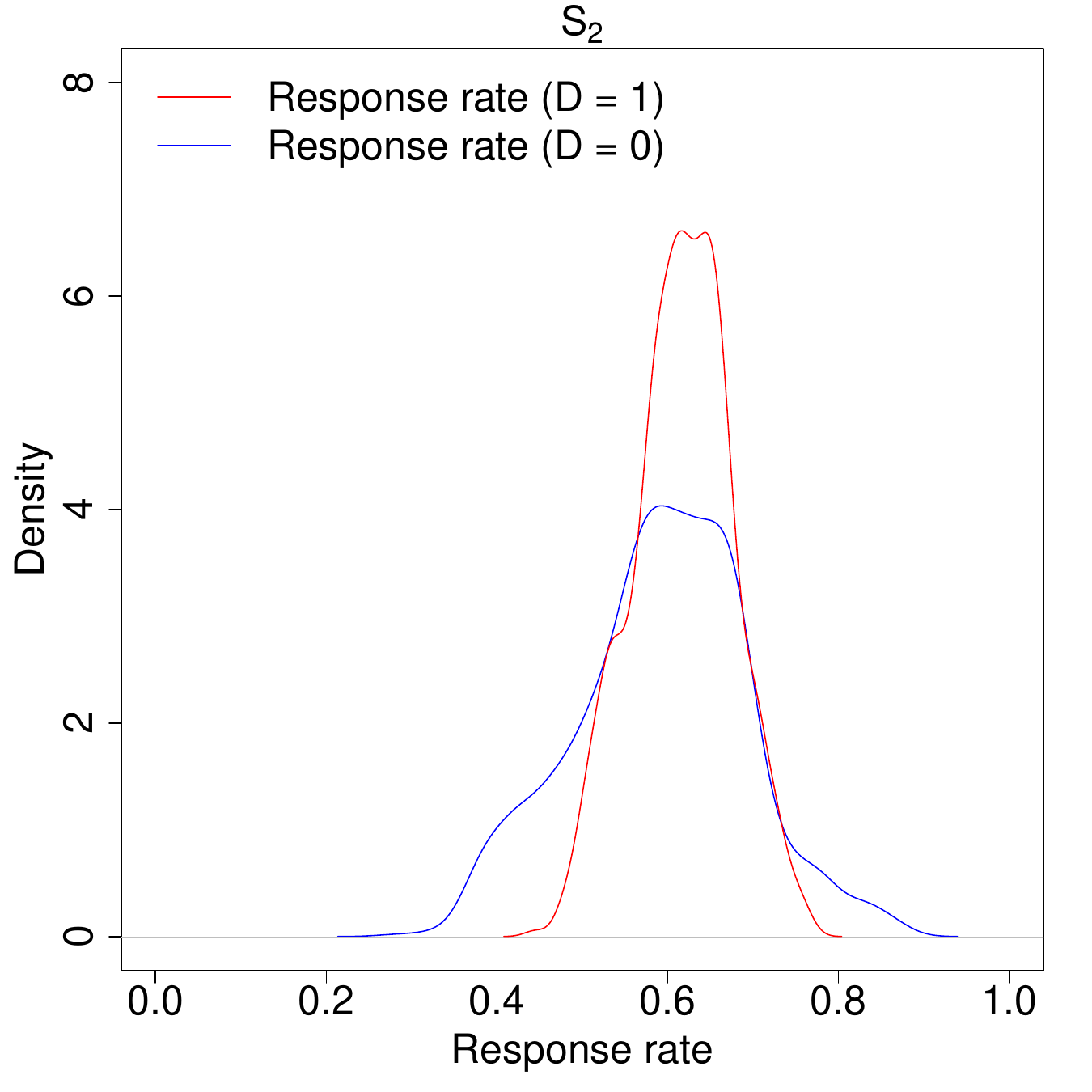}
\includegraphics[width=.5\linewidth, height=.4\linewidth]{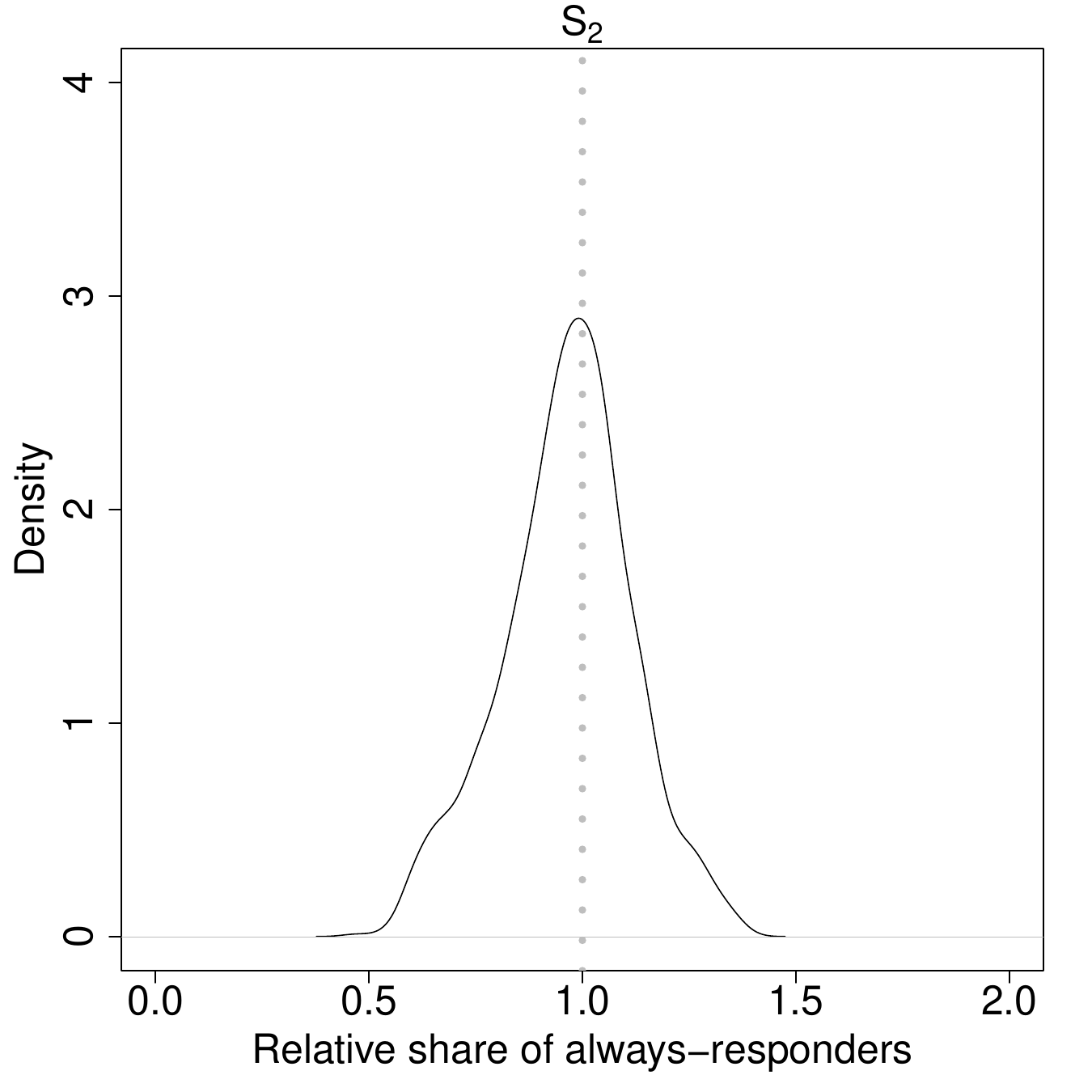}
 \end{center}
 \footnotesize\textbf{Note:} The left plots present the distribution of the estimated conditional response rate under treatment ($q_1(\mathbf{x})$, in red) or under control ($q_0(\mathbf{x})$, in blue). The right plots present the distribution of the estimated conditional trimming probability ($q(\mathbf{x}) = q_0(\mathbf{x}) / q_1(\mathbf{x})$).
\end{figure}

\newpage 
\begin{table}[!htbp]\centering
\footnotesize
   \begin{threeparttable}
   \caption{Summary of Results} 
 \label{tab_lee_res}  
\begin{tabular}{@{\extracolsep{5pt}} lccccc} 
\\[-1.8ex]\hline 
\hline \\[-1.8ex]  
\\[-1.8ex] 
& CTB & TB (mono.) & TB (cond. mono.) & OLS & IPW\\ 
\hline \\[-1.8ex]   
& \multicolumn{5}{c}{Panel A: Simulation (ATE for the always-responders = 1.650)} \\
\hline \\[-1.8ex]   
$\hat \tau^L(1, 1)$ & 1.012 & -0.142 & - & - & - \\ 
& [0.721, 1.303] & [-0.536, 0.253] & - & - & - \\  
$\hat \tau^U(1, 1)$ & 2.665 & 4.013 & - & - & - \\ 
& [2.373, 2.957] & [3.632, 4.395] & - & - & - \\  
$\hat \tau$ & - & - & - & 1.925 & 1.915 \\ 
& - & - & - & [1.555, 2.295] & [1.584, 2.246] \\  
N & 737/1,000 & 737/1,000 & 737/1,000 & 737/1,000 & 737/1,000 \\
 \hline \\[-1.8ex] 
 & \multicolumn{5}{c}{Panel B: Santoro and Broockman (2022)} \\
\hline \\[-1.8ex]  
$\hat \tau^L(1, 1)$ & 0.272 & 0.022 & -0.758 & - & - \\ 
& [0.060, 0.484] & [-0.213, 0.257] & [-0.996, -0.521]  & - & - \\  
$\hat \tau^U(1, 1)$ &  0.355 & 0.466 & 1.086 & - & - \\ 
& [0.156, 0.554] & [0.227, 0.706] & [0.913, 1.258] & - & - \\  
$\hat \tau$ & - & - & - & 0.340 & - \\ 
& - & - & - & [0.218, 0.463] & - \\ 
N & 469/986 & 469/986 & 469/986 & 469/986 & 469/986 \\
 \hline \\[-1.8ex] 
 & \multicolumn{5}{c}{Panel C: Blattman and Annan (2010), Education} \\
\hline \\[-1.8ex]  
$\hat \tau^L(1, 1)$ &  -1.366 & -1.200 & -1.678 & - & - \\ 
& [-1.883 , -0.849] & [-1.540, -0.860] & [-2.025, -1.331] & - & - \\  
$\hat \tau^U(1, 1)$ &  0.093 & 0.109 & 0.271  & - & - \\ 
& [-0.465, 0.652] & [-0.253, 0.471] & [-0.136, 0.678] & - & - \\  
$\hat \tau$ & - & - & - & -0.799 & -0.588 \\ 
& - & - & - & [-1.229, -0.369] & [-1.106, -0.070] \\ 
N & 870/1,216 & 870/1,216 & 870/1,216 & 870/1,216 & 870/1,216 \\
 \hline \\[-1.8ex] 
  & \multicolumn{5}{c}{Panel D: Blattman and Annan (2010), Distress} \\
\hline \\[-1.8ex]  
$\hat \tau^L(1, 1)$ & 0.477 & 0.424 & -0.842 & - & - \\ 
& [0.005, 0.948] & [0.109, 0.740] & [-1.163, -0.522] & - & - \\  
$\hat \tau^U(1, 1)$ & 0.627 & 0.645 & 1.937 & - & - \\ 
& [0.177, 1.077] & [0.147, 1.143] & [1.615, 2.260] & - & - \\   
$\hat \tau$ & - & - & - & 0.376 & 0.545 \\ 
& - & - & - & [-0.070, 0.821] & [0.209, 0.881] \\ 
N & 741/1,216 & 741/1,216 & 741/1,216 & 741/1,216 & 741/1,216 \\
 \hline \\[-1.8ex] 
\end{tabular}%
    \begin{tablenotes}
      \small
      \item \textit{Note:} This table summarizes the results from our simulation study (panel A), replication of \cite{santoro2022promise} (Panel B), and replication of \cite{blattman2010consequences} (Panels C and D). Numbers in the brackets are the endpoints of the 95\% confidence intervals. The first number in the last row of each panel is the sample size with observed outcome values and the second number is the total sample size.
    \end{tablenotes}
\end{threeparttable}
\end{table} 

\newpage 
\subsection{Revisiting the Job Corps experiment}\label{appx:B3}
We apply our method to estimate the intensive margin effect of job training on wages using data from the Job Corps program, the original example investigated by \citet{lee2009training}. It is one of the largest federal-funded job training programs in the U.S., in which disadvantaged youth aged 16-24 years were offered residence at a Job Corps center and about 1100 hours of vocational and academic training. In the mid-1990s, the program conducted lottery-based admission to assess its effect on the basis of randomization. The sample consists of 9,145 Job Corps applicants. Among them, 5,977 applicants in the control group were embargoed from the program for 3 years, while the remaining ones (the treatment group) could enroll in Job Corps as usual. The data includes the lottery outcome, hours worked, and wages for 208 consecutive weeks after randomization of each applicant, as well as their background information, such as educational attainment, employment record, recruiting experience, household composition, income, drug use, and arrest record.
 
To be consistent with \citet{lee2009training}, we focus the intensive margin effect generated by the program, which is defined as the effect on wages among those who would be employed regardless of the access to Job Corps.\footnote{This is distinct from the extensive margin effect, which applies to those whose employment status is affected by the program.} Such an effect is not identified by randomization alone, since whether applicants are employed may result from other unobservable factors.  \citet{lee2009training} constructed bounds for the intensive margin effect on the logarithm of wage in week 208 under the assumption of monotonic selection: treated applicants are more likely to be employed. Yet as pointed out by \citet{semenova2020better}, this assumption may not always hold in the data. We replicate Lee's analysis using all the available covariates in the dataset and focus on the wage level in both week 90 and week 208.
\begin{figure}
\caption{Covariate-tightened Trimming Bounds in JobCorps}
\label{fig_job_agg}
 \begin{center}
\includegraphics[width=.49\linewidth, height=.7\linewidth]{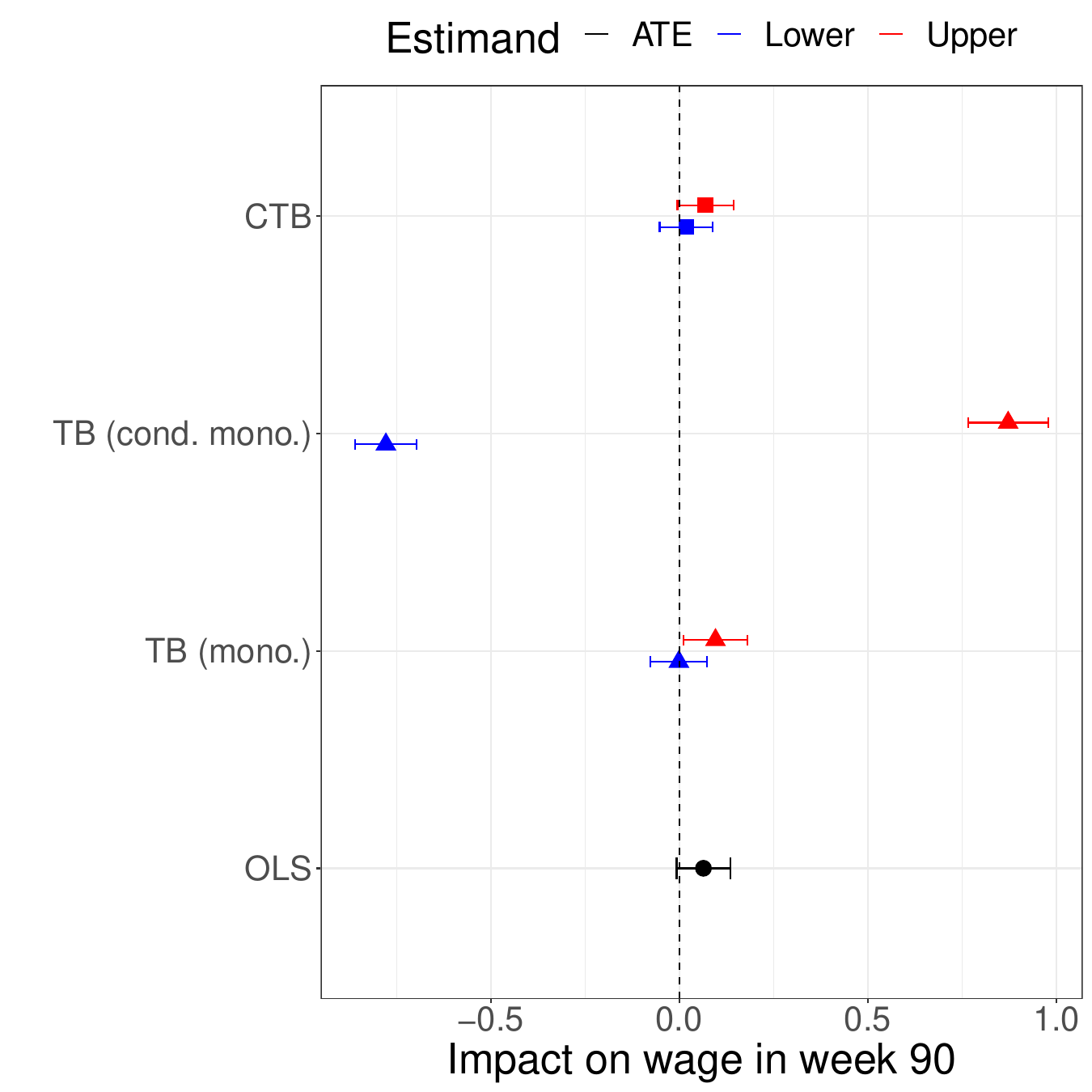}
\includegraphics[width=.49\linewidth, height=.7\linewidth]{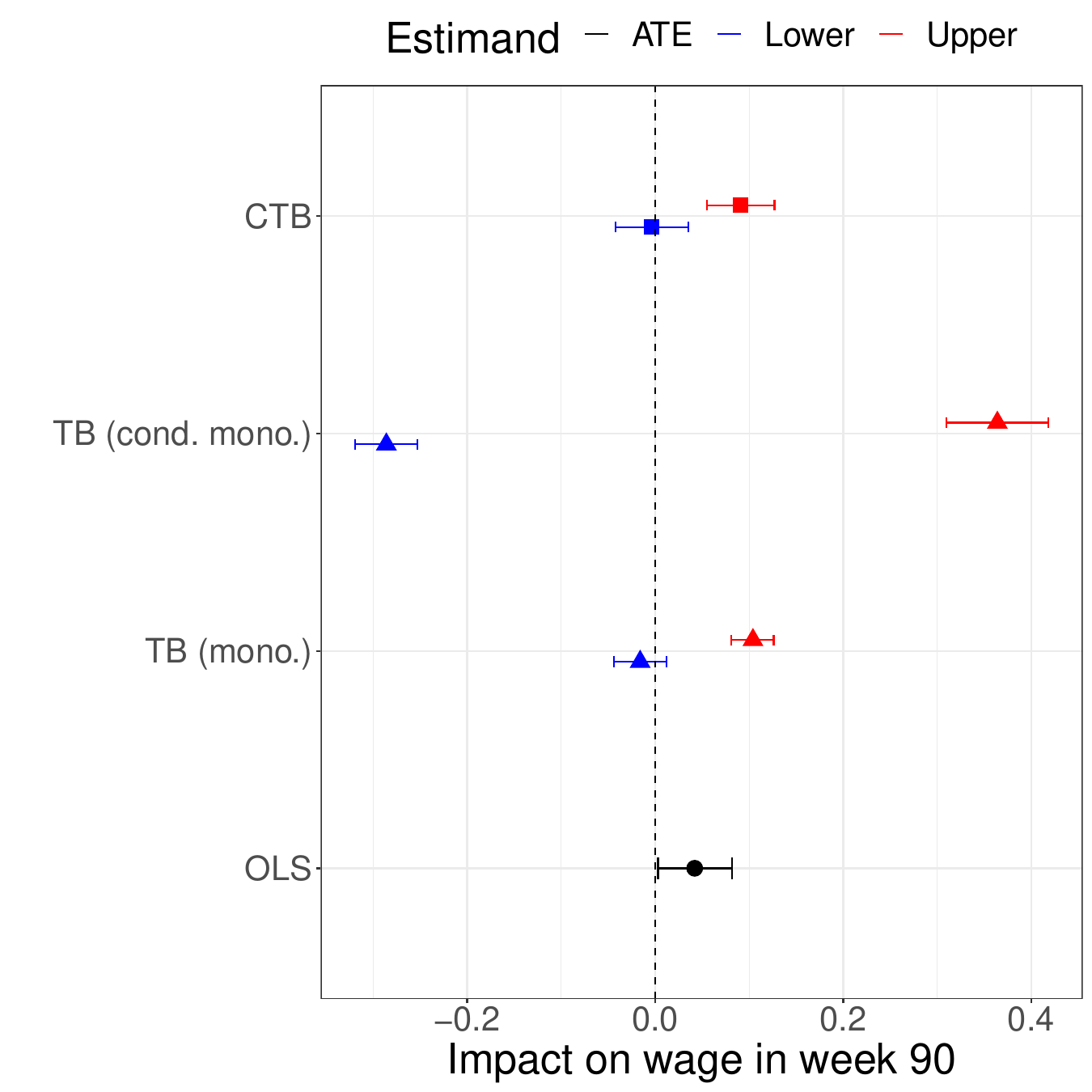}
 \end{center}
 \footnotesize\textbf{Note:} In the left plot, the outcome variable is weekly wage in week 90, while in the right one it is weekly wage in week 208. From top to bottom, the plots show the estimated covariates-tightened trimming bounds (CTB, in squares), basic trimming bounds under monotonic selection (TB (mono.), in triangles) and under conditionally monotonic selection (TB (cond. mono.), in triangles), and the ATE estimate using OLS on the non-missing sample (OLS, in circles). Lower bound estimates are in blue and upper bounds estimates are in red. The ATE estimates are in black. The segments represent the 95\% confidence intervals for the estimates.
\end{figure}

The replication results are presented in Figure \ref{fig_job_agg}. Again, we find that the assumption of monotonic selection is violated and has to be replaced by conditionally monotonic selection. It turns out that $\hat{\mathcal{X}}^{-}$ includes about one third of all the observations. Under conditionally monotonic selection, the basic trimming bounds become much wider and cover zero. However, under the same assumption, our bounds are still informative. They are even narrower than the Lee bounds under monotonic selection. Both the lower bound and the upper bound are positive for wage level in week 90, although the former is not statistically significant at the level of 5\%.\footnote{\citet{semenova2020better} reports a significantly positive estimate for the lower bound. But her analysis relies on covariates that are unavailable in the original dataset.} Overall, the replication confirms the necessity of allowing for conditionally monotonic selection and the superiority of the proposed method to the basic trimming bounds.

\newpage 
\subsection{Replication results of Kalla and Broockman (2022)}\label{appx:B4}
To demonstrate how to apply our method when the outcome variable is binary, we replicate the results in \citet{kalla2022outside}, a field experiment that aims to test the impacts of televised issue advertisements on American voters. We focus on one of the treatments in the experiment (``Prosperous Future'' and ``MariCruz/Kandy''), which consists of two immigration ads, one about how a white woman changed her view on the immigration system and the other showing that undocumented workers also pay taxes. The outcome we are interested in is whether subjects recall the content of the ads at the end of the experiment. In the baseline survey, there were 20,937 subjects in the control group and the treatment group of interest. But the number dropped to 10,650 in the endline survey. We calculate the basic trimming bounds and the aggregated covariate-tightened trimming bounds using the moment conditions presented in Section 4 of the main text. The standard error estimates of the basic trimming bounds are obtained from bootstrap.\footnote{It is straightforward to show that the nuisance parameters, $\nu(\mathbf{x}) = (q_0(\mathbf{x}), q_1(\mathbf{x}), \xi(\mathbf{x}))$, converge to a joint normal distribution, using the same arguments in \cite{lee2009training}. Hence, the target parameters are also asymptotically normal, which justifies using bootstrap for inference.}

\begin{figure}[htp]
\caption{Covariate-tightened Trimming Bounds in Kalla and Broockman (2022)}
\label{ap_fig_app_SK_cond}
 \begin{center}
\includegraphics[width=.6\linewidth, height=.44\linewidth]{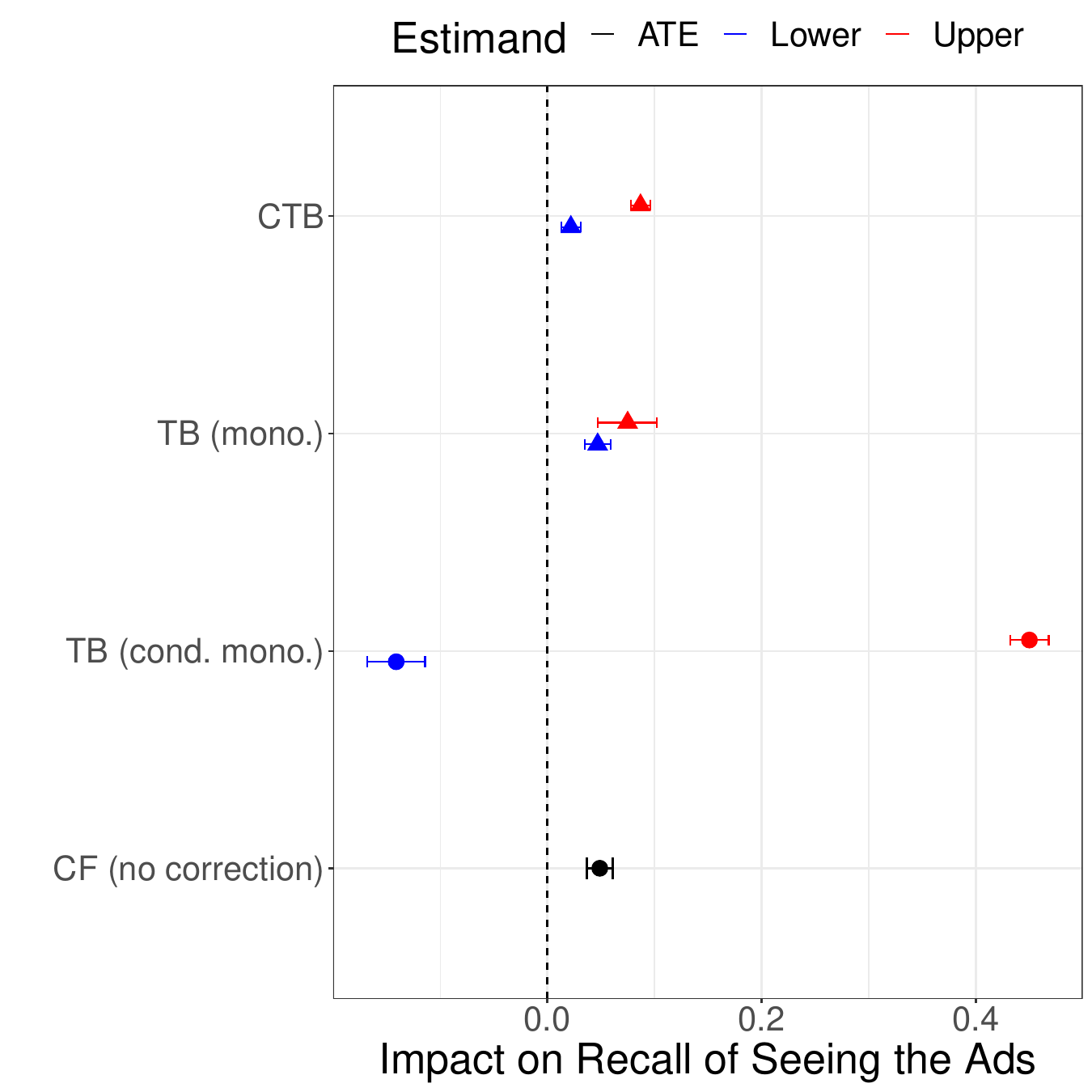}
 \end{center}
 \footnotesize\textbf{Note:} From top to bottom, we have the estimated covariates-tightened trimming bounds (CTB, in squares), basic trimming bounds under monotonicity (TB (mono.), in triangles) and under conditionally monotonic selection (TB (cond. mono.), in triangles), and the ATE estimate using OLS on the non-missing sample (OLS, in circles). Lower bound estimates are in blue and upper bounds estimates are in red. The ATE estimate is in black. The segments represent the 95\% confidence intervals for the estimates.
\end{figure}







\clearpage

\bibliography{cttb}